\newtheorem{thm}{Theorem}[section]
\newtheorem{prop}[thm]{Proposition}
\newtheorem{cor}[thm]{Corollary}
\newtheorem{lem}[thm]{Lemma}
\newtheorem{defn}[thm]{Definition}
\newtheorem{rem}[thm]{Remark}
\newtheorem{ex}[thm]{Example}
\newtheorem{conj}[thm]{Conjecture}
\numberwithin{equation}{section}
\def\bG{{\mathbb G}}
\def\bL{{\mathbb L}}
\def\bT{{\mathbb T}}
\def\A{{\mathbb A}}
\def\N{{\mathbb N}}
\renewcommand{\P}{{\mathbb P}}
\def\Q{{\mathbb Q}}
\def\Z{{\mathbb Z}}
\def\R{{\mathbb R}}
\def\one{{\bf 1}}
\def\cC{{\mathcal C}}
\def\cG{{\mathcal G}}
\def\cH{{\mathcal H}}
\def\cI{{\mathcal I}}
\def\cL{{\mathcal L}}
\def\cM{{\mathcal M}}
\def\cS{{\mathcal S}}
\def\cU{{\mathcal U}}
\def\cV{{\mathcal V}}
\newcommand{\ie}{{\it i.e.\/}\ }
\newcommand{\eg}{{\it e.g.\/}\ }
\newcommand{\cf}{{\it cf.\/}\ }
\def\text{\hbox}
\def\End{{\rm End}}
\def\GL{{\rm GL}}
\def\Spec{{\rm Spec}}
\title[Banana motives]{Feynman motives of banana graphs}
\author[Aluffi]{Paolo Aluffi}
\author[Marcolli]{Matilde Marcolli}
\address{Department of Mathematics \\ Florida State University \\
Tallahassee, FL 32306, USA}
\email{aluffi\@@math.fsu.edu}
\email{marcolli\@@math.fsu.edu}
\address{Max--Planck Institut f\"ur Mathematik  \\
Vivatsgasse 7 \\
Bonn, D 53111, Germany}
\email{aluffi\@@mpim-bonn.mpg.de}
\email{marcolli\@@mpim-bonn.mpg.de}
\begin{document}

\maketitle

\begin{abstract}
We consider the infinite family of Feynman graphs known
as the ``banana graphs'' and compute explicitly the 
classes of the corresponding graph hypersurfaces in the
Grothendieck ring of varieties as well as their 
Chern--Schwartz--MacPherson classes, using the classical 
Cremona transformation and the dual graph, and a blowup
formula for characteristic classes. We outline the
interesting similarities between these operations and we 
give formulae for cones obtained by simple operations 
on graphs. We formulate a positivity conjecture for 
characteristic classes of graph hypersurfaces and
discuss briefly the effect of passing to noncommutative
spacetime.
\end{abstract}

\section{Introduction}\label{IntroSec}

Since the extensive study of \cite{BroKr} revealed the systematic
appearance of multiple zeta values as the result of Feynman diagram
computations in perturbative quantum field theory, the question of
finding a direct relation between Feynman diagrams and periods of
motives has become a rich field of investigation. The formulation
of Feynman integrals that seems most suitable for an algebro-geometric
approach is the one involving Schwinger and Feynman parameters, as
in that form the integral acquires directly an interpretation as a
period of an algebraic variety, namely the complement of a
hypersurface in a projective space constructed out of the
combinatorial information of a graph. These graph hypersurfaces
and the corresponding periods have been investigated in the
algebro-geometric perspective in the recent work of
Bloch--Esnault--Kreimer (\cite{Blo}, \cite{BEK}) and more recently,
from the point of view of Hodge theory, in \cite{BloKr} and
\cite{Mar}.  In particular, the question of whether only motives of 
mixed Tate type would arise in the quantum field theory context is 
still unsolved. Despite the general result of \cite{BeBro}, which
shows that the graph hypersurfaces are general enough from the
motivic point of view to generate the Grothendieck ring of varieties,
the particular results of \cite{BroKr} and \cite{BEK} point to
the fact that, even though the varieties themselves are very general,
the part of the cohomology that supports the period of interest to
quantum field theory might still be of the mixed Tate form.

One complication involved in the algebro-geometric computations with
graph hypersurfaces is the fact that these are typically singular,
with a singular locus of small codimension. It becomes then an
interesting question in itself to estimate how singular the graph
hypersurfaces are, across certain families of Feynman graphs (the half
open ladder graphs, the wheels with spokes, the banana graphs etc.).
Since the main goal is to describe what happens at the motivic level,
one wants to have invariants that detect how singular the hypersurface
is and that are also somehow adapted to its decomposition in the
Grothendieck ring of motives. 
In this paper we concentrate on a particular example and illustrate
some general methods for computing such invariants based on the theory
of characteristic classes of singular varieties. 

\medskip

Part of the purpose of the present paper is to familiarize
physicists working in perturbative quantum field theory with some
techniques of algebraic geometry that are useful in the analysis of
graph hypersurfaces. Thus, we try as mush as possible to spell out
everything in detail and recall the necessary background.

\medskip

In \S \ref{IntroSec},
we begin by recalling the general form of the
parametric Feynman integrals for a scalar field
theory and the construction of the associated
projective graph hypersurface. We recall the
relation between the graph hypersurface of a
planar graph and that of the dual graph via 
the standard Cremona transformation.
We then present the specific example of the
infinite family of ``banana graphs''. We formulate 
a positivity conjecture for the characteristic classes
of graph hypersurfaces.

For the convenience of the reader, we recall in \S \ref{GrCSMdefSec}
some general facts and results, both about the Grothendieck ring
of varieties and motives, and about the theory of
characteristic classes of singular algebraic varieties. 
We outline the similarities and differences between
these constructions.

In \S \ref{MotSect} we give the explicit computation
of the classes in the Grothendieck ring of the
hypersurfaces of the banana graphs. We conclude with
a general remark on the relation between the class of
the hypersurface of a planar graph and that of a dual
graph. 

In \S \ref{CSMsect} we obtain an explicit formula for
the Chern--Schwartz--MacPherson classes of the
hypersurfaces of the banana graphs. We first prove a
general pullback formula for these classes, which is
necessary in order to compute the contribution to the
CSM class of the complement of the algebraic simplex
in the graph hypersurface. The formula is then obtained
by assembling the contribution of the intersection with
the algebraic simplex and of its complement via inclusion--exclusion,
as in the case of the classes in the Grothendieck ring.

We give then, in \S \ref{ConesSect}, a formula for the
CSM classes of cones on hypersurfaces and use them to
obtain formulae for graph hypersurfaces obtained from
known one by simple operations on the graphs, such as
doubling or splitting an edge, and attaching single-edge 
loops or trees to vertices. 

Finally, in \S \ref{NCQFTsect}, we look at the deformations of
ordinary $\phi^4$ theory to a noncommutative spacetime given
by a Moyal space. We look at the ribbon graphs that correspond
to the original banana graphs in this noncommutative quantum
field theory. We explain the relation between the graph
hypersurfaces of the noncommutative theory and of the original
commutative one. We show by an explicit computation of CSM
classes that in noncommutative QFT the positivity conjecture
fails for non-planar ribbon graphs.

\medskip

{\bf Acknowledgment.} The first author is partially supported by NSA
grant H98230-07-1-0024. The second author is partially supported by 
NSF grant DMS-0651925. We thank the Max--Planck--Institute and Florida
State University, where part of this work was done. We also thank 
Abhijnan Rej for exchanges of numerical computations of CSM classes 
of graph hypersurfaces. 

\subsection{Parametric Feynman integrals}

We briefly recall some well known facts (\cf \S 6-2-3 of \cite{ItZu}, 
\S 18 of \cite{BjDr}, and \S 6 of \cite{Naka}) about the parametric 
form of Feynman integrals. 

Given a scalar field theory with Lagrangian written in Euclidean signature as
\begin{equation}\label{Lagr}
 \cL(\phi)= \frac 12 (\partial \phi)^2 + \frac{m^2}{2} \phi^2 +
\cL_{int}(\phi),
\end{equation}
where the interaction part is a polynomial function of $\phi$, a
one-particle-irreducible (1PI) Feynman graph of the theory is
a connected graph $\Gamma$ which cannot be disconnected by removing a
single edge, and with the following properties. All vertices in
$V(\Gamma)$ have valence equal to the degree of one of the monomials
in the Lagrangian. The set of edges
$E(\Gamma)=E_{int}(\Gamma)\cup E_{ext}(\Gamma)$ consists of internal 
edges having two end vertices and external ones having only one
vertex. A Feynman graph without external edges is called a vacuum
bubble. 

In perturbative quantum field theory, the Feynman integrals associated
to the loop number expansion of the effective action for a scalar
field theory are labeled by the 1PI Feynman graphs of the theory, each
contributing a corresponding integral of the form
\begin{equation}\label{FeynInt}
U(\Gamma,p)=   \frac{\Gamma(n-D\ell/2)}{(4\pi)^{\ell D/2}}  
\int_{[0,1]^n} \frac{\delta(1-\sum_i t_i)}
{\Psi_\Gamma(t)^{D/2}V_\Gamma(t,p)^{n- D\ell/2}} \, dt_1 \cdots dt_n. 
\end{equation}
Here $n=\# E_{int}(\Gamma)$ is the number of internal edges of the
graph $\Gamma$, $D\in \N$ is the spacetime dimension in which the
scalar field theory is considered, and $\ell =b_1(\Gamma)$ is the
number of loops in the graph, \ie the rank of $H_1(\Gamma,\Z)$. 
The function $\Psi_\Gamma$ is a polynomial of degree
$\ell=b_1(\Gamma)$. It is given by the Kirchhoff polynomial
\begin{equation}\label{PsiGamma}
\Psi_\Gamma(t)= \sum_{T\subset \Gamma} \prod_{e \notin E(T)} t_e,
\end{equation}
where the sum is over all the spanning trees $T$ of $\Gamma$. The
function $V_\Gamma(t,p)$ is a rational function of the form
\begin{equation}\label{ratioVGamma}
V_\Gamma(t,p) = \frac{P_\Gamma(t,p)}{\Psi_\Gamma(t)},
\end{equation}
where $P_\Gamma$ is a homogeneous polynomial of degree
$\ell+1=b_1(\Gamma)+1$ of the form 
\begin{equation}\label{PGammapt}
P_\Gamma(p,t) = \sum_{C\subset \Gamma} s_C \prod_{e\in C} t_e, 
\end{equation}
Here the sum is over the cut-sets $C\subset \Gamma$, \ie the
collections of $b_1(\Gamma)+1$ edges that divide the graph $\Gamma$ in
exactly two connected components $\Gamma_1\cup \Gamma_2$. 
The coefficient $s_C$ is a function of the external momenta attached 
to the vertices in either one of the two components
\begin{equation}\label{sCcoeff}
s_C = \left(\sum_{v\in V(\Gamma_1)} P_v\right)^2 = \left(\sum_{v\in
V(\Gamma_2)} P_v\right)^2, 
\end{equation}
where the $P_v$ are defined as
\begin{equation}\label{Pv}
P_v=\sum_{e\in E_{ext}(\Gamma), t(e)=v} p_e,
\end{equation}
where the $p_e$ are incoming external momenta attached to the external
edges of $\Gamma$ and satisfying the conservation law
\begin{equation}\label{extmom0}
 \sum_{e\in E_{ext}(\Gamma)} p_e =0. 
\end{equation}

The divergence properties of the integral \eqref{FeynInt} can be
estimated in terms of the ``superficial degree of divergence'', 
which is measured by the quantity $n-D\ell/2$.
The integral \eqref{FeynInt} is called logarithmically divergent when
$n-D\ell/2 =0$. The example of the banana graphs we concentrate on
below has $n=\ell+1$, so that we find $n-D\ell/2=(1-D/2)\ell+1 <0$
for $D> 2$ and $\ell\geq 2$. In this case, we write the integral
\eqref{FeynInt} in the form
\begin{equation}\label{FeynInt2}
U(\Gamma,p)=   \frac{\Gamma(n-D(n-1)/2)}{(4\pi)^{(n-1) D/2}}  
\int_{\sigma_n} \frac{P_\Gamma(p,t)^{-n+D(n-1)/2} \,\omega_n}
{\Psi_\Gamma(t)^{n(-1 +D/2)}},
\end{equation}
where $\omega_n$ is the volume form and 
the domain of integration is the topological simplex
\begin{equation}\label{topSimpl}
\sigma_n=\{ (t_1,\ldots,t_n)\in \R_+^n \,|\, \sum_i t_i =1 \}. 
\end{equation}

The 1PI condition on Feynman graphs comes from the fact of considering
the perturbative expansion of the effective action in quantum field
theory, which reduces the combinatorics of graphs to just those that
are connected and 1PI. In terms of the expression of the Feynman
integral, the 1PI condition is reflected in the fact that only the
propagators for internal edges appear. The parametric form we
described above therefore depends on this assumption. However, for the
algebro-geometric arguments that constitute the main content of this paper, 
the 1PI condition is not strictly necessary.

\subsection{Feynman graphs, varieties, and periods}

The graph polynomial $\Psi_\Gamma(t)$ of \eqref{PsiGamma} also admits
a description as determinant 
\begin{equation}\label{PsiDet}
\Psi_\Gamma(t)=\det M_\Gamma(t)
\end{equation}
of an $\ell\times \ell$-matrix
$M_\Gamma(t)$ associated to the graph (\cite{Naka}, \S 3 and
\cite{BjDr}, \S 18), of the form 
\begin{equation}\label{MGamma}
(M_\Gamma)_{kr}(t)=\sum_{i=1}^n t_i \eta_{ik} \eta_{ir},
\end{equation} 
where the $n \times \ell$-matrix $\eta_{ik}$ is defined in terms of
the edges $e_i \in E(\Gamma)$ and   
a choice of a basis for the first homology group, $l_k \in
H_1(\Gamma,\Z)$,  with $k=1,\ldots, \ell=b_1(\Gamma)$, by setting 
\begin{equation}\label{etaik}
\eta_{ik}=\left\{ \begin{array}{rl} +1 & \text{edge $e_i\in$ loop
$l_k$, same orientation} \\[2mm] -1 & \text{edge $e_i\in$ loop
$l_k$, reverse orientation} \\[2mm] 0 & \text{otherwise,} \end{array}\right.
\end{equation}
after choosing an orientation of the edges.

Notice how the result is independent of the choice of the orientation
of the edges and of the choice of the basis of $H_1(\Gamma,\Z)$. In
fact, a change of orientation in a given edge results in a change of
sign to one of the columns of the matrix $\eta_{ki}$, which is
compensated by the change of sign in the corresponding row of the
matrix $\eta_{ir}$, so that the determinant $\det M_\Gamma(t)$ is
unaffected. Similarly, a change in the choice of the basis of
$H_1(\Gamma,\Z)$ has the effect of changing $M_\Gamma(t) \mapsto A
M_\Gamma(t) A^{-1}$ for some $A\in \GL(\ell,\Z)$ and the determinant
is again unchanged.

\smallskip

The graph hypersurface $X_\Gamma$ is by definition the zero locus of
the Kirchhoff polynomial, 
\begin{equation}\label{XGammaProj}
X_\Gamma =\{ t=(t_1:\ldots :t_n)\in \P^{n-1} \,| \, \Psi_\Gamma(t)=0 \}.
\end{equation}
Since $\Psi_\Gamma$ is homogeneous, it defines a hypersurface in
projective space.

The domain of integration $\sigma_n$ defines a cycle in the relative
homology $H_{n-1}(\P^{n-1},\Sigma_n)$, where $\Sigma_n$ is the algebraic
simplex (the union of the coordinate hyperplanes, see \eqref{algSimpl}
below). The Feynman integral \eqref{FeynInt}, \eqref{FeynInt2} then can be
viewed (\cite{BEK},\cite{Blo}) as the evaluation of an algebraic 
cohomology class in $H^{n-1}(\P^{n-1}\smallsetminus X_\Gamma,
\Sigma \smallsetminus \Sigma\cap X_\Gamma)$ on the cycle
defined by $\sigma_n$. In this sense, it can be viewed as the
evaluation of a period of the algebraic variety given by the
complement of the graph hypersurface. To understand the nature of this
period, one is faced with two main problems. One is 
eliminating divergences (regularization and renormalization of Feynman
integrals), and the other is understanding what kind of motives are
involved in the part of the hypersurface complement
$\P^{n-1}\smallsetminus X_\Gamma$ that is involved in the evaluation
of the period, hence what kind of transcendental numbers one expects
to find in the evaluation of the corresponding Feynman integrals. A
detailed analysis of these problems was carried out in \cite{BEK}. The
examples we concentrate on in this paper are not especially
interesting from the motivic point of view, since they are expressible
in terms of pure Tate motives (\cf \cite{Blo}), but they provide us
with an infinite family of graphs for which all computations are
completely explicit. 

\subsection{Dual graphs and Cremona transformation}\label{CremonaSect}

In the case of {\em planar} graphs, there is an interesting relation
between the hypersurface of the graph and the one of the dual graph.
This will be especially useful in the explicit calculation we perform
below in the special case of the banana graphs. We recall it here in
the general case of arbitrary planar graphs.

The standard Cremona transformation of $\P^{n-1}$ is the map
\begin{equation}\label{Cremona}
\cC: (t_1:\cdots:t_n) \mapsto \left( \frac{1}{t_1}:\cdots:
\frac{1}{t_n}\right). 
\end{equation}
This is a priori defined away from the algebraic simplex of coordinate axes
\begin{equation}\label{algSimpl}
\Sigma_n = \{ (t_1:\cdots:t_n)\in \P^{n-1}\,|\, \prod_i t_i =0 \}\subset
\P^{n-1},
\end{equation}
though we see in Lemma \ref{SandL} below that it is well defined also
on the general point of $\Sigma_n$, its locus of indeterminacies being
only the singularity subscheme of $\Sigma_n$.

Let $\cG(\cC)$ denote the closure of the graph of $\cC$. 
Then $\cG(\cC)$ is a subvariety of $\P^{n-1}\times \P^{n-1}$ with projections
\begin{equation}\label{GraphCrem}
 \xymatrix@C=12pt{
& \cG(\cC) \ar[dl]_{\pi_1} \ar[dr]^{\pi_2} \\
\P^{n-1} \ar@{-->}[rr]^\cC & & \P^{n-1}
} \end{equation}

\begin{lem}\label{CgraphEq}
Using coordinates $(s_1:\cdots:s_n)$ for the target $\P^{n-1}$, the graph
$\cG(\cC)$ has equations
\begin{equation}\label{eqgraphCrem}
t_1 s_1=t_2 s_2=\cdots= t_n s_n  .
\end{equation}
In particular, this describes $\cG(\cC)$ as a complete intersection of
$n-1$ hypersurfaces in $\P^{n-1}\times \P^{n-1}$ with equations 
$t_i s_i=t_n s_n$, for $i=1,\dots,n-1$.
\end{lem}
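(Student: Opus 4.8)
The plan is to characterize the closure of the graph of the Cremona map by describing the equations that vanish on it, then verify that these equations cut out an irreducible variety of the correct dimension, so that the zero locus coincides with $\cG(\cC)$ rather than a larger scheme.

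First I would work on the open locus where $\cC$ is a genuine morphism, namely where all $t_i \neq 0$. There the graph consists of points $((t_1:\cdots:t_n),(s_1:\cdots:s_n))$ with $s_i = 1/t_i$ up to scale, and clearing denominators immediately gives the relations $t_i s_i = t_j s_j$ for all $i,j$, which is exactly \eqref{eqgraphCrem}. So the graph over the open locus is contained in the variety $Z$ defined by \eqref{eqgraphCrem}, and since $\cG(\cC)$ is the closure of that graph, we get $\cG(\cC) \subseteq Z$. The real content is the reverse inclusion.

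For this, I would argue by dimension and irreducibility. The equations $t_i s_i = t_n s_n$ for $i = 1,\dots,n-1$ are $n-1$ bihomogeneous equations of bidegree $(1,1)$ on $\P^{n-1}\times \P^{n-1}$, which has dimension $2(n-1)$; hence every component of $Z$ has dimension at least $2(n-1) - (n-1) = n-1 = \dim \cG(\cC)$. Since $\cG(\cC)$ is irreducible of dimension $n-1$ and is contained in $Z$, it is a component of $Z$; to conclude $\cG(\cC) = Z$ it suffices to check that $Z$ is itself irreducible of dimension $n-1$, equivalently that $Z$ has no extra components supported on the ``bad'' locus where some $t_i = 0$. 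The cleanest route is to show directly that $Z$ is irreducible: over the open chart $t_n \neq 0$ one can solve $s_i = (t_n s_n)/t_i$ whenever $t_i \neq 0$, exhibiting a rational parametrization, and then patch across charts. Alternatively, one verifies the complete intersection claim by checking that the $n-1$ forms $t_i s_i - t_n s_n$ form a regular sequence, which forces $Z$ to be purely of dimension $n-1$ and hence equal to $\cG(\cC)$ once the containment and irreducibility of one component are in hand.

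The main obstacle I expect is controlling the behavior of $Z$ along the coordinate subspaces where several $t_i$ vanish, since there the defining equations degenerate and one must make sure no spurious higher-dimensional component appears; this is precisely the subtlety that Lemma \ref{SandL} (the indeterminacy locus being only the singularity subscheme of $\Sigma_n$) is designed to handle, and I would lean on it to rule out extra components. Once irreducibility and the dimension count are settled, the complete-intersection assertion follows formally from the fact that $n-1$ equations cutting out a variety of codimension $n-1$ in a smooth ambient space necessarily form a complete intersection.
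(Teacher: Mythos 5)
Your overall skeleton matches the paper's: show the equations cut out the graph over the open set $\cU$ where all $t_i\neq 0$, use the Krull height bound to conclude every component of the zero locus $Z$ of \eqref{eqgraphCrem} has codimension $\le n-1$, and then rule out components of $Z$ contained in the complement of $\cU$. The genuine gap is that you never carry out this last step, and neither of the two routes you propose for it works. The ``rational parametrization plus patching across charts'' only describes $Z\cap \cU$; it re-derives the containment $\cG(\cC)\subseteq Z$ and the irreducibility of $Z\cap\cU$, but says nothing about points of $Z$ with some $t_i=0$, which is exactly where a spurious component would live. And Lemma \ref{SandL} cannot be ``leaned on'' here: it is a statement about the rational map $\cC$ and about $\cG(\cC)$ itself (its indeterminacy locus, the blow-up structure of $\pi_1$), not about the scheme $Z$ cut out by \eqref{eqgraphCrem}; knowing that the indeterminacies of $\cC$ are precisely $\cS_n$ places no constraint on whether $Z$ has extra components along $\{t_i=0\}$. (Lemma \ref{SandL} also appears after the present lemma in the paper; as it happens its proof does not invoke Lemma \ref{CgraphEq}, so this is an expository rather than a logical circularity, but it still gives you nothing for the step at hand.)

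What is missing is a short explicit computation, which is exactly how the paper closes the argument: intersect $Z$ with $\{t_n=0\}$ (by symmetry this handles any $t_i$). On that locus the chain of equalities \eqref{eqgraphCrem} forces $t_1s_1=\cdots=t_{n-1}s_{n-1}=t_ns_n=0$, so $Z\cap\{t_n=0\}$ is contained in the union, over all choices of $t_i=0$ or $s_i=0$ for each $i\le n-1$, of linear subspaces of $\P^{n-1}\times\P^{n-1}$ defined by $n$ independent coordinate conditions; every nonempty such piece has codimension $n>n-1$. Since every component of $Z$ has codimension $\le n-1$, no component can lie inside $\{t_n=0\}$; hence every component meets $\cU$, is therefore contained in $\overline{Z\cap\cU}=\cG(\cC)$, and $Z=\cG(\cC)$. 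With that step inserted, your dimension framework is complete and the complete-intersection assertion follows as you say; note also that checking the forms $t_is_i-t_ns_n$ form a regular sequence is not an independent alternative, since it is equivalent to the same codimension statement you need to verify on the bad locus.
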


\proof The equations \eqref{eqgraphCrem} clearly cut out
$\cG(\cC)$ over the open set $\cU\subset \P^n$ where all $t$-coordinates 
are nonzero. Since every component of a scheme defined by $n-1$ 
equations has codimension $\le n-1$, it suffices to show that equations 
\eqref{eqgraphCrem} define a set of codimension~$>n-1$ over the
complement of $\cU$. Now assume that at least one of the
$t$-coordinates equal~$0$. Without
loss of generality, suppose $t_n=0$. Intersecting with the locus
defined by \eqref{eqgraphCrem} determines the set with equations
$$t_1s_1=\dots=t_{n-1} s_{n-1}= t_n=0\quad,$$
which has codimension $n>n-1$, as promised.
\endproof

It is not hard to see that the variety~$\cG(\cC)$ has singularities in 
codimension~$3$. It is nonsingular for $n=2,3$, but singular for $n\ge
4$.

\smallskip

The open set $\cU$ as above is the complement of the divisor $\Sigma_n$ of \eqref{algSimpl}.
The inverse image of $\Sigma_n$ in $\cG(\cC)$ can be described easily.
It consists of the points $$((t_1:\cdots:t_n),(s_1:\cdots:s_n))$$ such that
\[
\{ i \,| \, t_i=0\} \cup \{ j\,| \, s_j=0\} = \{1,\dots,n\}\quad.
\]
This locus consists of $2^N-2$ components of dimension~$n-2$: one 
component for each nonempty proper subset $I$ of $\{1,\dots,n\}$.
The component corresponding to $I$ is the set of points with $t_i=0$
for $i\in I$ and $s_j=0$ for $j\not\in I$.

The situation for $n=3$ is well represented by the famous 
picture of Figure \ref{Cremona3Fig}. 
The three zero-dimensional strata of $\Sigma_3$ are blown up
in $\cG(\cC)$ as we climb the diagram from the lower left to the top. 
The proper transforms of the one dimensional strata are blown down
as we descend to the lower right.
The horizontal rational map is an isomorphism between the complements
of the triangles.
The inverse image of $\Sigma_3$ consists of $2^3-2=6$ components,
as expected.

Of course the situation is completely symmetric: the algebraic simplex
\eqref{algSimpl} may be embedded in the target $\P^n$ as well (with equation
$\prod_i s_i=0$). One has $\pi_1^{-1}(\Sigma_n)=\pi_2^{-1}(\Sigma_n)$.

\smallskip

Let $\cS_n\subset \P^{n-1}$ be the subscheme defined by the ideal
\begin{equation}\label{idealS}
\cI_{\cS_n}=(t_1\cdots t_{n-1}, t_1\cdots t_{n-2} t_n, \dots,
t_1 t_3 \cdots t_n, t_2 \cdots t_n).
\end{equation}
The scheme $\cS_n$ is the singularity subscheme of the divisor with simple normal
crossings $\Sigma_n$ of \eqref{algSimpl}, given by the union of the
coordinate hyperplanes. We can place $\cS_n$ in both the source and
target $\P^{n-1}$. 
Finally, let $\cL$ be the hyperplane defined by the equation
\begin{equation}\label{Lhyp}
\cL=\{ (t_1:\cdots:t_n)\in \P^{n-1}\,|\, t_1+\cdots+t_n =0 \}.
\end{equation}

\begin{center}
\begin{figure}
\includegraphics[scale=.4]{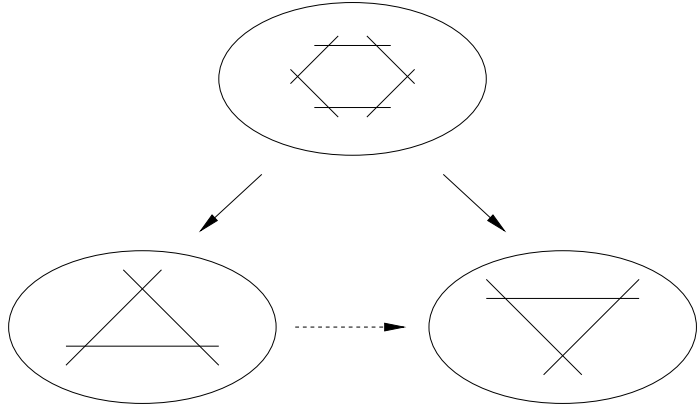}
\caption{The Cremona transformation in the case
$n=3$. \label{Cremona3Fig}}
\end{figure}
\end{center}

We then can make the following observations.

\begin{lem}\label{SandL}
Let $\cC$, $\cG(\cC)$, $\cS_n$, and $\cL$ be as above. 
\begin{enumerate}
\item $\cS_n$ is the subscheme of indeterminacies of the Cremona
transformation $\cC$.
\item $\pi_1:\cG(\cC)\to \P^{n-1}$ is the blow-up along $\cS_n$.
\item $\cL$ intersects every component of $\cS_n$ transversely.
\item $\Sigma_n$ cuts out a divisor with simple normal crossings on $\cL$.
\end{enumerate}
\end{lem}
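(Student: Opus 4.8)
The plan is to prove the four claims in sequence, as they are largely independent and each is a concrete computation with the explicit ideal $\cI_{\cS_n}$ of \eqref{idealS} and the hyperplane $\cL$ of \eqref{Lhyp}.

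For claim (1), I would identify the indeterminacy locus of $\cC$ directly. The Cremona map is given by the degree-$(n-1)$ monomials $\hat t_i = \prod_{j\neq i} t_j$, so its base scheme is cut out by the ideal generated by these $n$ products---which is exactly $\cI_{\cS_n}$. The point is that clearing the common denominator $t_1\cdots t_n$ rewrites $(1/t_1:\cdots:1/t_n)$ as $(\hat t_1:\cdots:\hat t_n)$, and the indeterminacy locus of a map defined by homogeneous polynomials is precisely their common zero scheme; so $\cS_n$ is the indeterminacy subscheme essentially by definition. I would then check that the stated generators of $\cI_{\cS_n}$ are indeed these $\hat t_i$, which is immediate from \eqref{idealS}.

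For claim (2), the strategy is to recall that for a rational map defined by a linear system, the graph closure is the blow-up along the base scheme. More precisely, $\pi_1 \colon \cG(\cC)\to \P^{n-1}$ pulls back the hyperplane class from the target and the base scheme $\cS_n$ is the scheme where the linear system $(\hat t_1,\dots,\hat t_n)$ fails to be base-point free; by the universal property of blow-ups (the pullback ideal becomes principal/invertible), $\cG(\cC)$ receives a map to the blow-up $\mathrm{Bl}_{\cS_n}\P^{n-1}$, and one checks this is an isomorphism by comparing over the open set $\cU$ where $\cC$ is already a morphism and using that both are irreducible of the same dimension, dominating $\P^{n-1}$ birationally. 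Lemma~\ref{CgraphEq}, which exhibits $\cG(\cC)$ as the complete intersection cut out by $t_is_i=t_ns_n$, gives the explicit equations one needs to match with the Proj of the Rees algebra of $\cI_{\cS_n}$.

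Claims (3) and (4) are transversality statements about the single hyperplane $\cL=\{\sum t_i=0\}$ against the stratified set $\Sigma_n$ and its singular scheme $\cS_n$. Here I would argue stratum by stratum. Each component of $\cS_n$ is a coordinate linear subspace (the vanishing of two of the coordinates, which is where at least two factors $t_it_j$ in the generators can vanish), and $\cL$ meets such a subspace transversely precisely because the linear form $\sum t_i$ is not a combination of the coordinate forms cutting out that stratum---one checks the differentials are independent, which reduces to the nonvanishing of the relevant $1$'s in the gradient of $\sum t_i$. For claim (4), I would verify that restricting the coordinate hyperplanes $\{t_i=0\}$ to $\cL$ again yields a simple normal crossings divisor, by checking that at any point of $\cL$ lying on $k$ of the coordinate hyperplanes, the $k$ forms $t_i$ together with $\sum t_j$ are linearly independent on the tangent space of $\cL$; this is a rank computation on the relevant coordinate matrix.

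\medskip

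The main obstacle I expect is claim (2): asserting that $\pi_1$ \emph{is} the blow-up (and not merely dominated by it or a partial resolution) requires care, since $\cG(\cC)$ is singular for $n\geq 4$ (as noted just after Lemma~\ref{CgraphEq}), so the blow-up of a smooth variety along $\cS_n$ is itself singular, and one must confirm the comparison map is an isomorphism rather than a further modification. The clean way around this is to invoke the universal property of blowing up together with the explicit complete-intersection equations from Lemma~\ref{CgraphEq}, showing that the ideal sheaf $\pi_1^{-1}\cI_{\cS_n}\cdot\cO_{\cG(\cC)}$ is invertible and that $\cG(\cC)$ is therefore the terminal object receiving such a map, i.e. the blow-up. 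The transversality claims (3)--(4), by contrast, are routine linear algebra once the components of $\cS_n$ are correctly identified as coordinate subspaces.
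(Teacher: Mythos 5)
Your proposal is correct and follows essentially the same route as the paper: clearing denominators to identify the base scheme of $\cC$ with $\cS_n$, identifying the graph closure $\cG(\cC)$ with the blow-up along the base scheme, and settling (3)--(4) by linear algebra on coordinate subspaces (the paper dismisses these last two as ``immediate from the definitions''). Your extra care in (2) about the universal property and the invertibility of $\pi_1^{-1}\cI_{\cS_n}\cdot\cO_{\cG(\cC)}$ only fills in details the paper leaves implicit.
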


\proof (1) Notice that the definition \eqref{Cremona} of the Cremona
transformation, which is a priori defined on the complement of
$\Sigma_n$ still makes sense on the general point of $\Sigma_n$.
Thus, the indeterminacies of the map \eqref{Cremona} are contained in
the singularity locus $\cS_n$ of $\Sigma_n$ defined by
\eqref{idealS}. It consists in fact of all of $\cS_n$ since 
after `clearing denominators', the components of the map defining $\cC$ 
given in \eqref{Cremona} can be rewritten as:
\begin{equation}\label{cleardenom}
(t_1:\cdots: t_n) \mapsto (t_2 \cdots t_n : t_1 t_3 \cdots t_n : \cdots :
t_1\cdots t_{n-1})\quad ,
\end{equation}
so that one sees that the indeterminacies are precisely those defined
by the ideal \eqref{idealS}.

(2) Using \eqref{cleardenom}, the map $\pi_1: \cG(\cC)\to \P^n$ may be 
identified with the blow-up of $\P^n$ along the subscheme $\cS_n$ 
defined by the ideal $\cI_{\cS_n}$ of \eqref{idealS}.
The generators of this ideal are the partial derivatives of the equation
of the algebraic simplex. Thus, $\cS_n$ is the {\em singularity subscheme\/}
of $\Sigma_n$. It consists of the union of the closure of the 
dimension~$n-2$ strata of $\Sigma_n$.
Again, note that the situation is entirely symmetrical: we can place $\cS_n$
in the target $\P^n$ as well, and view $\pi_2$ as the blow-up along
$\cS_n$.

(3) and (4) are immediate from the definitions.
\endproof

\begin{center}
\begin{figure}
\includegraphics[scale=0.5]{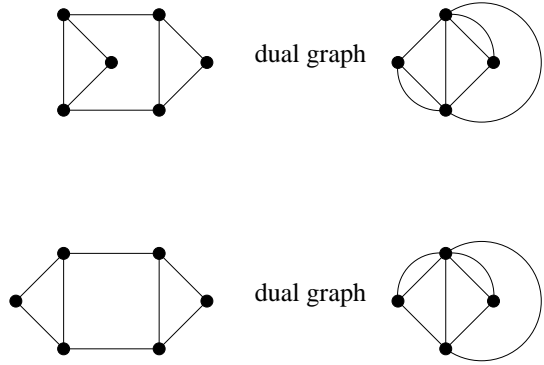}
\caption{Dual graphs of different planar embeddings of the same graph.
\label{DualGraphsFig}}
\end{figure}
\end{center}

Given a connected planar graph $\Gamma$, one defines its {\em dual graph}
$\Gamma^\vee$ by fixing an embedding of $\Gamma$ in $\R^2\cup \{
\infty\}=S^2$ and constructing a new graph in $S^2$ that has a vertex in each
component of $S^2\smallsetminus \Gamma$ and one edge connecting two
such vertices for each edge of $\Gamma$ that is in the common boundary of
the two regions containing the vertices. Thus, $\# E(\Gamma^\vee)=\#
E(\Gamma)$ and $\# V(\Gamma^\vee)=b_0(S^2\smallsetminus \Gamma)$. The
dual graph is in general non-unique, since it depends on the choice of
the embedding of $\Gamma$ in $S^2$, see \eg Figure \ref{DualGraphsFig}.

We recall here a well known result (see \eg \cite{Blo}, Proposition
8.3), which will be very useful in the following.

\begin{lem}\label{PsiDual}
Suppose given a planar graph $\Gamma$ with $\#E(\Gamma)=n$, with dual
graph $\Gamma^\vee$. Then the graph polynomials satisfy  
\begin{equation}\label{graphPolDual}
\Psi_\Gamma(t_1,\ldots,t_n)= 
  (\prod_{e\in E(\Gamma)} t_e)\,\, \Psi_{\Gamma^\vee}(t_1^{-1},\ldots,
t_n^{-1}),  
\end{equation}
hence the graph hypersurfaces are related by the Cremona
transformation $\cC$ of \eqref{Cremona}, 
\begin{equation}\label{XGammaDual}
 \cC(X_\Gamma \cap (\P^{n-1}\smallsetminus \Sigma_n)) =
X_{\Gamma^\vee}\cap (\P^{n-1}\smallsetminus \Sigma_n).
\end{equation}
\end{lem}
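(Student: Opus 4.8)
The plan is to prove the polynomial identity \eqref{graphPolDual} first, and then deduce the geometric statement \eqref{XGammaDual} as a formal consequence via the clearing-of-denominators form of the Cremona map.

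For the polynomial identity, the key combinatorial input is the classical duality between spanning trees of a planar graph $\Gamma$ and spanning trees of its dual $\Gamma^\vee$. Recall that $\Psi_\Gamma(t)=\sum_{T}\prod_{e\notin E(T)} t_e$, where the sum is over spanning trees $T$. First I would observe that for a planar graph with $n$ edges, a set of edges $S$ is the complement $E(\Gamma)\smallsetminus E(T)$ of a spanning tree $T$ of $\Gamma$ if and only if the corresponding dual edges form a spanning tree $T^\vee$ of $\Gamma^\vee$. This is the standard fact that the bijection $e\mapsto e^\vee$ on edges sends the complement of a spanning tree of $\Gamma$ to a spanning tree of $\Gamma^\vee$ (equivalently, cotrees correspond to trees under planar duality). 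Granting this bijection, I would compute
\begin{equation*}
(\prod_{e} t_e)\,\Psi_{\Gamma^\vee}(t^{-1})
=(\prod_{e} t_e)\sum_{T^\vee}\prod_{e^\vee\notin E(T^\vee)} t_e^{-1}
=\sum_{T^\vee}\prod_{e^\vee\in E(T^\vee)} t_e,
\end{equation*}
since multiplying $\prod_e t_e$ cancels the reciprocals of the edges \emph{not} in $T^\vee$ and leaves exactly the product over edges \emph{in} $T^\vee$. Re-indexing the sum by the trees $T$ of $\Gamma$ via the edge bijection, and using that $e^\vee\in E(T^\vee)$ iff $e\notin E(T)$, the right-hand side becomes $\sum_T\prod_{e\notin E(T)} t_e=\Psi_\Gamma(t)$, which is \eqref{graphPolDual}.

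With \eqref{graphPolDual} in hand, the geometric statement is almost immediate. On the open set $\P^{n-1}\smallsetminus\Sigma_n$ where all $t_e\neq 0$, the factor $\prod_e t_e$ is a nonvanishing unit, so \eqref{graphPolDual} shows that $\Psi_\Gamma(t)=0$ is equivalent to $\Psi_{\Gamma^\vee}(t_1^{-1},\ldots,t_n^{-1})=0$. Since $\cC$ sends $(t_1:\cdots:t_n)$ to $(t_1^{-1}:\cdots:t_n^{-1})$ and restricts to an isomorphism of $\P^{n-1}\smallsetminus\Sigma_n$ onto itself (it is an involution there), a point of $X_\Gamma\smallsetminus\Sigma_n$ is carried precisely to a point where $\Psi_{\Gamma^\vee}$ vanishes, i.e.\ to a point of $X_{\Gamma^\vee}\smallsetminus\Sigma_n$, giving \eqref{XGammaDual}.

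I expect the main obstacle to be the combinatorial duality of spanning trees rather than the algebra. One must be careful that the statement is about the degree: $\Psi_\Gamma$ has degree $\ell=b_1(\Gamma)$ (the cotree size), while $\Psi_{\Gamma^\vee}$ has degree $b_1(\Gamma^\vee)$, and homogeneity together with Euler's formula $b_1(\Gamma)+b_1(\Gamma^\vee)=n$ for a connected planar graph is what makes the weight bookkeeping after multiplying by $\prod_e t_e$ come out consistently. The cleanest way to nail the tree/cotree bijection is to invoke planar duality directly (a minimal edge cut of $\Gamma$ corresponds to a cycle of $\Gamma^\vee$ and vice versa, whence complements of spanning trees go to spanning trees); since this is the ``well known result'' cited from \cite{Blo}, Proposition~8.3, I would state it and refer there for the detailed verification, keeping the focus on the derivation of \eqref{graphPolDual} and \eqref{XGammaDual} from it.
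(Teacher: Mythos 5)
Your proposal is correct and follows essentially the same route as the paper's proof: both rest on the planar-duality bijection between complements of spanning trees of $\Gamma$ and spanning trees of $\Gamma^\vee$ (together with the degree bookkeeping $\deg\Psi_\Gamma+\deg\Psi_{\Gamma^\vee}=n$), factor out $\prod_e t_e$, re-index the sum, and then deduce \eqref{XGammaDual} by observing that $\prod_e t_e$ is a unit off $\Sigma_n$ so that the vanishing loci correspond under the Cremona involution. The only cosmetic difference is that you run the algebra from right to left and justify the tree/cotree bijection via cut--cycle duality, where the paper describes it by contracting $E(T)$ and dualizing; these are the same fact.
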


\proof This follows from the combinatorial identity
$$ \begin{array}{rl}
\Psi_\Gamma(t_1,\ldots,t_n)= & \sum_{T\subset\Gamma} \prod_{e\notin E(T)}
t_e \\[2mm] = & (\prod_{e\in E(\Gamma)} t_e) \sum_{T\subset\Gamma}
\prod_{e\in E(T)} t_e^{-1} \\[2mm]
= & (\prod_{e\in E(\Gamma)} t_e) \sum_{T'\subset \Gamma^\vee}
\prod_{e\notin E(T')} t_e^{-1} \\[2mm]  
= &  (\prod_{e\in E(\Gamma)} t_e) \Psi_{\Gamma^\vee}(t_1^{-1},\ldots,
t_n^{-1}) .
\end{array} $$
The third equality uses the fact that $\# E(\Gamma)=\#E(\Gamma^\vee)$
and $\# V(\Gamma^\vee)=b_0(S^2\smallsetminus \Gamma)$, so that
$\deg \Psi_\Gamma + \deg \Psi_{\Gamma^\vee} = \# E(\Gamma)$, and the
fact that there is a bijection between complements of spanning tree 
$T$ in $\Gamma$ and spanning trees $T'$ in $\Gamma^\vee$ obtained by
shrinking the edges of $T$ in $\Gamma$ and taking the dual graph of
the resulting connected graph. 

Written in the coordinates $(s_1:\cdots:s_n)$ of the target $\P^{n-1}$
of the Cremona transformation, the identity \eqref{graphPolDual} gives
$$ \Psi_\Gamma(t_1,\ldots,t_n)= (\prod_{e\in E(\Gamma^\vee)} s_e^{-1})
\Psi_{\Gamma^\vee}(s_1,\ldots, s_n) $$
from which \eqref{XGammaDual} follows.
\endproof

We then have the following simple geometric observation, which follows
directly from Lemma \ref{SandL} and Lemma \ref{PsiDual} above. 

\begin{cor}\label{DualGrCoroll}
The graph hypersurface of the dual graph is $X_{\Gamma^\vee}=\pi_2
(\pi_1^{-1}(X_{\Gamma}))$, with $\pi_i: \cG(\cC)\to \P^{n-1}$, for $i=1,2$, 
as in \eqref{GraphCrem}. The Cremona transformation $\cC$ restricts
to a (biregular) isomorphism 
\begin{equation}\label{isoCremDual}
\cC: X_\Gamma \smallsetminus \Sigma_n
\to X_{\Gamma^\vee} \smallsetminus \Sigma_n.
\end{equation}
The map $\pi_2:\cG(\cC)\to \P^{n-1}$ of \eqref{GraphCrem} restricts to
an isomorphism 
\begin{equation}\label{isoCrem2Dual}
\pi_2: \pi_1^{-1}(X_\Gamma \smallsetminus \Sigma_n)  
\to X_{\Gamma^\vee} \smallsetminus \Sigma_n.
\end{equation}
\end{cor}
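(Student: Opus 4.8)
The plan is to deduce the corollary directly from the two lemmas just proved, treating each of its three assertions as a straightforward consequence of the geometry of the Cremona graph. The guiding principle is that the closed graph $\cG(\cC)$ furnishes a single ambient variety on which both projections $\pi_1,\pi_2$ live, so that statements about the rational map $\cC$ can be lifted to honest morphisms and then pushed back down.

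First I would establish the isomorphism \eqref{isoCremDual}. By Lemma \ref{SandL}(1), the indeterminacy locus of $\cC$ is exactly $\cS_n$, which lies inside $\Sigma_n$; hence on the open set $\P^{n-1}\smallsetminus\Sigma_n$ the map $\cC$ is a genuine morphism, and by the symmetry $\cC\circ\cC=\id$ it is an isomorphism onto its image with inverse again given by $\cC$. The content of Lemma \ref{PsiDual}, specifically the relation \eqref{XGammaDual}, is precisely that this isomorphism carries $X_\Gamma\smallsetminus\Sigma_n$ onto $X_{\Gamma^\vee}\smallsetminus\Sigma_n$. So \eqref{isoCremDual} is immediate: it is the restriction of the biregular involution $\cC$ to the indicated locus, using \eqref{XGammaDual} to identify source and target.

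Next I would treat \eqref{isoCrem2Dual}. By Lemma \ref{SandL}(2), $\pi_1$ is the blow-up of $\P^{n-1}$ along $\cS_n$; since $\cS_n\subset\Sigma_n$, the blow-up is an isomorphism away from $\Sigma_n$, so $\pi_1$ restricts to an isomorphism $\pi_1^{-1}(\P^{n-1}\smallsetminus\Sigma_n)\xrightarrow{\sim}\P^{n-1}\smallsetminus\Sigma_n$, and in particular $\pi_1^{-1}(X_\Gamma\smallsetminus\Sigma_n)\xrightarrow{\sim}X_\Gamma\smallsetminus\Sigma_n$. On this open part of $\cG(\cC)$ the rational map $\cC$ equals $\pi_2\circ\pi_1^{-1}$ as a morphism, so composing with \eqref{isoCremDual} shows that $\pi_2$ restricts to the isomorphism \eqref{isoCrem2Dual} onto $X_{\Gamma^\vee}\smallsetminus\Sigma_n$. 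Finally, the set-theoretic description $X_{\Gamma^\vee}=\pi_2(\pi_1^{-1}(X_\Gamma))$ follows by taking closures: $\pi_1^{-1}(X_\Gamma\smallsetminus\Sigma_n)$ is dense in $\pi_1^{-1}(X_\Gamma)$, its image under the proper map $\pi_2$ is $X_{\Gamma^\vee}\smallsetminus\Sigma_n$ by \eqref{isoCrem2Dual}, and since $\pi_2$ is proper the image $\pi_2(\pi_1^{-1}(X_\Gamma))$ is closed and hence contains the full closure $X_{\Gamma^\vee}$; one checks the reverse inclusion similarly using that $X_{\Gamma^\vee}$ is the closure of its complement with $\Sigma_n$.

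The only real subtlety, and what I expect to be the main obstacle, is the bookkeeping at the boundary $\Sigma_n$ in the last assertion $X_{\Gamma^\vee}=\pi_2(\pi_1^{-1}(X_\Gamma))$. Away from $\Sigma_n$ everything is a clean transport of structure along an isomorphism, but one must verify that passing to the total transform $\pi_1^{-1}(X_\Gamma)$ (rather than the proper transform) does not introduce image components lying in $\Sigma_n$ that fail to lie in $X_{\Gamma^\vee}$, and conversely that no component of $X_{\Gamma^\vee}$ inside $\Sigma_n$ is missed. This is controlled by the explicit description of $\pi_1^{-1}(\Sigma_n)=\pi_2^{-1}(\Sigma_n)$ recorded before Lemma \ref{SandL} — the $2^N-2$ components indexed by proper subsets $I$ — together with properness of $\pi_2$, so the verification is routine once the density argument above is set up, but it is the step that genuinely uses more than the bare isomorphism on the open part.
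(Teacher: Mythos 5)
Your handling of the two isomorphism statements \eqref{isoCremDual} and \eqref{isoCrem2Dual} is correct, and it is exactly the argument the paper intends: the paper offers no written proof beyond the remark that the corollary ``follows directly from Lemma \ref{SandL} and Lemma \ref{PsiDual}'', and your first two paragraphs supply precisely those details (indeterminacy contained in $\Sigma_n$ plus the involution property of $\cC$, combined with \eqref{XGammaDual}, for \eqref{isoCremDual}; the blow-up description of $\pi_1$ and the identity $\cC=\pi_2\circ\pi_1^{-1}$ away from $\Sigma_n$ for \eqref{isoCrem2Dual}).

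The gap is in the last step, and it sits exactly at the point you flagged as ``routine''. Your claim that $\pi_1^{-1}(X_\Gamma\smallsetminus\Sigma_n)$ is dense in $\pi_1^{-1}(X_\Gamma)$ is false: $\pi_1^{-1}(X_\Gamma)$ is the \emph{total} transform, which contains the entire exceptional fibers over $X_\Gamma\cap\cS_n$, and those fibers are not in the closure of $\pi_1^{-1}(X_\Gamma\smallsetminus\Sigma_n)$ (that closure is the \emph{proper} transform). Worse, the ``reverse inclusion'' you defer to a routine check is genuinely false under the total-transform reading. Take $\Gamma=\Gamma_3$, so that $X_\Gamma$ is the conic $t_1t_2+t_2t_3+t_1t_3=0$, which passes through all three coordinate points, and $X_{\Gamma^\vee}=\cL$ is a line. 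Then $\pi_1^{-1}(X_\Gamma)$ contains the three exceptional curves $E_i$ over the coordinate points, and $\pi_2(E_i)$ is the coordinate line $\{s_i=0\}$ of the target (this is the classical blow-up/blow-down picture of Figure \ref{Cremona3Fig}), so $\pi_2(\pi_1^{-1}(X_\Gamma))=\cL\cup\Sigma_3\supsetneq X_{\Gamma^\vee}$. Thus the first assertion of the corollary can only be correct if $\pi_1^{-1}(X_\Gamma)$ is interpreted as the proper transform, i.e. the closure of $\pi_1^{-1}(X_\Gamma\smallsetminus\Sigma_n)$; with that reading your properness-and-closure argument does go through, provided one also notes that no component of $X_{\Gamma^\vee}$ lies inside $\Sigma_n$ (equivalently, $\Psi_{\Gamma^\vee}$ is divisible by no coordinate $t_i$, which holds as soon as $\Gamma^\vee$ has no looping edges). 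It is worth observing that the direction actually used later, Lemma \ref{BananaDual}, is insensitive to this issue: there the exceptional fibers over $\cL\cap\cS_n$ map into $\cS_n\subset X_{\Gamma_n}$ (by Lemma \ref{XGammaS}), so the total transform happens to give the right answer; but that is a special feature of that example, not a generic fact, which is precisely why your ``routine verification'' cannot be carried out in general.
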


\smallskip

Notice that the formula \eqref{graphPolDual} can be used as a 
source of examples of
combinatorially inequivalent graphs that have the same graph
hypersurface. In fact, the graph polynomial
$\Psi_{\Gamma^\vee}(s_1,\ldots, s_n)$ is the same independently of the
choice of the embedding of the planar graph $\Gamma$ in the
plane, while the dual graph $\Gamma^\vee$ depends on the choice of the
embedding of $\Gamma$ in the plane. Thus, different embeddings that
give rise to different graphs $\Gamma^\vee$ provide examples of
combinatorially inequivalent graphs with the same graph hypersurface.
This has direct consequences, for example, on the question of lifting
the Connes--Kreimer Hopf algebra of graphs \cite{CK} at the level of the graph
hypersurfaces or their classes in the Grothendieck ring of motives.
An explicit example of combinatorially inequivalent graphs with the
same graph hypersurface, obtained as dual graphs of different planar
embeddings of the same graph, is given in Figure \ref{DualGraphsFig}.

\smallskip

We see a direct application of this general result for planar graphs
in \S \ref{DualClassSec} below, where we derive a relation between the
classes in the Grothendieck ring. In general, this relation alone is
too weak to give explicit formulae, but the example we concentrate on
in the next section shows a family of graphs for which a complete
description of both the class in the Grothendieck ring and the CSM
class follows from the special form that the result of Corollary
\ref{DualGrCoroll} takes.

\begin{center}
\begin{figure}
\includegraphics[scale=0.5]{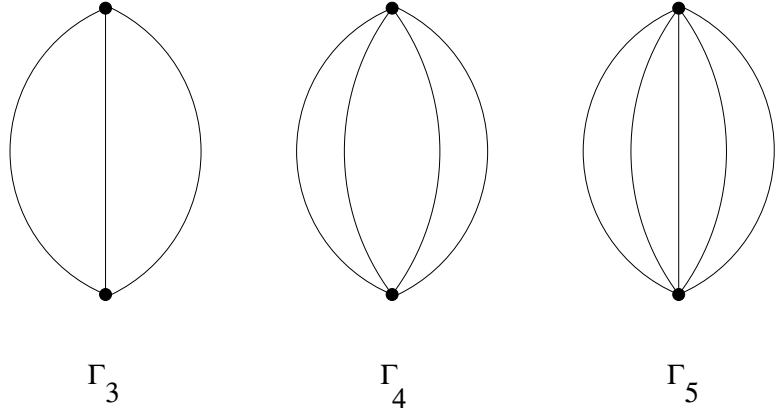}
\caption{Examples of banana graphs
\label{BananaFig}}
\end{figure}
\end{center}

\subsection{An example: the banana graphs}\label{BanSec}

In this paper we concentrate on a particular example, for which
we can carry out complete and explicit calculations. We consider 
an infinite family of graphs called the ``banana graphs''. 
The $n$-th term $\Gamma_n$ in this family is a vacuum bubble Feynman 
graph for a scalar field theory with an interaction term of the 
form $\cL_{int}(\phi)=\phi^n$. The graph $\Gamma_n$ has two
vertices and $n$ parallel edges between them, as in Figure
\ref{BananaFig}. 

A direct computation using the Macaulay2 program \cite{Macaulay2} for characteristic
classes developed in \cite{Alu3} shows, for the first three examples
in this series of graphs depicted in Figure \ref{BananaFig}, the
following invariants (see \S \ref{GrCSMdefSec} for precise definitions).

\begin{center}
\begin{tabular}{|c|c|c|c|}
\hline
$n$ & $3$ & $4$ & $5$ \\
\hline
& & & \\ 
$\Psi_\Gamma$ & $t_1t_2+t_2t_3+t_1t_3$
 &  $t_1t_2t_3+t_1t_2t_4 +$ &  $t_1t_2t_3t_4+ t_1t_2t_3t_5 + $  \\ 
& & $t_1t_3t_4+t_2t_3t_4$  & $t_1t_2t_4t_5+t_1t_3t_4t_5+t_2t_3t_4t_5$ \\
& & &  \\
\hline
& & & \\
$c(X_\Gamma)$ & $2H^2+2H$ & $5H^3 + 3H^2 + 3H$  &
$4H^4  + 14H^3+ 4H^2  + 4H$   \\
& & & \\
\hline
& & & \\
$Mil(X_\Gamma)$ & $0$ & $-4 H^3$ & $60 H^4 -10 H^3$ \\
& & & \\
\hline
& & & \\
$\chi(X_\Gamma)$ & $2$ & $5$ & $4$ \\
& & & \\
\hline
\end{tabular}
\end{center}

\bigskip

Here $H$ denotes the hyperplane class and $c(X_\Gamma)$ is the
Chern--Schwartz--MacPherson class of the hypersurface pushed forward
to the ambient projective space. We also show the Milnor class, which
measures the discrepancy between the Chern--Schwartz--MacPherson class 
and the Fulton class, that is, between the characteristic class of the
singular hypersurface $X_\Gamma = \{ \Psi_\Gamma =0 \}$ and the class
of a smooth deformation. We also display the value of the Euler
characteristic, which one can read off the CSM class.
The reader can pause momentarily to consider the CSM classes reported
in the three examples above and notice that they suggest a 
general formula for this family of graphs, where the coefficient of
$H^k$ in the CSM class for the $n$-th hypersurface $X_{\Gamma_n}$ is
given by the formula
\begin{equation}\label{CSMguess}
\left\{\aligned
& \binom {n}{k}-\binom{n-1}k=\binom {n-1}{k-1}\quad\text{if $k$ is even} \\
& \binom {n}{k}+\binom{n-1}k\phantom{=\binom {n-1}{k-1}\,\,}\quad\text{if $k$ is odd}
\endaligned\right.
\end{equation}
for $1<k<n$, and $n-1$ for $k=1$. Thus, for example, for
$n\ge 3$ the Euler characteristic $\chi(X_{\Gamma_n})$ of the $n$-th
banana hypersurface fits the pattern
\begin{equation}\label{chiBn}
\chi(X_{\Gamma_n})= n+(-1)^n.
\end{equation}
This is indeed the correct formula for the CSM class that will be
proved in \S \ref{CSMsect} below. The sample case reported here
already exhibits an interesting feature, which we encounter again in
the general formula of \S  \ref{CSMsect} and which seems confirmed by 
computations carried out algorithmically on other sample
graphs from different families of Feynman graphs, namely the
unexpected {\em positivity} of the coefficients of the
Chern--Schwartz--MacPherson classes. Notice that a similar instance of
positivity of the CSM classes arises in another case of varieties with
a strong combinatorial flavor, namely the case of the Schubert
varieties considered in \cite{AluMih}. At present we do not have a
conceptual explanation for this positivity phenomenon, but we can
state the following tentative guess, based on the sparse 
numerical and theoretical evidence gathered so far.

\begin{conj}\label{posconj}
The coefficients of all the powers $H^k$ in the CSM class of an
arbitrary graph hypersurface $X_\Gamma$ are non-negative.
\end{conj}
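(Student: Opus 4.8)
The assertion is a conjecture, so what follows is the strategy I would pursue toward a proof (or toward establishing large families of cases), not a complete argument.

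The banana computation of \S\ref{CSMsect} suggests a template: exploit functoriality of MacPherson's transformation $c_*$ together with a combinatorial recursion on the graph. The banana proof, however, relies on planarity and the Cremona duality of Corollary \ref{DualGrCoroll}, which are unavailable for a general $\Gamma$. The natural substitute is the \emph{deletion--contraction} recursion. For any edge $e$ of $\Gamma$ that is neither a bridge nor a loop, the Kirchhoff polynomial is linear in $t_e$ and satisfies $\Psi_\Gamma = t_e\,\Psi_{\Gamma\smallsetminus e} + \Psi_{\Gamma/e}$, where $\Psi_{\Gamma\smallsetminus e}$ and $\Psi_{\Gamma/e}$ involve only the remaining variables. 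The plan is to turn this into a recursion for $c_*(\one_{X_\Gamma})$ and then induct on the number of edges, with base cases the smallest graphs, arranging that each step preserves nonnegativity of the coefficients of $H^k$.

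Concretely, I would use the linear projection $\pi\colon \P^{n-1}\dashrightarrow\P^{n-2}$ away from the coordinate point $p_e$ dual to $t_e$. Because $\Psi_\Gamma$ is linear in $t_e$, the restriction of $\pi$ realizes $X_\Gamma$ as birational onto $\P^{n-2}$: over the open set $\{\Psi_{\Gamma\smallsetminus e}\neq 0\}$ it is an isomorphism (the hypersurface is the graph of $t_e = -\Psi_{\Gamma/e}/\Psi_{\Gamma\smallsetminus e}$), while over $X_{\Gamma\smallsetminus e}\cap X_{\Gamma/e}\subset\P^{n-2}$ the whole fiber line lies in $X_\Gamma$. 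Blowing up $p_e$ to replace $\pi$ by a genuine morphism (a $\P^1$-bundle over $\P^{n-2}$) and then applying $c_*$-functoriality together with additivity over this stratification, one obtains $c_*(\one_{X_\Gamma})$ as a combination of the pushforwards of $c_*(\one_{X_{\Gamma\smallsetminus e}})$, $c_*(\one_{X_{\Gamma/e}})$, and $c_*(\one_{X_{\Gamma\smallsetminus e}\cap X_{\Gamma/e}})$, together with a lower-dimensional correction supported at $p_e$. The reductions that fall outside this core case — bridges, self-loops, and multiple edges — are exactly the graph operations for which \S\ref{ConesSect} provides explicit (and, one checks, positivity-preserving) cone and attachment formulas, so they can be handled separately.

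The hard part will be controlling signs. The term carried by $X_{\Gamma\smallsetminus e}\cap X_{\Gamma/e}$ enters through an inclusion--exclusion on the $\P^1$-fibers, and the CSM class of that intersection appears with a sign; there is no a priori reason the net coefficient of each $H^k$ remains nonnegative. Worse, $X_{\Gamma\smallsetminus e}\cap X_{\Gamma/e}$ is not itself a single graph hypersurface, so the induction does not close on the class of varieties named in the conjecture: one is forced to enlarge the inductive hypothesis to intersections of graph hypersurfaces, for which positivity is neither claimed nor evidently true. Any successful argument must therefore isolate a feature special to ordinary (commutative) graph hypersurfaces, since \S\ref{NCQFTsect} shows that positivity genuinely fails for non-planar ribbon graphs. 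The most promising — and most optimistic — route would be to bypass the signed recursion entirely by producing a manifestly nonnegative combinatorial formula for the coefficient of $H^k$, of Tutte-polynomial type, generalizing the closed form \eqref{CSMguess} found for the banana family; exhibiting such a formula, or even identifying the right combinatorial objects it should count, is where I expect the real difficulty to lie.
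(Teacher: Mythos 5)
You should first note that the paper contains no proof of this statement: it appears only as Conjecture~\ref{posconj} and remains open there, so there is no argument of the authors' for you to have reproduced. What the paper offers is evidence, and your proposal correctly identifies all of it: the explicit formula of Theorem~\ref{CSMXGammaThm}, whose coefficients \eqref{CSMguess} are manifestly positive, verifies the conjecture for the infinite banana family; Corollary~\ref{posCones} shows positivity is stable under the cone-type operations of \S\ref{ConesSect} (subdividing edges, doubling disconnecting edges, attaching trees and single-edge loops); and \S\ref{NCQFTsect} shows positivity genuinely fails for non-planar ribbon graphs in the noncommutative setting, which constrains any would-be proof. Your explicit acknowledgment that what you give is a strategy rather than a proof is therefore the correct conclusion, not a shortcoming relative to the paper.

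As for the strategy itself, it is sound as far as it goes and genuinely different from anything in the paper. The deletion--contraction identity $\Psi_\Gamma = t_e\,\Psi_{\Gamma\smallsetminus e}+\Psi_{\Gamma/e}$ (for $e$ neither a bridge nor a loop) is correct, as is the geometry of the projection from the coordinate point $p_e$: $X_\Gamma$ maps birationally onto $\P^{n-2}$, the full fiber line lies in $X_\Gamma$ exactly over $X_{\Gamma\smallsetminus e}\cap X_{\Gamma/e}$, and blowing up $p_e$ produces a morphism to which MacPherson functoriality applies. This is a more general inductive engine than the paper's tools, since the Cremona duality of \S\ref{CremonaSect} requires planarity and the cone formulas of \S\ref{ConesSect} only cover special operations; so for the general conjecture your route is the more promising one. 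The two obstacles you name are exactly the real ones: the intersection term enters the inclusion--exclusion with a sign, and $X_{\Gamma\smallsetminus e}\cap X_{\Gamma/e}$ is not itself a graph hypersurface, so the induction does not close on the class of varieties the conjecture concerns. Until those are overcome --- for instance by the manifestly positive combinatorial formula you propose, generalizing \eqref{CSMguess} --- the statement retains precisely the status it has in the paper: a conjecture supported by the banana computation, by closure under the operations of Corollary~\ref{posCones}, and by numerical evidence.
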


\smallskip

For the general element $\Gamma_n$ in the family of the banana graphs,
the graph hypersurface $X_{\Gamma_n}$ in $\P^{n-1}$ is defined by the 
vanishing of the graph polynomial 
\begin{equation}\label{PolyBn}
\Psi_{\Gamma_n} = t_1\cdots t_n (\frac{1}{t_1} + \cdots + \frac{1}{t_n}).
\end{equation}
This is easily seen, since in this case spanning trees consist of a
single edge connecting the two vertices. 
Equivalently, one can see this in terms of the matrix
$M_\Gamma(t)$. 

\begin{lem}\label{MGamman}
For the $n$-th banana graph $\Gamma_n$, the matrix $M_{\Gamma_n}(t)$
is of the form
\begin{equation}\label{nMGamma}
M_{\Gamma_n}(t)=\left( \begin{array}{cccccc}
t_1+t_2 & -t_2 & 0 & 0 & \cdots & 0 \\
-t_2 & t_2+t_3 & -t_3 & 0 &  & 0 \\
0 & -t_3 & t_3+t_4 & -t_4 &  & 0 \\
0 & 0 & -t_4 & t_4 + t_5 & & 0 \\
\vdots & \vdots & & & & \vdots \\
0 & 0 & 0 & 0  & \cdots & t_{n-1}+t_n 
\end{array}\right).
\end{equation}
\end{lem}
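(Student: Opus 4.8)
The plan is to make the matrix explicit by fixing an orientation of the edges and a \emph{specific} basis of $H_1(\Gamma_n,\Z)$ for which the incidence matrix $\eta$ of \eqref{etaik} is bidiagonal, and then to read off $M_{\Gamma_n}$ directly from the defining formula \eqref{MGamma}. Since $\Gamma_n$ is connected with two vertices and $n$ edges, one has $\ell=b_1(\Gamma_n)=n-1$, so $M_{\Gamma_n}(t)$ is an $(n-1)\times(n-1)$ matrix, consistent with \eqref{nMGamma}. It is worth emphasizing at the outset that, unlike $\det M_{\Gamma_n}$, the matrix $M_{\Gamma_n}$ itself \emph{does} depend on the chosen basis of loops; the content of the lemma is therefore the claim that the natural basis described below produces exactly the tridiagonal shape displayed.

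First I would orient all $n$ parallel edges $e_1,\dots,e_n$ from the first vertex to the second. For $k=1,\dots,n-1$, let $l_k$ be the loop that runs forward along $e_k$ and back along $e_{k+1}$, so that in the edge space $l_k$ corresponds to $e_k-e_{k+1}$. These $n-1$ cycles are linearly independent (inspecting the coefficient of $e_1$ and inducting upward forces all coefficients to vanish), and, being $b_1(\Gamma_n)=n-1$ in number, they form a basis of $H_1(\Gamma_n,\Z)\cong\Z^{n-1}$. With respect to this basis and orientation, the incidence coefficients of \eqref{etaik} are simply
\[
\eta_{ik}=\delta_{i,k}-\delta_{i,k+1},
\]
since $e_i$ lies on $l_k$ only for $i=k$ (same orientation, $+1$) and $i=k+1$ (reverse orientation, $-1$).

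Next I would substitute this into \eqref{MGamma}, obtaining
\[
(M_{\Gamma_n})_{kr}(t)=\sum_{i=1}^n t_i\,(\delta_{i,k}-\delta_{i,k+1})(\delta_{i,r}-\delta_{i,r+1}),
\]
whose summand is nonzero only for $i\in\{k,k+1\}\cap\{r,r+1\}$. This leaves three cases to record. For $r=k$ the surviving indices $i=k,k+1$ contribute $t_k+t_{k+1}$, the displayed diagonal; for $r=k+1$ the single common index $i=k+1$ contributes $t_{k+1}(-1)(+1)=-t_{k+1}$, and by symmetry the same value appears for $r=k-1$, giving the sub- and superdiagonal; and for $|k-r|\ge 2$ the two index sets are disjoint, so the entry vanishes. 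Assembling these cases reproduces precisely \eqref{nMGamma}.

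I do not expect a genuine obstacle here: once the basis $\{l_k\}$ is chosen, the argument reduces to a one-line substitution, and the only place demanding care is the bookkeeping of Kronecker deltas in the off-diagonal case (and the observation that a different basis would legitimately yield a different, though $\GL(n-1,\Z)$-conjugate, matrix). As an optional consistency check, expanding $\det M_{\Gamma_n}(t)$ along the last row and inducting on $n$ recovers the elementary symmetric expression $\sum_j \prod_{i\ne j} t_i$ of \eqref{PolyBn}, in agreement with \eqref{PsiDet}; this confirms that the chosen basis is compatible with the Kirchhoff polynomial, although it is not needed for the statement itself.
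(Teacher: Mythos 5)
Your proposal is correct and follows essentially the same route as the paper: the paper also fixes the loop basis $e_i\cup -e_{i+1}$, writes down the resulting bidiagonal $n\times(n-1)$ incidence matrix $\eta$, and obtains \eqref{nMGamma} by the substitution $(M_\Gamma)_{kr}=\sum_i t_i\eta_{ik}\eta_{ir}$, exactly as you do (your Kronecker-delta bookkeeping just makes explicit what the paper leaves as a one-line multiplication). Your optional determinant check also mirrors the paper, which verifies $\det M_{\Gamma_n}=\Psi_{\Gamma_n}$ via the recursion $\det M_{\Gamma_n}(t)=t_n\,\det M_{\Gamma_{n-1}}(t)+t_1\cdots t_{n-1}$.
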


\proof In fact, if we choose as a basis of the first
cohomology of the graph $\Gamma_n$ the obvious one consisting 
of the $\ell=n-1$ loops $e_i \cup -e_{i+1}$, with $i=1,\ldots,n-1$, we
obtain that the $n\times (n-1)$-matrix $\eta_{ik}$ is of the form
$$ \eta_{ik} =\left( \begin{array}{rrrrrr}
1&0&0&0&0&\cdots\\-1&1&0&0&0&\cdots\\0&-1&1&0&0&\cdots\\
0&0&-1&1&0&\cdots \\ 0&0&0&-1&1&\cdots 
\end{array}\right).$$
Thus, the matrix $(M_\Gamma)_{rk}(t)=\sum_i t_i \eta_{ri}\eta_{ik}$
has the form \eqref{nMGamma}. It is easy to check that this indeed has
determinant given by \eqref{PolyBn}. In fact, from \eqref{nMGamma} one
sees that the determinant satisfies
$$ \det M_{\Gamma_n} (t)= (t_{n-1}+t_n)\, \det M_{\Gamma_{n-1}}(t) \,  -
t_{n-1}^2\, \det M_{\Gamma_{n-2}}(t) . $$
It then follows by induction that the determinant satisfies the
recursive relation
\begin{equation}\label{detPsiInd}
\det M_{\Gamma_n}(t)= t_n \, \det
M_{\Gamma_{n-1}}(t) \, + t_1 \cdots t_{n-1}. 
\end{equation}
In fact, assuming the above for $n-1$ we obtain
$$ \det M_{\Gamma_n}(t)= t_n \det M_{\Gamma_{n-1}}(t) + t_{n-1}^2
\det M_{\Gamma_{n-2}}(t) + t_1 \cdots t_{n-1} - t_{n-1}^2
\det M_{\Gamma_{n-2}}(t). $$
It is then clear that $\det M_{\Gamma_n}(t)=\Psi_{\Gamma_n}(t)$, with 
the latter given by the formula \eqref{PolyBn}, since this also
clearly satisfies the same recursion \eqref{detPsiInd}.
\endproof

The dual graph $\Gamma^\vee_n$ is just a polygon with
$n$ vertices and we can identify the hypersurface $X_{\Gamma^\vee_n}$
in $\P^{n-1}$ with the hyperplane $\cL$ defined in \eqref{Lhyp}.

We rephrase here the statement of Corollary \ref{DualGrCoroll} in the
special case of the banana graphs, since it will be very useful in 
our explicit computations of \S\S \ref{MotSect} and \ref{CSMsect} below. 

\begin{lem}\label{BananaDual}
The $n$-th banana graph hypersurface is $X_{\Gamma_n}=\pi_2
(\pi_1^{-1}(\cL))$, with $\pi_i: \cG(\cC)\to \P^{n-1}$, for $i=1,2$, 
as in \eqref{GraphCrem}. The Cremona transformation $\cC$ restricts
to a (biregular) isomorphism 
\begin{equation}\label{isoCrem}
\cC: \cL\smallsetminus \Sigma_n
\to X_{\Gamma_n} \smallsetminus \Sigma_n.
\end{equation}
The map $\pi_2:\cG(\cC)\to \P^{n-1}$ of \eqref{GraphCrem} restricts to
an isomorphism 
\begin{equation}\label{isoCrem2}
\pi_2: \pi_1^{-1}(\cL\smallsetminus \Sigma_n)  \to X_{\Gamma_n} \smallsetminus \Sigma_n.
\end{equation}
\end{lem}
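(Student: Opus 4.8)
The plan is to obtain the lemma as a direct specialization of Corollary~\ref{DualGrCoroll}, the only genuine subtlety being the bookkeeping of which graph plays the role of $\Gamma$ and which plays the role of $\Gamma^\vee$. In \eqref{isoCrem} the Cremona transformation has the hyperplane $\cL$ as its \emph{source} and $X_{\Gamma_n}$ as its \emph{target}, whereas Corollary~\ref{DualGrCoroll} sends $X_\Gamma\smallsetminus\Sigma_n$ to $X_{\Gamma^\vee}\smallsetminus\Sigma_n$. I would therefore apply the corollary not to $\Gamma_n$ but to its dual $\Gamma_n^\vee$, the polygon on $n$ vertices. Since planar duality is an involution, and the $n$-gon embedded in $S^2$ bounds exactly two regions whose dual has two vertices joined by $n$ edges, one has $(\Gamma_n^\vee)^\vee=\Gamma_n$; feeding $\Gamma_n^\vee$ into the corollary then returns $\Gamma_n$ as the relevant dual graph.

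The one computation needed is the identification $X_{\Gamma_n^\vee}=\cL$. I would read this off the Kirchhoff polynomial of the cycle $\Gamma_n^\vee$: its spanning trees are precisely the $n$ paths obtained by deleting a single edge, so each complementary edge-set consists of one edge and \eqref{PsiGamma} gives $\Psi_{\Gamma_n^\vee}=t_1+\cdots+t_n$, which is exactly the defining equation \eqref{Lhyp} of $\cL$. Equivalently, one may substitute the banana polynomial \eqref{PolyBn} into the duality identity \eqref{graphPolDual} of Lemma~\ref{PsiDual}, which returns $\Psi_{\Gamma_n^\vee}(t)=t_1+\cdots+t_n$ directly.

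With the substitution $\Gamma\rightsquigarrow\Gamma_n^\vee$, $\Gamma^\vee\rightsquigarrow\Gamma_n$, $X_\Gamma\rightsquigarrow\cL$, $X_{\Gamma^\vee}\rightsquigarrow X_{\Gamma_n}$, the three conclusions of Corollary~\ref{DualGrCoroll} become verbatim the three assertions of the lemma: the equality $X_{\Gamma_n}=\pi_2(\pi_1^{-1}(\cL))$, the biregular isomorphism \eqref{isoCrem}, and the isomorphism \eqref{isoCrem2} induced by $\pi_2$. No further argument is required, since the blow-up descriptions of $\pi_1,\pi_2$ along $\cS_n$ and the symmetry $\pi_1^{-1}(\Sigma_n)=\pi_2^{-1}(\Sigma_n)$ recorded in Lemma~\ref{SandL} are already built into the corollary.

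The main obstacle is purely organizational: one must resist applying the corollary directly to $\Gamma_n$ (which would send $X_{\Gamma_n}$ to $\cL$, the reverse of what is stated) and instead apply it to the polygon $\Gamma_n^\vee$. This is legitimate precisely because the Cremona map is an involution away from $\Sigma_n$ and planar duality is involutive, so that the double-dual identification $(\Gamma_n^\vee)^\vee=\Gamma_n$ is compatible with the embeddings in play. Once this orientation is fixed, the proof is a direct quotation of results already established.
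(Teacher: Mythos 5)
Your proof is correct and takes essentially the same route as the paper: there the lemma is stated without a separate argument, as a direct rephrasing of Corollary~\ref{DualGrCoroll} for the banana graphs, after observing (just as you do via the spanning trees of the $n$-gon) that the dual graph $\Gamma_n^\vee$ is the polygon and $X_{\Gamma_n^\vee}=\cL$. Your explicit bookkeeping of which graph plays the role of $\Gamma$ --- applying the corollary to $\Gamma_n^\vee$ and invoking $(\Gamma_n^\vee)^\vee=\Gamma_n$ so that $\cL$ is the source --- is a detail the paper leaves implicit, and it is handled correctly.
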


\begin{center}
\begin{figure}
\includegraphics[scale=0.4]{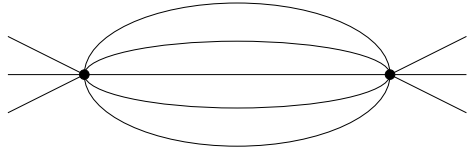}
\caption{Banana graphs with external edges
\label{BnExtEdgeFig}}
\end{figure}
\end{center}

In order to compute the Feynman integral \eqref{FeynInt2}, we view the
banana graphs $\Gamma_n$ not as vacuum bubbles, but as endowed with a
number of external edges, as in Figure \ref{BnExtEdgeFig}. It does not
matter how many external edges we attach. This will depend on which
scalar field theory the graph belongs to, but the resulting integral
is unaffected by this, as long as we have nonzero external momenta
flowing through the graph. 

\begin{lem}\label{BnFeyn}
The Feynman integral \eqref{FeynInt2} for
the banana graphs $\Gamma_n$ is of the form
\begin{equation}\label{BnFeynInt}
U(\Gamma,p)=   \frac{\Gamma((1-D/2)(n-1) +1) C(p)}{(4\pi)^{(n-1) D/2}}  
\int_{\sigma_n} \frac{(t_1\cdots t_n)^{(\frac{D}{2}-1)(n-1) -1} \,\omega_n}
{\Psi_\Gamma(t)^{(\frac{D}{2}-1)n}},
\end{equation}
with the function of the external momenta given by
$C(p)= (\sum P_v)^2$, with $v$ being either one of the two vertices of
the graph $\Gamma_n$ and  $P_v=\sum_{e\in E_{ext}(\Gamma_n), t(e)=v} p_e$. 
\end{lem}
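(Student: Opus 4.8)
The plan is to specialize the general parametric integral \eqref{FeynInt2} to the banana graph $\Gamma_n$, using the facts already established: that $n=\ell+1$ (so $\ell=n-1$), that $\Psi_{\Gamma_n}$ is the degree-$(n-1)$ polynomial of \eqref{PolyBn}, and that $P_{\Gamma_n}$ is the degree-$n$ polynomial \eqref{PGammapt}. First I would compute the relevant exponents. With $n_{\text{int}}=n$ internal edges and $\ell=n-1$ loops, the prefactor $\Gamma(n-D\ell/2)$ becomes $\Gamma(n-D(n-1)/2)=\Gamma((1-D/2)(n-1)+1)$, matching the claimed Gamma factor, and $(4\pi)^{\ell D/2}=(4\pi)^{(n-1)D/2}$ matches the stated denominator. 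The exponents in the integrand of \eqref{FeynInt2} are $-n+D(n-1)/2$ on $P_\Gamma$ and $n(-1+D/2)=n(D/2-1)$ on $\Psi_\Gamma$, so the task reduces to understanding the factor $P_{\Gamma_n}(p,t)^{-n+D(n-1)/2}$ and showing it produces the advertised $(t_1\cdots t_n)^{(D/2-1)(n-1)-1}$ together with the momentum function $C(p)$.

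The key structural step is to determine the cut-sets of $\Gamma_n$ explicitly. Since $\Gamma_n$ has exactly two vertices joined by $n$ parallel edges, any subset of edges disconnects the graph into the two vertices precisely when it contains all $n$ edges; but a cut-set must consist of $b_1(\Gamma_n)+1=n$ edges dividing $\Gamma$ into two connected components, which here forces the cut-set to be the full edge set $C=\{e_1,\dots,e_n\}$. Hence the sum in \eqref{PGammapt} collapses to the single term $P_{\Gamma_n}(p,t)=s_C\,t_1\cdots t_n$, where by \eqref{sCcoeff} the coefficient $s_C=(\sum_{v\in V(\Gamma_1)}P_v)^2$ is evaluated at either vertex. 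I would set $C(p):=s_C=(\sum P_v)^2$, with $P_v=\sum_{e\in E_{ext}(\Gamma_n),\,t(e)=v}p_e$ as in \eqref{Pv}, which is exactly the momentum function in the statement. The subtlety worth spelling out is why only this single cut-set contributes: one must check that no proper subset of the edges separates the two vertices, and that the two components in \eqref{sCcoeff} are the two individual vertices, so that $s_C$ is genuinely a function of the external momenta at one vertex.

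With $P_{\Gamma_n}(p,t)=C(p)\,(t_1\cdots t_n)$ in hand, I would substitute into \eqref{FeynInt2}. The factor $C(p)^{-n+D(n-1)/2}$ is a constant in $t$ and can be pulled out of the integral; it is absorbed into the overall momentum-dependent prefactor, which the statement records simply as $C(p)$ (the precise power being a harmless reparametrization of the external-momentum dependence). The remaining $t$-dependence from $P_{\Gamma_n}$ is $(t_1\cdots t_n)^{-n+D(n-1)/2}$. I would then combine this with the measure: in \eqref{FeynInt2} the numerator carries $(t_1\cdots t_n)^{-n+D(n-1)/2}\,\omega_n$ and one checks the bookkeeping of the single power of $(t_1\cdots t_n)$ coming from writing $V_\Gamma=P_\Gamma/\Psi_\Gamma$ as in \eqref{ratioVGamma}, so that the net exponent on $t_1\cdots t_n$ in the numerator is $(D/2-1)(n-1)-1$. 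Simultaneously the denominator $\Psi_\Gamma(t)^{n(D/2-1)}$ is exactly $\Psi_\Gamma(t)^{(D/2-1)n}$ as claimed.

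The main obstacle, and the step I would be most careful about, is the exponent arithmetic tying together \eqref{FeynInt}, \eqref{ratioVGamma}, and \eqref{FeynInt2}: one must correctly track how the single factor $t_1\cdots t_n$ from $P_{\Gamma_n}$ interacts with the rewriting of $V_\Gamma^{-(n-D\ell/2)}=(P_\Gamma/\Psi_\Gamma)^{-(n-D\ell/2)}$ when passing to the form \eqref{FeynInt2}, so that the final powers land exactly as $(t_1\cdots t_n)^{(\frac D2-1)(n-1)-1}$ over $\Psi_\Gamma^{(\frac D2-1)n}$. Everything else—the identification of the cut-set and the Gamma/$(4\pi)$ prefactors—is a direct specialization, so once the bookkeeping of the $t_1\cdots t_n$ powers is pinned down, the stated formula \eqref{BnFeynInt} follows immediately.
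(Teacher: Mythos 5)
Your proposal is correct and takes essentially the same approach as the paper's own (much terser) proof: the paper likewise observes that the only cut-set of $\Gamma_n$ is the full edge set, so that $P_{\Gamma_n}(t,p)=C(p)\,t_1\cdots t_n$, and then substitutes into \eqref{FeynInt2} using $n=\ell+1$. The exponent bookkeeping you flag as the main obstacle is in fact immediate, since $-n+D(n-1)/2=(\tfrac D2-1)(n-1)-1$ is exactly the exponent appearing in \eqref{FeynInt2}, and your remark that the prefactor is really $C(p)^{-n+D(n-1)/2}$, recorded simply as $C(p)$ in the statement, is a fair reading of the paper's (slightly abusive) notation.
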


\proof The result is immediate from \eqref{FeynInt2}, using $n=\ell+1$
and the fact that the only cut-set for the banana graph $\Gamma_n$
consists of the union of all the edges, so that 
$$ P_\Gamma(t,p)=C(p) \, t_1\cdots t_n. $$
\endproof

For example, in the case with $n=2$ and $D\in 2\N$, $D\geq 4$, the integral (up
to a divergent Gamma factor $\Gamma(2-D/2) 4\pi^{-D/2}$) reduces
to the computation of the convergent integral
$$ \int_{[0,1]} (t (1-t))^{D/2-2} dt =
\frac{((\frac{D}{2}-2)!)^2}{(D-3)!}. $$ 

\smallskip

In general, apart from poles of the Gamma function, divergences may arise from
the intersections of the domain of integration $\sigma_n$ with the
graph hypersurface $X_{\Gamma_n}$. 

\begin{lem}\label{intDomXGamma}
The intersection of the domain of integration $\sigma_n$ with the
graph hypersurface $X_{\Gamma_n}$ happens along $\sigma_n \cap \cS_n$ 
in the algebraic simplex $\Sigma_n$. 
\end{lem}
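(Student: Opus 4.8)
The plan is to exploit the fact that the banana polynomial $\Psi_{\Gamma_n}$ is literally the sum of the generators of the ideal $\cI_{\cS_n}$. Expanding \eqref{PolyBn} gives
$$\Psi_{\Gamma_n}(t) = \sum_{i=1}^n \prod_{j\neq i} t_j = t_2\cdots t_n + t_1 t_3\cdots t_n + \cdots + t_1\cdots t_{n-1},$$
and these $n$ monomials are precisely the generators listed in \eqref{idealS} for $\cI_{\cS_n}$. First I would record this identification, so that the question reduces to comparing the common zero locus of these monomials with the zero locus of their sum.

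The decisive step is a positivity argument on the topological simplex. On $\sigma_n$ every coordinate satisfies $t_i \geq 0$, so each monomial $\prod_{j\neq i} t_j$ is nonnegative there. A sum of nonnegative real numbers vanishes if and only if each summand vanishes; hence, for $t\in \sigma_n$, one has $\Psi_{\Gamma_n}(t)=0$ if and only if $\prod_{j\neq i} t_j = 0$ for every $i=1,\dots,n$. The latter condition says exactly that every generator of $\cI_{\cS_n}$ vanishes at $t$, i.e. that $t$ lies set-theoretically on $\cS_n$. This yields the equality $\sigma_n \cap X_{\Gamma_n} = \sigma_n \cap \cS_n$ directly.

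To make the geometric content transparent I would then restate the condition combinatorially: $\prod_{j\neq i} t_j = 0$ for all $i$ holds precisely when at least two of the coordinates $t_j$ vanish. One direction is immediate, since if $t_a=t_b=0$ with $a\neq b$ then for each index $i$ at least one of $a,b$ differs from $i$, killing the corresponding monomial. For the converse, if at most one coordinate were zero then, taking $i$ to be that coordinate's index (or any index if none is zero), the monomial $\prod_{j\neq i} t_j$ would be a product of strictly positive factors, forcing $\Psi_{\Gamma_n}(t)>0$. Since by Lemma \ref{SandL} the scheme $\cS_n$ is the singularity subscheme of $\Sigma_n$, that is the locus where at least two coordinate hyperplanes meet, this confirms that the whole intersection sits inside $\cS_n\subset \Sigma_n$, as asserted.

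I expect no serious obstacle: the argument is essentially one line once the polynomial is recognized as the sum of the ideal generators. The only point requiring care is that the positivity uses the real, nonnegative nature of $\sigma_n$ and not the complex hypersurface $X_{\Gamma_n}$; over $\C$ the polynomial $\Psi_{\Gamma_n}$ has plenty of zeros with only one vanishing coordinate, so the coincidence with $\cS_n$ is special to the positive real locus and is exactly what drives the result.
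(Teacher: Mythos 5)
Your proof is correct and follows essentially the same route as the paper's: both rest on the positivity of $\Psi_{\Gamma_n}$ on the real simplex $\sigma_n$ together with its explicit monomial expansion, concluding that zeros occur exactly where at least two coordinates vanish, i.e.\ on $\sigma_n\cap\cS_n$. Your version merely fills in the details the paper leaves implicit (the identification of the monomials with the generators of $\cI_{\cS_n}$, and the sum-of-nonnegatives argument), which is a faithful elaboration rather than a different method.
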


\proof The polynomial $\Psi_\Gamma(t)\geq 0$ for $t\in \R_+^n$ and by
the explicit form \eqref{PolyBn} of the polynomial, one can see that 
zeros will only occur when at least two of the coordinates vanish, \ie
along the intersection of $\sigma_n$ with the scheme of singularities 
$\cS_n$ of $\Sigma_n$ (\cf Lemma \ref{XGammaS} below).
\endproof

One procedure to deal with this source of divergences is to work on
blowups of $\P^{n-1}$ along this singular locus (\cf \cite{BEK}, \cite{Blo}). 
In \cite{Mar} another possible method of regularization for integrals
of the form \eqref{BnFeynInt} which takes care of the singularities of the
integral on $\sigma_n$ (the pole of the Gamma function needs to be
addressed separately) was proposed, based on replacing the integral
along $\sigma_n$ with an integral that goes around the singularities
along the fibers of a circle bundle. In general, this type of
regularization procedures requires a detailed knowledge of the 
singularities of the hypersurface $X_\Gamma$ to be carried out, 
and that is one of the reasons for introducing invariants of singular 
varieties in the study of graph hypersurfaces.

\section{Characteristic classes and the Grothendieck ring}\label{GrCSMdefSec}

In order to understand the nature of the part of the cohomology of the
graph hypersurface complement that supports the period corresponding
to the Feynman integral (ignoring divergence issues momentarily), one
would like to decompose $\P^{n-1}\smallsetminus X_{\Gamma}$ into
simpler building blocks. As in \S 8 of \cite{BEK}, this can be done by
looking at the class $[X_\Gamma]$ of the graph hypersurface in the
Grothendieck ring of motives. One knows by the general result of
Belkale--Brosnam \cite{BeBro} that the graph hypersurfaces generate
the Grothendieck ring, hence they are quite arbitrarily complex as
motives, but one still needs to understand whether the part of the
decomposition that is relevant to the computation of the Feynman
integral might in fact be of a very special type, \eg a mixed Tate
motive as the evidence suggests. The family of graphs we consider here
is very simple in that respect. In fact, one can see very explicitly
that their classes in the Grothendieck ring are combinations of  
Tate motives (\cf the formula \eqref{clXGammaGrL} below). One can see
this also by looking at the Hodge structure. For the graph
hypersurfaces of the banana graphs this is described in \S 8 of
\cite{Blo}. 

Here we describe two ways of analyzing the graph hypersuraces through
an additive invariant, one as above using the class $[X_\Gamma]$ in the
Grothendieck ring, and the other using the pushforward of the
Chern--Schwartz--MacPherson class of $X_{\Gamma}$ to the Chow group 
(or homology) of the ambient projective space $\P^{n-1}$. 
While the first does not depend on an
ambient space, the latter is sensitive to the specific embedding of
$X_{\Gamma}$ in the projective space $\P^{n-1}$, hence it might
conceivably carry a little more information that is useful in relation
to the computation of the Feynman integral on $\P^{n-1}\smallsetminus
X_{\Gamma}$. We recall here below a few basic facts about both
constructions. The reader familiar with these generalities can 
skip directly to the next section.

\subsection{The Grothendieck ring}\label{GrDefSec}

Let $\cV_K$ denote the category of algebraic varieties over a field $K$.
The Grothendieck ring $K_0(\cV_K)$ is the abelian group generated by
isomorphism classes $[X]$ of varieties, with the relation
\begin{equation}\label{relGrV}
[X]=[Y]+[X\smallsetminus Y],
\end{equation}
for $Y\subset X$ closed. It is made into a ring by the product
$[X\times Y]=[X][Y]$. 

An {\em additive invariant} is a map $\chi : \cV_K \to R$, with values in a
commutative ring $R$, satisfying $\chi(X)=\chi(Y)$ if $X\cong Y$ are
isomorphic, $\chi(X)=\chi(Y)+\chi(X\smallsetminus Y)$ for $Y\subset X$
closed, and $\chi(X\times Y)=\chi(X)\chi(Y)$. The Euler characteristic
is the prototype example of such an invariant. Assigning an
additive invariant with values in $R$ is equivalent to assigning 
a ring homomorphism $\chi: K_0(\cV_K) \to R$. 

Let $\cM_K$ be the pseudo-abelian category of (Chow) motives over $K$.
We write the objects of $\cM_K$ in the form $(X,p,m)$, with $X$ a
smooth projective variety over $K$, $p=p^2\in \End(X)$ a projector,
and $m\in\Z$ accounting for the twist by powers of the Tate motive
$\Q(1)$. Let $K_0(\cM_K)$ denote the Grothendieck ring of the category
$\cM_K$ of motives. 
The results of \cite{GilSou} show that, for $K$ of characteristic zero,
there exists an additive invariant $\chi: \cV_K \to K_0(\cM_K)$. This
assigns to a smooth projective variety $X$ the class
$\chi(X)=[(X,id,0)] \in K_0(\cM_K)$, while for $X$ a general variety 
it assigns a complex $W(X)$ in the category of complexes over $\cM_K$,
which is homotopy equivalent to a bounded complex whose class in
$K_0(\cM_K)$ defines the value $\chi(X)$. This defines a ring
homomorphism 
\begin{equation}\label{GrVGrMot}
\chi: K_0(\cV_K) \to K_0(\cM_K).
\end{equation}
If $\bL$ denotes the class $\bL=[\A^1] \in K_0(\cV_K)$ then its image
in $K_0(\cM_K)$ is the Lefschetz motive
$\bL=\Q(-1)=[(\Spec(K),id,-1)]$. Since the Lefschetz motive is
invertible in $K_0(\cM_K)$, its inverse being the Tate motive $\Q(1)$,
the ring homomorphism \eqref{GrVGrMot} induces a ring homomorphism
\begin{equation}\label{GrVGrMot2}
\chi: K_0(\cV_K)[\bL^{-1}] \to K_0(\cM_K).
\end{equation}

Thus, in the following we can either regard the classes $[X_\Gamma]$
of the graph hypersurfaces in the Grothendieck ring of varieties
$K_0(\cV_K)$ or, under the homomorphism \eqref{GrVGrMot}, as elements
in the Grothendieck ring of motives $K_0(\cM_K)$. We will no longer make
this distinction explicit in the following.

\subsection{CSM classes as a measure of
singularities} \label{CSMdefSec}

The {\em Chern class\/} of a nonsingular complete variety $V$ is the 
`total homology Chern class' of its tangent bundle. We write
$c(V):=c(TV)\cap [V]_*$ to indicate the result of applying the Chern
class of the tangent bundle of $V$ to the fundamental class $[V]_*$
of $V$. (We use the notation $[V]_*$ rather than the more common $[V]$
in order to avoid any confusion with the class of $V$ in the Grothendieck
group.)

The class $c(V)$ resides naturally in the Chow group $A_*V$. For the 
purpose of this paper, the reader will miss nothing by replacing $A_*V$
with ordinary homology.

The Chern class of a variety $V$ is a class of evident geometric 
significance: for example, the degree
of its zero-dimensional component agrees with the topological Euler 
characteristic of $V$. This follows essentially from 
the Poincar\'e-Hopf theorem:
\[
\int c(TV)\cap [V]_* = \chi(V)\quad.
\]

It is natural to ask whether there are analogs of the Chern class defined
for possibly {\em singular\/} varieties, for which a tangent bundle is not 
necessarily available.

Somewhat surprisingly, one finds that there are several possible definitions, 
each `natural' for different reasons, and all agreeing with each other in
the nonsingular case. If $X$ is a complete intersection in a nonsingular 
variety $V$, it is reasonable to consider the Fulton class
\[
c_{vir}(X):=\frac{c(TV)}{c(N_XV)}\cap [X]_*\quad,
\]
where $N_XV$ denotes the normal bundle to $X$ in $V$. Up to natural
identifications, this is the Chern class of a smoothing of $X$ (when a
smoothing exists), and in particular it agrees with $c(X)$ if $X$ is 
nonsingular. It is an interesting fact that this class is independent of the 
realization of $X$ as a complete intersection: that is, it is independent
of the ambient nonsingular variety~$V$. In other words, $\frac{c(TV)}
{c(N_XV)}$ behaves as the class of a `virtual tangent bundle' to $X$.
Its definition can in fact be extended (and in more than one way) to 
arbitrary varieties, see~\S 4.2.6 in~\cite{Fulton}.

The class $c_{vir}(X)$ is in a sense unaffected by the 
singularities of $X$: for a hypersurface $X$ in a nonsingular variety~$V$,
it is determined by the class of $X$ as a divisor in $V$. 

A much more refined invariant is the {\em Chern-Schwartz-MacPherson\/}
(CSM) class of $X$, which depends more crucially on the singularities
of $X$, and which we will use as a measure of the singularities by
comparison with $c_{vir}(X)$.

The name of the class retains some of its history. In the mid-60s,
M.-H.~Schwartz (\cite{MHSchwartz1}, \cite{MHSchwartz2}) introduced a 
class extending to singular varieties Poincar\'e-Hopf-type results, 
by studying tangent frames emanating radially from the singularities. 
Independently of Schwartz' work, Grothendieck and Deligne 
conjectured a theory of characteristic classes fitting a tight functorial 
prescription, and in the early 70s R.~MacPherson constructed a class 
satisfying this requirement (\cite{MacPher}). It was later proved
by J.-P.~Brasselet and M.-H.~Schwartz (\cite{BraSchwa}) that the 
classes agree.

In this paper we denote the Chern-Schwartz-MacPherson class of a 
singular variety $X$ simply by $c(X)$ (the notation $c_{SM}(X)$ is
frequently used in the literature).

The properties satisfied by CSM classes may be summarized as
follows. First of all, $c(X)$ must agree with its namesake when $X$
is a complete nonsingular variety: that is, $c(X)=c(TX)\cap [X]_*$
in this case. Secondly, associate with every variety $X$ an abelian
group $F(X)$ of `constructible functions': elements of $F(X)$ are
finite integer linear combinations of functions $\one_Z$ (defined
by $\one_Z(p)=1$ if $p\in Z$, $\one_Z(p)=0$ if $p\not\in Z$), for
subvarieties $Z$ of $X$. The assignment $X\mapsto F(X)$ is
covariantly functorial: for every proper map $Y\to Z$ there is a
push-forward $f_*: F(Y) \to F(X)$, defined by taking topological
Euler characteristic of fibers. More precisely, for $W\subseteq Y$
a closed subvariety, one defines $f_*(\one_W)=
\chi(W\cap f^{-1}(p))$, and extends this definition to $F(Y)$ by
linearity.

Grothendieck and Deligne conjectured the existence of a unique
natural transformation $c_*$ from the functor $F$ to the homology
functor such that $c_*(\one_X)=c(TX)\cap [X]_*$ if $X$ is nonsingular.
MacPherson constructed such a transformation in \cite{MacPher}.
The CSM class of $X$ is then defined to be $c(X):=c_*(\one_X)$.
Resolution of singularities in characteristic zero 
implies that the transformation is unique,
and in fact determines $c(X)$ for any $X$.

As an illustration of the fact that the CSM class assembles interesting
invariants of a variety, apply the property just reviewed to the constant
map $f: X \to \{pt\}$. 
In this case, the naturality property reads $f_* c_*(\one_X) = 
c_* f_*(\one_X)$, that is,
\[
f_* c(X) = c_* (\chi(X) \one_{pt})
\]
(using the definition of push-forward of constructible function). Taking
degrees, this shows that
\[
\int c(X) = \chi(X)\quad,
\]
precisely as in the nonsingular case: the degree of the CSM class
of a (possibly) singular variety equals its topological Euler
characteristic. 

It follows that, if $X$ is a hypersurface with one isolated
singularity, then the degree of the class
\[
Mil(X):=c(X)-c_{vir}(X)
\]
equals (up to a sign) the {\em Milnor number\/} of the singularity. 

For hypersurfaces with arbitrary singularities, as the graph
hypersurfaces we consider in the present paper typically are, the
degree of the CSM class equals Parusi\'nski's generalization of the
Milnor number, \cite{Parus}. The class $Mil(X)$ is
called `Milnor class', and has been studied rather carefully for $X$
a complete intersection, \cite{BLSS}. 

For a hypersurface, the Milnor class carries essentially the same 
information as the Segre class of the singularity subscheme of $X$ 
(see \cite{Alu5}). In this sense, it is a measure of the singularities of
the hypersurface. For example, the largest dimension of a nonzero
term in the Milnor class equals the dimension of the singular
locus of $X$.

The graph hypersurfaces in this paper are hypersurfaces of projective
space, hence it is convenient to view the CSM class and the Milnor class
of $X$ as classes in projective space. This pushforward is understood
in the table in \S \ref{BanSec}, and will be often understood in the 
explicit computations of \S \ref{CSMsect}. 

\subsection{CSM classes versus classes in the
Grothendieck ring}\label{CSMvsGrSec}

CSM classes are defined in \cite{MacPher} by relating them to
a different class, called `Chern-Mather class', by means of a local
invariant of singularities known as the `local Euler obstruction'.
As noted above, once the existence of the classes has been established,
then their computation may be performed by systematic use of 
resolution of singularities and computations of Euler characteristics
of fibers. 

The following direct construction streamlines such computations, by
avoiding any computation of local invariants or of Euler
characteristics. This is observed in \cite{Alu06a} and \cite{Alu2}, 
where it is used to provide an alternative proof of the Grothendieck-Deligne 
conjecture, and as the basis of a generalization of the functoriality
of CSM classes to possibly non-proper morphisms. 

Given a variety $X$, let $Z_i$ be a finite collection of {\em locally closed,
nonsingular\/} subvarieties such that $X = \amalg_i Z_i$.
For each $i$, let $\nu_i: W_i \to \overline Z_i$ be a resolution of 
singularities of the closure of $Z_i$ in $X$, such that the complement
$\overline Z_i\smallsetminus Z_i$ pulls back to a divisor with normal
crossings $E_i$ on $W_i$. Then
\[
c(X)=\sum_i \nu_{i*} c(TW_i(-\log E_i))\cap [W_i]_*\quad.
\]
Here the bundle $TW_i(-\log E_i)$ is the dual of the bundle $\Omega^1_{W_i}
(\log E_i)$ of differential forms on $W_i$ with logarithmic poles along
$E_i$. Each term
\begin{equation}\label{nuiTW}
\nu_{i*} c(TW_i(-\log E_i))\cap [W_i]_*
\end{equation}
computes the contribution $c(\one_{Z_i})$ to the CSM class of $X$
due to the (possibly) non-compact subvariety $Z_i$.

We will use this formulation in terms of duals of sheaves of forms
with logarithmic poles to obtain the results of \S \ref{CSMsect} below. 

By abuse of notation, we denote by $c(Z)\in A_*V$ the class so defined,
for any locally closed subset $Z$ of a large ambient variety $V$.
With this notion in hand, note that if $Y\subseteq X$ are (closed)
subvarieties of $V$, then
\[
c(X)=c(Y)+c(X\smallsetminus Y)\quad,
\]
where push-forwards are, as usual, understood. This relation is
very reminiscent of the basic relation \eqref{relGrV} that 
holds in the Grothendieck group of varieties (see~\S \ref{GrDefSec}). 
At the same time, CSM classes satisfy a `product formula' analogous to
the definition of product in the Grothendieck ring
(\cite{Kwiec}, \cite{Alu06a}).

Moreover, CSM classes satisfy an
`embedded inclusion-exclusion' principle. Namely, if $X_1$ and $X_2$ are
subvarieties of a variety $V$, then
\[
c(X_1\cup X_2) = c(X_1) + c(X_2) - c(X_1\cap X_2) .
\]
This is clear both from the construction presented above and
from the basic functoriality property.

In short, there is an intriguing parallel between operations in 
the Grothendieck group of varieties and manipulations of
CSM classes. This parallel cannot be taken too far, since
the `embedded' Chern-Schwartz-MacPherson treated here
is not an invariant of isomorphism classes.

\begin{ex}\label{exclEx} {\rm
Let $Z_1$ and $Z_2$ be, respectively, a linearly embedded $\P^1$ and a
nonsingular conic in $\P^2$. Denoting by $H$ the hyperplane
class in $\P^2$, we find
$$ 
c(Z_1) = (H+2H^2)\cdot [\P^2]_*   \ \ \ \text{ and } \ \ \ 
c(Z_2) = (2H+2H^2)\cdot [\P^2]_*
$$
while of course $[Z_1]=[Z_2]$ as classes in the Grothendieck group.}
\end{ex}

Thus, in particular, the CSM class $c(X)$ does not define an additive
invariant in the sense of \S \ref{GrDefSec} and does not factor
through the Grothendieck group, as the example above shows.

\smallskip

In certain situations it is however possible to establish a sharp 
correspondence between CSM classes and classes in the Grothendieck
group. For the next result, we adopt the rather unorthodox notation
$H^{-r}$ for the class $[\P^r]_*$ of a linear subspace of a given
projective space. Thus, $1$ stands for the class of a point, $[\P^0]_*$,
and the negative exponents are consistent with the fact that if $H$
denotes the hyperplane class then $H^r \cdot [\P^r]_*= [\P^0]_*$.

\begin{prop}\label{CSMGlinprop}
Let $X$ be a subset of projective space obtained by unions,
intersections, differences of linearly embedded subspaces.
With notation as above, assume
\[
c(X)=\sum a_i H^{-i}\quad.
\]
Then the class of $X$ in the Grothendieck group of varieties
equals
\[
[X]= \sum a_i \bT^i ,
\]
where $\bT=[\bG_m]$ is the class of the multiplicative group, see \S
\ref{MotSect}. 
\end{prop}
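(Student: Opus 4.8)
The plan is to exhibit $[\,\cdot\,]$ and the CSM class as two manifestations of the same additive invariant on the Boolean algebra of sets built from linear subspaces, and then to match the two on a single linear subspace, where everything is explicit. Let $\phi$ be the group homomorphism determined by $\phi(H^{-i})=\bT^i$; this is well defined since the classes $H^{-i}=[\P^i]_*$ are linearly independent in the Chow group of the ambient $\P^{N}$. Because $c(X)=\sum a_i H^{-i}$ forces $\phi(c(X))=\sum a_i\bT^i$, the assertion to be proved is exactly the identity $\phi(c(X))=[X]$.

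First I would verify this identity on a linearly embedded $\P^r$. Since $c(T\P^r)=(1+H)^{r+1}$, capping with $[\P^r]_*$ and using $H^j\cap[\P^r]_*=[\P^{r-j}]_*=H^{-(r-j)}$ gives $c(\P^r)=\sum_{i=0}^{r}\binom{r+1}{i+1}H^{-i}$, so that $\phi(c(\P^r))=\sum_{i=0}^{r}\binom{r+1}{i+1}\bT^i$. On the Grothendieck side, writing $\bL=1+\bT$ one has $[\P^r]=1+\bL+\cdots+\bL^{r}=\bigl((1+\bT)^{r+1}-1\bigr)/\bT$, and expanding the binomial yields precisely $\sum_{i=0}^{r}\binom{r+1}{i+1}\bT^i$. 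Hence $\phi(c(\P^r))=[\P^r]$, which is the base case.

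Next I would reduce the general case to this one. The set $X$ lies in the Boolean algebra generated by finitely many linear subspaces $L_1,\dots,L_m$; let $\cL$ be the finite family of all their intersections, each again a linear subspace. At the level of constructible functions one obtains $\one_X=\sum_{L\in\cL}n_L\,\one_L$ for suitable integers $n_L$, by repeatedly applying $\one_{A\cup B}=\one_A+\one_B-\one_{A\cap B}$, $\one_{A\setminus B}=\one_A-\one_{A\cap B}$, and $\one_A\one_B=\one_{A\cap B}$, together with the fact that intersections stay inside $\cL$. Since the natural transformation $c_*$ is a homomorphism on constructible functions, this gives $c(X)=\sum_L n_L\,c(L)$; since $[\,\cdot\,]$ is additive by the scissor relation \eqref{relGrV}, the same combination yields $[X]=\sum_L n_L\,[L]$. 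Applying $\phi$ and the base case then closes the argument: $\phi(c(X))=\sum_L n_L\,\phi(c(L))=\sum_L n_L\,[L]=[X]$.

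The main obstacle is the well-definedness in this last reduction, namely the passage from the pointwise identity $\one_X=\sum_L n_L\one_L$ to the identity $[X]=\sum_L n_L[L]$ in the Grothendieck ring. For the CSM side this is automatic, as $c_*$ is literally a homomorphism out of the group $F$ of constructible functions; for the Grothendieck side one must know that every $\Z$-linear relation among indicator functions of locally closed sets is a consequence of the scissor relations, so that $[\,\cdot\,]$ likewise factors through $F$. This is the point where a stratification argument is needed. Once it is in place, the remainder of the proof is pure linearity together with the elementary binomial identity of the second step.
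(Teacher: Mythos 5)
Your proof is correct and follows essentially the same route as the paper's: verify the correspondence on a linearly embedded $\P^r$ by the binomial computation (the paper's identity \eqref{PrId}), then extend to all sets built from linear subspaces using that both the embedded CSM class and the Grothendieck class satisfy inclusion--exclusion. The one point you single out as an obstacle---that pointwise $\Z$-linear relations among indicator functions of constructible sets descend to relations in $K_0(\cV_K)$---is the standard common-refinement (stratification) fact you name, and the paper simply takes this additivity for granted in its one-line extension step.
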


Thus, adopting a variable $T=H^{-1}$ in the CSM environment,
and $T=\bT$ in the Grothendieck group environment, the classes
corresponding to subsets as specified in the statement would match
precisely.

\begin{proof}
The formula holds for a linearly embedded $X=\P^r$, since
\[
c(\P^r) = ((1+H)^{r+1}-H^{r+1})\cdot [\P^r]_*
=((1+H)^{r+1}-H^{r+1})\cdot H^{-r}=\frac{(1+H^{-1})^{r+1}-1}{H^{-1}}
\]
and (see \eqref{PrId} below)
\[
[\P^r] = \frac{(1+\bT)^{r+1}-1}{\bT}\quad.
\]
Since embedded CSM classes and classes in the Grothendieck group 
both satisfy inclusion-exclusion, this relation extend to all sets obtained
by ordinary set-theoretic operations performed on linearly
embedded subspaces, and the statement follows.
\end{proof}

Proposition~\ref{CSMGlinprop} applies, for example, to the case of
hyperplane arrangements in $\P^N$: for a hyperplane arrangement,
the information carried by the class in the Grothendieck group of 
varieties is precisely the same as the information carried by the
embedded CSM class. These classes reflect in a subtle way the
combinatorics of the arrangement. 

\smallskip

In a more general setting, it is still possible to enhance the information 
carried by the CSM class in such a way as to establish a tight 
connection between the two environments. For example, CSM 
classes can be treated within a framework with strong similarities
with motivic integration, \cite{Alu1}.

In any case, one should expect that, in many examples, the work 
needed to compute a CSM class should also lead to a computation
of a class in the Grothendieck group. The computations in  \S
\ref{MotSect} and \S \ref{CSMsect} in this paper will confirm 
this expectation for the hypersurfaces corresponding to banana
graphs.

\section{Banana graphs and their motives}\label{MotSect}

In this section we give an explicit formula for the classes 
$[X_{\Gamma_n}]$ of the banana graph hypersurfaces $X_{\Gamma_n}$ 
in the Grothendieck ring. The procedure we adopt to carry out
the computation is the following. We use the Cremona transformation of
\eqref{GraphCrem}. Consider the algebraic simplex $\Sigma_n$ 
placed in the $\P^{n-1}$ on the right-hand-side of the diagram
\eqref{GraphCrem}. The complement of this $\Sigma_n$ in the graph
hypersurface $X_{\Gamma_n}$ is isomorphic to the complement of the 
same union $\Sigma_n$ in the corresponding hyperplane $\cL$ in the
$\P^{n-1}$ on the left-hand-side of \eqref{GraphCrem}, by Lemma
\ref{BananaDual} above. So this provides the easy part of the
computation, and one then has to give explicitly the classes of
the intersections of the two hypersurfaces with the union of the
coordinate hyperplanes. The final formula for the class
$[X_{\Gamma_n}]$ has a simple expression in terms of the classes of
tori $\bT^k$, with $\bT:=[\A^1]-[\A^0]$ the class of the
multiplicative group $\bG_m$. Then $\bT^{n-1}$ is the class of the
complement of $\Sigma_n$ inside $\P^{n-1}$.

In the following we let $1$ denote the class of a point $[\A^0]$. We
use the standard notation $\bL$ for the class $[\A^1]$ of the affine
line (the Lefschetz motive). We also denote, as above, by $\Sigma_n$
the union of coordinate hyperplanes in $\P^{n-1}$ and by $\cS_n$
its singularity locus. 

First notice the following simple identity in the Grothendieck ring.
\begin{equation}\label{PrId}
[\P^r]=\sum_{i=0}^r \bL^r=\frac{1-\bL^{r+1}}
{1-\bL}=\frac{(1+\bT)^{r+1}-1}{\bT}.
\end{equation}
This expression can be thought of as taking place in a localization 
of the Grothendieck ring, but in fact this is not really necessary 
if we take these fractions as just shorthand for their 
unambiguous expansions. 

We introduce the following notation. Suppose given a class $C$ in the
Grothendieck ring which can be written in the form
\begin{equation}\label{classCPn}
 C = a_0 [\P^0] + a_1 [\P^1] + a_2 [\P^2] + \cdots 
\end{equation}
To such a class we assign a polynomial
\begin{equation}\label{fCpolyn}
f(P)=a_0 + a_1 P + a_2 P^2 + \cdots
\end{equation}

\begin{rem}\label{noprodPi} {\rm 
Notice that the formal variable $P$ does not define an element in the
Grothendieck ring, since one sees easily that $P^i P^j\ne
P^{i+j}$. In fact, the variables $P^i$ satisfy a different
multiplication rule, which we denote by $\bullet$ and which is given by
\begin{equation}\label{prodPi}
P^i \bullet P^j = P^{i+j}+P^{i+j-1}+\cdots +P^j-P^{i-1}-\cdots-1  
\end{equation}
and which recovers in this way the class $[\P^i\times
\P^j]$. This follows from Lemma \ref{ProdPi}, by converting each of the
two factors into the corresponding expressions in $\bT$, 
multiplying these as classes in the Grothendieck ring, and then 
converting the result back in terms of the variables $P^i$. }
\end{rem}

\begin{lem}\label{ProdPi}
Let $C$ be a class in the Grothendieck ring that can be written in
terms of classes of projective spaces in the form \eqref{classCPn}.
One can convert it into a function of the class $\bT$ of the form
\begin{equation}\label{Ctorus}
C = \frac{(1+\bT) f(1+\bT) - f(1)}\bT,
\end{equation}
where $f$ is as in \eqref{fCpolyn}.
\end{lem}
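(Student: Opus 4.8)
The plan is to prove the identity by substituting the closed form for $[\P^i]$ supplied by \eqref{PrId} directly into the given expansion of $C$, and then to recognize the resulting expression as an evaluation of the polynomial $f$. Since both sides are expressions in the single class $\bT$, the whole argument reduces to a formal manipulation of power series, so no geometric input beyond \eqref{PrId} is needed.

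First I would recall from \eqref{PrId} that, for each index $i$,
\[
[\P^i] = \frac{(1+\bT)^{i+1} - 1}{\bT}.
\]
Feeding this into the hypothesis $C = \sum_i a_i [\P^i]$ of \eqref{classCPn} gives
\[
C = \frac{1}{\bT} \sum_i a_i\left((1+\bT)^{i+1} - 1\right).
\]
Next I would split the sum into the two pieces $\sum_i a_i (1+\bT)^{i+1}$ and $\sum_i a_i$. Factoring $(1+\bT)$ out of the first piece and invoking the definition $f(P) = \sum_i a_i P^i$ from \eqref{fCpolyn} identifies it with $(1+\bT)\,f(1+\bT)$, while the second piece is simply $f(1)$. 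Assembling the two yields exactly the claimed formula \eqref{Ctorus},
\[
C = \frac{(1+\bT)\,f(1+\bT) - f(1)}{\bT}.
\]

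Because this is in essence a one-line substitution, I do not expect a genuine obstacle; the only delicate point is the interpretation of the division by $\bT$. Following the convention recorded immediately after \eqref{PrId}, the rational expression is to be read as shorthand for its polynomial expansion, so I would verify that the numerator $(1+\bT)\,f(1+\bT) - f(1)$ is actually divisible by $\bT$ in $\Z[\bT]$. This is immediate: evaluating the numerator at $\bT = 0$ gives $f(1) - f(1) = 0$, so $\bT$ divides it and the quotient is a bona fide polynomial in $\bT$, hence a well-defined element of the Grothendieck ring. This confirms that the passage back and forth between the variables $P^i$ and $\bT$ is legitimate and completes the argument.
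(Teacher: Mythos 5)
Your proof is correct and follows essentially the same route as the paper's: both substitute the expression \eqref{PrId} for $[\P^r]$ in terms of $\bT$ into the expansion \eqref{classCPn} and use linearity to recognize the result as $((1+\bT)f(1+\bT)-f(1))/\bT$. Your additional check that the numerator vanishes at $\bT=0$, so the division by $\bT$ is legitimate as a polynomial identity, is a nice touch of care consistent with the paper's convention stated after \eqref{PrId}.
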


\proof One obtains \eqref{Ctorus} from \eqref{classCPn}
using the expression \eqref{PrId} of $[\P^r]$ in terms of
$\bT$. In fact, \eqref{Ctorus} gives the expression of $[\P^r]$ 
as a function of $\bT$ when applied to $f(P)=P^r$. 
\endproof

Conversely, we have a similar way to convert classes in the
Grothendieck ring that can be expressed as a function of the torus
class into a function of the classes of projective spaces.

\begin{lem}\label{gClassT}
Suppose given a class $C$ in the Grothendieck ring that can be
written as a function of the torus class $\bT$, by a polynomial
expression $C=g(\bT)$. Then one obtains an expression of $C$ in terms
of the classes of projective spaces $[\P^r]$ by first taking the
function
\begin{equation}\label{gPfunct}
 \frac{(P-1)g(P-1)+g(-1)}P
\end{equation}
and then replacing $P^r$ by the class $[\P^r]$ 
in the expansion of \eqref{gPfunct} as a polynomial
in the formal variable $P$. 
\end{lem}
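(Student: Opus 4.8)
The plan is to read this lemma as the inverse of Lemma~\ref{ProdPi}. That lemma sends a polynomial $f(P)=\sum_i a_i P^i$ to the class $\sum_i a_i[\P^i]$ and records, via the identity \eqref{PrId}, that this class equals
$$
\frac{(1+\bT)\,f(1+\bT)-f(1)}{\bT}\quad.
$$
Thus the substitution $P^i\mapsto[\P^i]$ corresponds to evaluating the above expression, i.e. to applying the operator $f\mapsto \frac{(1+\bT)f(1+\bT)-f(1)}{\bT}$. To prove the present statement I would take the candidate $f(P)=\frac{(P-1)g(P-1)+g(-1)}{P}$ and show that this operator carries it back to $g(\bT)$; the conclusion then follows at once from Lemma~\ref{ProdPi}.

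Before applying the operator I would verify that the prescribed $f$ is genuinely a polynomial, so that the recipe makes sense: evaluating the numerator $(P-1)g(P-1)+g(-1)$ at $P=0$ yields $(-1)\,g(-1)+g(-1)=0$, so $P$ divides it and $f\in\Z[P]$. I would then substitute $P=1+\bT$ into the definition of $f$; since $P-1=\bT$ this gives $(1+\bT)f(1+\bT)=\bT\,g(\bT)+g(-1)$, while evaluating at $P=1$ gives directly $f(1)=g(-1)$. Feeding these into the operator above, the two copies of $g(-1)$ cancel and there remains
$$
\frac{(1+\bT)f(1+\bT)-f(1)}{\bT}=\frac{\bT\,g(\bT)}{\bT}=g(\bT)\quad.
$$
By Lemma~\ref{ProdPi} the class obtained from $f$ by replacing each $P^i$ with $[\P^i]$ therefore equals $g(\bT)=C$, which is exactly the assertion.

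I do not expect a real obstacle: the entire content is the algebraic identity just displayed, together with the elementary polynomiality check. The only point worth a remark, if one wishes to present the conversions of Lemmas~\ref{ProdPi} and~\ref{gClassT} as genuinely mutually inverse rather than merely one-sided, is that both operators are $\Z$-linear and preserve degree (the forward operator raises $f$ to degree $\deg f+1$ and then divides by $\bT$, and symmetrically for the backward one), so that a one-sided inverse on each graded piece is automatically two-sided.
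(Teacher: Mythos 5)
Your proof is correct and follows essentially the same route as the paper, which simply says the formula \eqref{gPfunct} is obtained by solving for $f$ in \eqref{Ctorus}; your verification that the candidate $f$ satisfies $\frac{(1+\bT)f(1+\bT)-f(1)}{\bT}=g(\bT)$ is exactly that inversion, carried out explicitly. The added check that $P$ divides the numerator (so $f$ is genuinely a polynomial) is a worthwhile detail the paper leaves implicit.
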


\proof The result is obtained by solving for $f$ in \eqref{Ctorus},
which yields the formula \eqref{gPfunct}. 
\endproof

\smallskip

Next we define an operation on classes of the form $C=g(\bT)$, which
one can think of as ``taking a hyperplane section''. Notice that
literally taking a hyperplane section is not a well defined operation 
at the level of the Grothendieck ring, but it does make sense on
classes that are constructed from linearly embedded subspaces of a
projective space, as is the case we are considering.

\begin{lem}\label{hypersec}
The transformation
\begin{equation}\label{HgT}
\cH: g(\bT) \mapsto \frac{g(\bT)-g(-1)}{\bT+1}
\end{equation}
gives an operation on the set of classes in the 
Grothendieck ring that are polynomial functions of the torus class
$\bT$. In terms of classes $[\P^r]$ it corresponds to mapping
$[\P^0]$ to zero and $[\P^r]$ to $[\P^{r-1}]$ for $r\geq 1$.
\end{lem}

\proof One can see that, for $g(\bT)=[\P^r]=\frac{(1+\bT)^{r+1}-1}{\bT}$, 
we have 
$$\frac{g(\bT)-g(-1)}{\bT+1}=\frac{\frac{(1+\bT)^{r+1}-1}\bT-1}
{\bT+1}=\frac{(1+\bT)^r-1}\bT=[\P^{r-1}],$$
or $0$ if $r=0$, so that the operation \eqref{HgT} indeed corresponds
to taking a hyperplane section. The operation is linear in $g$, viewed
as a linear combination of classes of projective spaces, so it
works for arbitrary $g$. 
\endproof

We then have the following preliminary result.

\begin{lem}\label{classSigma}
The class of $\Sigma_{r+1}\subset \P^r$ in the Grothendieck ring is of the form
\begin{equation}\label{Sigmanclass}
[\Sigma_{r+1}]=\frac{(1+\bT)^{r+1}-1-\bT^{r+1}}{\bT}
=\sum_{i=1}^r \binom{r+1}i \bT^{r-i}.
\end{equation}
Intersecting with a transversal hyperplane $\cL$ then gives
\begin{equation}\label{SigmaLclass}
[\cL\cap \Sigma_{r+1}] = \frac{(1+\bT)^r-1}\bT - \bT^{r-1}+\bT^{r-2}-\bT^{r-3}+
\cdots \pm 1 .
\end{equation}
\end{lem}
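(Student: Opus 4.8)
The plan is to treat the two displayed formulas separately, deriving the first by a direct complement computation and the second by applying the hyperplane-section operation $\cH$ of Lemma~\ref{hypersec}.

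For the first formula, I would observe that the complement $\P^r\smallsetminus\Sigma_{r+1}$ of the union of coordinate hyperplanes \eqref{algSimpl} is precisely the open torus $\{(t_1:\cdots:t_{r+1})\mid t_i\neq 0\ \forall i\}\cong\bG_m^r$, whose class is $\bT^r$. The defining relation \eqref{relGrV} of the Grothendieck ring then gives $[\Sigma_{r+1}]=[\P^r]-\bT^r$, and substituting the expression \eqref{PrId} for $[\P^r]$ yields
\[
[\Sigma_{r+1}]=\frac{(1+\bT)^{r+1}-1}{\bT}-\bT^r=\frac{(1+\bT)^{r+1}-1-\bT^{r+1}}{\bT}.
\]
The second equality in \eqref{Sigmanclass} is then just the binomial expansion of $(1+\bT)^{r+1}$: the terms $i=0$ and $i=r+1$ cancel against $-1$ and $-\bT^{r+1}$, and dividing the surviving sum $\sum_{i=1}^r\binom{r+1}{i}\bT^i$ by $\bT$ gives $\sum_{i=1}^r\binom{r+1}{i}\bT^{i-1}$, which equals $\sum_{i=1}^r\binom{r+1}{i}\bT^{r-i}$ after the reindexing $\binom{r+1}{i}=\binom{r+1}{r+1-i}$.

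For the second formula, the key point is that intersecting with $\cL$ realizes the formal operation $\cH$. Here $\cL$ is the hyperplane \eqref{Lhyp}; by parts (3) and (4) of Lemma~\ref{SandL} it meets every stratum of $\Sigma_{r+1}$ transversally (and misses all coordinate points, since $\sum t_i=1$ there), so $\cL$ behaves as a generic hyperplane with respect to the stratification of $\Sigma_{r+1}$ by coordinate subspaces. Consequently $[\cL\cap\Sigma_{r+1}]=\cH([\Sigma_{r+1}])$, and I would compute the right-hand side by applying $\cH$ to the closed form $[\Sigma_{r+1}]=[\P^r]-\bT^r$ term by term. By Lemma~\ref{hypersec} one has $\cH([\P^r])=[\P^{r-1}]=\frac{(1+\bT)^r-1}{\bT}$, while the definition \eqref{HgT} gives
\[
\cH(\bT^r)=\frac{\bT^r-(-1)^r}{\bT+1}=\bT^{r-1}-\bT^{r-2}+\cdots+(-1)^{r-1}.
\]
Subtracting, and negating this last alternating polynomial, produces exactly the expression \eqref{SigmaLclass}.

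The main obstacle is the conceptual step, not the algebra: one must justify that applying the formally-defined operation $\cH$ to $[\Sigma_{r+1}]$ computes the class of the genuine intersection with the \emph{specific} hyperplane $\cL$, since $\cH$ only models a hyperplane section that is generic with respect to all the linear strata. This is precisely where the transversality in Lemma~\ref{SandL} is essential. As a safeguard that avoids this point entirely, I would be prepared to verify the formula directly: after normalizing $t_{r+1}=1$, the complement $\cL\smallsetminus\Sigma_{r+1}$ is the affine torus slice $\{(t_1,\dots,t_r)\in\bG_m^r\mid\sum t_i=-1\}$, whose class computes by inclusion--exclusion over the coordinate vanishing loci to $\sum_{k=0}^{r-1}(-1)^k\binom{r}{k}\bL^{r-1-k}=\frac{\bT^r-(-1)^r}{\bT+1}$; subtracting this from $[\cL]=[\P^{r-1}]$ recovers \eqref{SigmaLclass} and confirms the answer independently of the operation $\cH$.
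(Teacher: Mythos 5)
Your proposal is correct and follows essentially the same route as the paper: the first formula comes from $[\Sigma_{r+1}]=[\P^r]-\bT^r$ (complement is the open torus) together with \eqref{PrId}, and the second from applying the hyperplane-section operation $\cH$ of Lemma~\ref{hypersec} to that class, exactly as in the paper's proof. Your added transversality justification (via Lemma~\ref{SandL}) and the independent affine inclusion--exclusion check are sound but not needed beyond what the paper already records in the discussion preceding Lemma~\ref{hypersec}.
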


\proof The class of the complement of $\Sigma_{r+1}$ in $\P^r$ is the
torus class $\bT^r$. In fact, the complement of $\Sigma_{r+1}$ consists 
of all $(r+1)$-tuples $(1{:}*{:}\cdots{:}*)$, where each $*$ is a
nonzero element of the ground field. It then follows directly that
the class of $\Sigma_{r+1}$ has the form \eqref{Sigmanclass}, using
the expression \eqref{PrId} for the class $[\P^r]$. One then
applies the transformation $\cH$ of \eqref{HgT} to obtain
$$ \begin{array}{ll}
[\cL\cap \Sigma_{r+1}] &=\left(\frac{(1+\bT)^{r+1}-1-\bT^{r+1}}\bT
-\frac{-1-(-1)^{r+1}}{-1}\right)/(\bT+1) \\[3mm]
&=\frac{(1+\bT)^r-1}\bT - \frac{\bT^r-(-1)^r}{\bT+1}\end{array} $$
from which \eqref{SigmaLclass} follows. 
\endproof

\begin{center}
\begin{figure}
\includegraphics[scale=0.4]{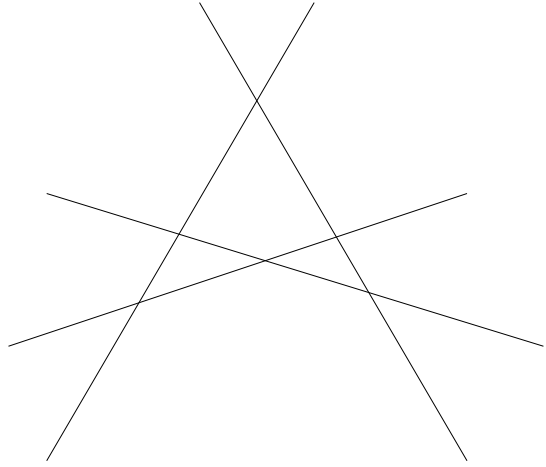}
\caption{The trace $\Sigma'_4\subset \P^2$ of the algebraic simplex
$\Sigma_4\subset \P^3$
\label{FigFourlines}}
\end{figure}
\end{center}

\begin{defn}\label{SigmaTr}
The trace $\Sigma'_{r+1} \subset \P^{r-1}$ of the algebraic simplex
$\Sigma_{r+1}\subset \P^r$ is the intersection of $\Sigma_{r+1}$ 
with a general hyperplane. It is a union of $r+1$ hyperplanes in
$\P^{r-1}$ meeting with normal crossings. 
\end{defn}

For instance, $\Sigma'_4$ consists of the transversal union of four
lines as in Figure \ref{FigFourlines} and by \eqref{SigmaLclass} its class is 
$$ [ \Sigma'_4 ]=  \frac{(1+\bT)^3-1}\bT - \bT^2+\bT-1=4\bT+2 .$$

The first part of the computation of the class of the graph
hypersurface $X_{\Gamma_n}$ for the banana graph $\Gamma_n$ is then
given by the following result.

\begin{prop}\label{classXGamman}
Let $X_{\Gamma_n}\subset \P^{n-1}$ be the hypersurface of the
$n$-th banana graph $\Gamma_n$. Then 
\begin{equation}\label{SigmacomplXGamma}
[X_{\Gamma_n}\smallsetminus \Sigma_n]=\bT^{n-2}-\bT^{n-3}+\bT^{n-4}-\cdots
+(-1)^n .
\end{equation}
\end{prop}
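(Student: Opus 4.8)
The goal is to compute $[X_{\Gamma_n}\smallsetminus \Sigma_n]$, which by Lemma \ref{BananaDual} equals $[\cL\smallsetminus \Sigma_n]$, since the Cremona transformation $\cC$ restricts to an isomorphism $\cL\smallsetminus \Sigma_n \to X_{\Gamma_n}\smallsetminus \Sigma_n$. So the plan is to forget about $X_{\Gamma_n}$ entirely and just compute the class of the complement of the algebraic simplex inside the hyperplane $\cL$. This is a purely combinatorial calculation with linearly embedded subspaces, so the tools of \S \ref{MotSect} apply directly.

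First I would observe that $\cL \cong \P^{n-2}$ is a hyperplane, and by the inclusion--exclusion relation \eqref{relGrV} in the Grothendieck ring we have
\[
[\cL\smallsetminus \Sigma_n] = [\cL] - [\cL\cap \Sigma_n] = [\P^{n-2}] - [\cL\cap \Sigma_n].
\]
The second term is exactly what Lemma \ref{classSigma} computes. Indeed, setting $r+1=n$ (so $r=n-1$) and using the fact that $\cL$ is transversal to $\Sigma_n$ (which is part (4) of Lemma \ref{SandL}, guaranteeing that $\Sigma_n$ cuts out a simple normal crossings divisor on $\cL$, so the hyperplane section operation is the correct geometric operation), formula \eqref{SigmaLclass} gives
\[
[\cL\cap \Sigma_n] = \frac{(1+\bT)^{n-1}-1}\bT - \bT^{n-2}+\bT^{n-3}-\bT^{n-4}+ \cdots \pm 1.
\]

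Now I would simply subtract. Since $[\P^{n-2}]=\frac{(1+\bT)^{n-1}-1}{\bT}$ by \eqref{PrId}, the leading $\frac{(1+\bT)^{n-1}-1}\bT$ term cancels exactly against $[\cL]$, leaving
\[
[\cL\smallsetminus \Sigma_n] = \bT^{n-2}-\bT^{n-3}+\bT^{n-4}-\cdots \mp (\pm 1) = \bT^{n-2}-\bT^{n-3}+\cdots+(-1)^n,
\]
where I must be careful with the sign bookkeeping in the alternating tail of \eqref{SigmaLclass}: negating that tail flips every sign, and the constant term $(-1)^n$ emerges as the bottom of the alternating sum. This matches \eqref{SigmacomplXGamma} after transporting the class through the isomorphism \eqref{isoCrem}.

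The only genuine point requiring care, rather than a routine obstacle, is justifying that subtracting $[\cL\cap\Sigma_n]$ is legitimate at the level of Grothendieck classes and that the isomorphism \eqref{isoCrem} transports the class faithfully. The former is immediate from \eqref{relGrV} since $\cL\cap\Sigma_n$ is closed in $\cL$; the latter is exactly the content of Lemma \ref{BananaDual}, so no additional argument is needed. The remaining work is purely the sign-tracking in the alternating series from \eqref{SigmaLclass}, which I expect to be the most error-prone part but is entirely mechanical.
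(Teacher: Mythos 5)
Your proposal is correct and follows essentially the same route as the paper's own proof: reduce via Lemma \ref{BananaDual} to computing $[\cL\smallsetminus\Sigma_n]$, use the transversality of $\cL$ and $\Sigma_n$ so that Lemma \ref{classSigma} applies, and subtract \eqref{SigmaLclass} from $[\cL]=[\P^{n-2}]$ by inclusion--exclusion; the paper merely leaves the answer in the closed form $\frac{\bT^{n-1}-(-1)^{n-1}}{\bT+1}$ rather than expanding the alternating sum, and your sign bookkeeping matches \eqref{SigmacomplXGamma} exactly.
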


\proof We know by Lemma \ref{BananaDual} that
$X_{\Gamma_n}\smallsetminus \Sigma_n \cong \cL \smallsetminus
\Sigma_n$ via the Cremona transformation, with $\cL=\P^{n-2}$ the hyperplane 
\eqref{Lhyp}. This hyperplane intersects $\Sigma_n$ transversely, so
that \eqref{SigmaLclass} applies and gives
$$[\cL\smallsetminus \Sigma_n]=[\cL]-[\cL\cap \Sigma_n]=\frac{\bT^{n-1}-(-1)^{n-1}}
{\bT+1}.$$
\endproof

Next we examine how the graph hypersurface $X_{\Gamma_n}$ intersects
the algebraic simplex $\Sigma_n$. 

\begin{lem}\label{XGammaS}
The graph hypersurface $X_{\Gamma_n}$ intersects the algebraic simplex
$\Sigma_n \subset \P^{n-1}$ along the singularity subscheme $\cS_n$ of
$\Sigma_n$. 
\end{lem}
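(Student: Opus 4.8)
The plan is to prove the set-theoretic equality $X_{\Gamma_n}\cap\Sigma_n=\cS_n$ by analyzing the restriction of the Kirchhoff polynomial to the individual coordinate hyperplanes; this is exactly what is needed for the inclusion--exclusion computation of classes in the Grothendieck ring carried out in \S\ref{MotSect}.

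First I would record the explicit expansion of the graph polynomial from \eqref{PolyBn}, namely $\Psi_{\Gamma_n}=\sum_{j=1}^n m_j$, where $m_j:=\prod_{i\neq j}t_i$ denotes the monomial obtained by omitting the factor $t_j$. As noted after \eqref{idealS}, these $m_j$ are precisely the partial derivatives of $\prod_i t_i$ and hence the generators of the ideal $\cI_{\cS_n}$, so that $\cS_n$ is the reduced locus where at least two of the coordinates $t_i$ vanish, while $\Sigma_n=\bigcup_k H_k$ with $H_k=\{t_k=0\}$.

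The heart of the argument is the observation that $\Psi_{\Gamma_n}$ restricts on each coordinate hyperplane $H_k$ to a single monomial. Indeed, for $j\neq k$ the monomial $m_j$ contains the factor $t_k$ and therefore vanishes on $H_k$, whereas $m_k$ does not involve $t_k$; hence $\Psi_{\Gamma_n}|_{H_k}=m_k=\prod_{i\neq k}t_i$. Consequently $X_{\Gamma_n}\cap H_k$ is cut out inside $H_k\cong\P^{n-2}$ by the vanishing of $\prod_{i\neq k}t_i$, i.e. by the union of the remaining coordinate hyperplanes, $\bigcup_{j\neq k}(H_k\cap H_j)$. Taking the union over $k$ yields $X_{\Gamma_n}\cap\Sigma_n=\bigcup_{j\neq k}(H_j\cap H_k)$, which is exactly the locus $\cS_n$ where at least two coordinates vanish. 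Equivalently, I would verify the two inclusions directly: if $t_a=t_b=0$ with $a\neq b$, then every $m_j$ is divisible by $t_a$ or by $t_b$ (the omitted index $j$ can coincide with at most one of $a,b$), so $\Psi_{\Gamma_n}=0$; conversely, at a point of $\Sigma_n$ with only the single vanishing coordinate $t_k$ one has $\Psi_{\Gamma_n}=m_k\neq 0$, so such a point does not lie on $X_{\Gamma_n}$.

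The one point that requires care --- which I regard as the genuine subtlety rather than a computational obstacle --- is that this equality holds set-theoretically, i.e. for the underlying reduced schemes, but \emph{not} scheme-theoretically: one has $(\Psi_{\Gamma_n},\prod_i t_i)\subseteq\cI_{\cS_n}$ with strict inclusion already for $n=3$, so the scheme-theoretic intersection is a nonreduced thickening of $\cS_n$. This is harmless for the present purpose, since the class $[X_{\Gamma_n}\cap\Sigma_n]$ entering the inclusion--exclusion in \S\ref{MotSect} depends only on the associated variety, and the reduced identity $X_{\Gamma_n}\cap\Sigma_n=\cS_n$ is precisely what is invoked there and in Lemma \ref{intDomXGamma}.
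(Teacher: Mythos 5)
Your proof is correct and takes essentially the same route as the paper: the paper's (one-line) proof is exactly the observation you develop in detail, namely that the monomials $\prod_{i\neq j}t_i$ whose sum gives $\Psi_{\Gamma_n}$ in \eqref{PolyBn} are precisely the generators of the ideal $\cI_{\cS_n}$ in \eqref{idealS}, so their common vanishing on $\Sigma_n$ is the locus where at least two coordinates vanish. Your closing remark that the identity is set-theoretic (the scheme-theoretic intersection being a nonreduced thickening of $\cS_n$) is a sound clarification and is consistent with how the lemma is used in \S\ref{MotSect} and in Lemma \ref{intDomXGamma}.
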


\proof One can see this directly by comparing the defining equation
\eqref{PolyBn} of $X_{\Gamma_n}$ with the ideal $\cI_{\cS_n}$ of
\eqref{idealS} of the singularity subscheme $\cS_n$ of
$\Sigma_n$. 
\endproof

The class in the Grothendieck ring of the singular locus $\cS_n$ of
$\Sigma_n$ is given by the following result.

\begin{lem}\label{classSnLem}
The class of $\cS_{r+1}\subset \P^r$ is given by
\begin{equation}\label{classSn}
[\cS_{r+1}] = [\Sigma_{r+1}]- (r+1)\bT^{r-1}
=\sum_{i=2}^r \binom{r+1}i \bT^{r-i} .
\end{equation}
\end{lem}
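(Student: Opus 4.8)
The plan is to stratify the algebraic simplex $\Sigma_{r+1}\subset\P^r$ by the number of vanishing coordinates and to exploit the cut-and-paste relation \eqref{relGrV} in the Grothendieck ring. The essential geometric input, already recorded in Lemma \ref{SandL}, is that $\cS_{r+1}$ is the singularity subscheme of $\Sigma_{r+1}$; set-theoretically it is precisely the locus where at least two of the homogeneous coordinates vanish, so that the smooth part $\Sigma_{r+1}\smallsetminus\cS_{r+1}$ is the locus where exactly one coordinate vanishes.

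First I would compute the class of this smooth part directly. It is a disjoint union, over the $r+1$ coordinate hyperplanes $\{t_i=0\}$, of the locally closed stratum on which $t_i=0$ while all other coordinates are nonzero. Each such stratum sits inside $\{t_i=0\}\cong\P^{r-1}$ as the complement of the remaining coordinate hyperplanes, hence is a torus $\bG_m^{r-1}$ of class $\bT^{r-1}$. These strata for distinct $i$ are pairwise disjoint, so summing over the $r+1$ choices of $i$ gives $(r+1)\bT^{r-1}$ for the class of $\Sigma_{r+1}\smallsetminus\cS_{r+1}$.

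Applying \eqref{relGrV} to the closed inclusion $\cS_{r+1}\subset\Sigma_{r+1}$ then yields $[\Sigma_{r+1}]=[\cS_{r+1}]+(r+1)\bT^{r-1}$, which I solve for $[\cS_{r+1}]$ to obtain the first equality. For the second equality I would substitute the closed form $[\Sigma_{r+1}]=\sum_{i=1}^r\binom{r+1}{i}\bT^{r-i}$ from \eqref{Sigmanclass}; since $(r+1)\bT^{r-1}=\binom{r+1}{1}\bT^{r-1}$ is exactly the $i=1$ term of that sum, it cancels and leaves $\sum_{i=2}^r\binom{r+1}{i}\bT^{r-i}$, as claimed.

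There is no serious obstacle here: the argument is a one-line application of additivity in the Grothendieck ring once the strata have been identified. The only point requiring a moment's care is the identification of the smooth locus of $\Sigma_{r+1}$ with the ``exactly one coordinate zero'' locus, and of $\cS_{r+1}$ with its complement in $\Sigma_{r+1}$; but this follows immediately from the simple normal crossings structure of $\Sigma_{r+1}$ and was already established in Lemma \ref{SandL}.
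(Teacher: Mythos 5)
Your proof is correct and follows essentially the same route as the paper: both identify $\Sigma_{r+1}\smallsetminus\cS_{r+1}$ as the disjoint union of $r+1$ tori $\bT^{r-1}$ (the complements of $\Sigma_r$ inside the coordinate hyperplanes, i.e.\ the ``exactly one coordinate vanishes'' strata) and then apply additivity \eqref{relGrV} together with the formula \eqref{Sigmanclass} for $[\Sigma_{r+1}]$. The cancellation of the $i=1$ term that you note is exactly how the paper arrives at the sum starting at $i=2$.
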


\proof Each coordinate hyperplane $\P^{r-1}$ in $\Sigma_{r+1}\subset
\P^r$ intersects the others along its own algebraic simplex
$\Sigma_r$. Thus, to obtain the class of $\cS_{r+1}$ from the class
of $\Sigma_{r+1}$ in the Grothendieck ring we just need to subtract 
the class of the $r+1$ complements of $\Sigma_r$ in the $r+1$
components of $\Sigma_{r+1}$. We then have
$$ [\cS_{r+1}] = [\Sigma_{r+1}]- (r+1)\bT^{r-1}
=\frac{(1+\bT)^{r+1}-1-(r+1)\bT^r-\bT^{r+1}}\bT. $$
This gives the formula \eqref{classSn}. 
\endproof

We then have the following result. 

\begin{thm}\label{classXGammaGr}
The class in the Grothendieck ring of the graph hypersurface
$X_{\Gamma_n}$ of the banana graph $\Gamma_n$ is given by
\begin{equation}\label{clXGammaGr}
[X_{\Gamma_n}]=\frac{(1+\bT)^n-1}\bT-\frac{\bT^n-(-1)^n}{\bT+1}-n\,
\bT^{n-2} .
\end{equation}
\end{thm}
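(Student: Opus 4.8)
The plan is to compute the class $[X_{\Gamma_n}]$ by splitting it along the algebraic simplex $\Sigma_n$ and applying inclusion--exclusion in the Grothendieck ring. Using the fundamental relation \eqref{relGrV}, I would write
\[
[X_{\Gamma_n}] = [X_{\Gamma_n}\smallsetminus \Sigma_n] + [X_{\Gamma_n}\cap \Sigma_n].
\]
The first summand is already given by Proposition \ref{classXGamman}, namely $[X_{\Gamma_n}\smallsetminus \Sigma_n]=\bT^{n-2}-\bT^{n-3}+\cdots+(-1)^n$, which by the proof there equals $\frac{\bT^{n-1}-(-1)^{n-1}}{\bT+1}$. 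For the second summand, Lemma \ref{XGammaS} identifies $X_{\Gamma_n}\cap \Sigma_n$ with the singularity subscheme $\cS_n$, whose class is computed in Lemma \ref{classSnLem} (with $r+1=n$, so $r=n-1$) as $[\cS_n]=[\Sigma_n]-n\,\bT^{n-2}$.

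Next I would assemble these pieces. Adding the two contributions gives
\[
[X_{\Gamma_n}] = \frac{\bT^{n-1}-(-1)^{n-1}}{\bT+1} + [\Sigma_n] - n\,\bT^{n-2}.
\]
Here $[\Sigma_n]$ is the class of $\Sigma_{r+1}$ with $r=n-1$, so by \eqref{Sigmanclass} it equals $\frac{(1+\bT)^n-1-\bT^n}{\bT}$. Substituting this in, the target formula \eqref{clXGammaGr} requires that
\[
\frac{(1+\bT)^n-1}{\bT} - \frac{\bT^n-(-1)^n}{\bT+1}
= \frac{\bT^{n-1}-(-1)^{n-1}}{\bT+1} + \frac{(1+\bT)^n-1-\bT^n}{\bT}.
\]
The $\frac{(1+\bT)^n-1}{\bT}$ terms and the $-n\,\bT^{n-2}$ terms on both sides cancel against their counterparts, so the whole verification reduces to the purely rational identity
\[
-\frac{\bT^n-(-1)^n}{\bT+1} = \frac{\bT^{n-1}-(-1)^{n-1}}{\bT+1} - \frac{\bT^n}{\bT}.
\]
This is elementary: the right-hand side is $\frac{\bT^{n-1}-(-1)^{n-1}}{\bT+1}-\bT^{n-1}$, and clearing the common denominator $\bT+1$ collapses everything using $(-1)^{n-1}=-(-1)^n$.

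The conceptual content is entirely front-loaded into the three cited results (Proposition \ref{classXGamman}, Lemma \ref{XGammaS}, and Lemma \ref{classSnLem}), so the only genuine work remaining is the bookkeeping to confirm that the three pieces sum to the stated closed form. The main obstacle, such as it is, will be handling the rational expressions carefully: as noted in the discussion around \eqref{PrId}, these fractions are shorthand for their unambiguous polynomial expansions, so I must make sure that the $\frac{1}{\bT}$ and $\frac{1}{\bT+1}$ denominators divide exactly (which they do, since $(1+\bT)^n-1-\bT^n$ is visibly divisible by $\bT$ and $\bT^{n-1}-(-1)^{n-1}$ by $\bT+1$), and that no spurious non-polynomial terms survive. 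Once the algebra is arranged so that the cancellations are transparent, the identity is immediate and the theorem follows.
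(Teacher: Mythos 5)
Your proposal is correct and takes essentially the same route as the paper: the paper's proof likewise writes $[X_{\Gamma_n}]=[X_{\Gamma_n}\smallsetminus\Sigma_n]+[\cS_n]$ (with Lemma \ref{XGammaS} identifying $X_{\Gamma_n}\cap\Sigma_n=\cS_n$), substitutes Proposition \ref{classXGamman} and Lemma \ref{classSnLem}, and reduces to the same rational-function bookkeeping. The only difference is that the paper leaves the final algebraic simplification implicit (``from which \eqref{clXGammaGr} follows''), whereas you verify the cancellation explicitly.
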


\proof We write the class $[X_{\Gamma_n}]$ in the form
$$ [X_{\Gamma_n}] =[ X_{\Gamma_n} \smallsetminus \Sigma_n ]+[\cS_n]. $$
Using the results of Lemma \ref{classSnLem} and Proposition
\ref{classXGamman} we write this as
$$ = \frac{\bT^{n-1}-(-1)^{n-1}}{\bT+1}+\frac{(1+\bT)^n-1-n\bT^{n-1}
-\bT^n}\bT ,$$
from which \eqref{clXGammaGr} follows.
\endproof

The formula \eqref{clXGammaGr} expresses the class $[X_{\Gamma_n}]$ as
$$ [X_{\Gamma_n}] = [\Sigma_n']-n\bT^{n-2},  $$ 
\ie as the class of the union $\Sigma_n'$ of $n$ hyperplanes meeting
with normal crossings (as in Definition \ref{SigmaTr}), corrected by 
$n$ times the class of an $n-2$-dimensional torus.

\begin{ex}\label{exP2} {\rm 
In the case $n=3$ of Figure \ref{BananaFig}, \eqref{clXGammaGr} shows 
that the class of the hypersurface $X_{\Gamma_3}\subset \P^2$ is equal
to the class of the union of four transversal lines, minus three times a
$1$-dimensional torus, \ie that we have
$$ [X_{\Gamma_3}]=4\bT+2-3\bT=\bT+2=[\P^1]. $$
This can also be seen directly from the fact that the equation
$$ \Psi_{\Gamma_3}=t_1t_2+t_2t_3+t_1t_3=0 $$
defines a nonsingular conic in the plane.}
\end{ex}

\begin{ex}\label{exP3}{\rm  In the case $n=4$ of Figure
\ref{BananaFig}, the hypersurface $X_{\Gamma_4}$ is a cubic surface in
$\P^3$ with four singular points. The class in the Grothendieck ring
is 
$$ [X_{\Gamma_4}]= \bT^2 + 5\bT + 5 . $$
}\end{ex}

In terms of the Lefschetz motive $\bL$, the formula \eqref{clXGammaGr}
reads equivalently as
\begin{equation}\label{clXGammaGrL}
[X_{\Gamma_n}]= \frac{\bL^n-1}{\bL-1}-\frac{(\bL-1)^n-(-1)^n}{\bL}-n\,
(\bL-1)^{n-2} .
\end{equation}

In the context of parametric Feynman integrals, it is the complement
of the graph hypersurface in $\P^{n-1}$ that supports the period
computed by the Feynman integral. Thus, in general, one is interested 
in the explicit expression for the motive of the complement. It so 
happens that in the particular case of the banana graphs the
expression for the class of the hypersurface complement is in fact
simpler than that of the hypersurface itself. 

\begin{cor}\label{classXGammaCompl}
The class of the hypersurface complement $\P^{n-1}\smallsetminus
X_{\Gamma_n}$ is given by
\begin{equation}\label{classCompl}
\begin{array}{rl}
[\P^{n-1}\smallsetminus X_{\Gamma_n}]= &
\frac{\bT^n-(-1)^n}{\bT+1}+n\, \bT^{n-2} \\[3mm]
=& \bT^{n-1}+(n-1) \bT^{n-2}+\bT^{n-3}-\bT^{n-4}+\bT^{n-5}+\cdots \pm
1. \end{array}
\end{equation}
\end{cor}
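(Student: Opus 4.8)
The plan is to invoke the defining relation \eqref{relGrV} of the Grothendieck ring, which gives
$$[\P^{n-1}\smallsetminus X_{\Gamma_n}] = [\P^{n-1}] - [X_{\Gamma_n}]$$
directly. The crucial observation is that the leading term $\frac{(1+\bT)^n-1}{\bT}$ of the formula \eqref{clXGammaGr} from Theorem \ref{classXGammaGr} is nothing but $[\P^{n-1}]$, by the identity \eqref{PrId} applied with $r = n-1$. Subtracting $[X_{\Gamma_n}]$ from $[\P^{n-1}]$ therefore cancels this term, leaving
$$[\P^{n-1}\smallsetminus X_{\Gamma_n}] = \frac{\bT^n-(-1)^n}{\bT+1}+n\,\bT^{n-2},$$
which is precisely the first line of \eqref{classCompl}.

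To obtain the second line, I would expand the rational expression $\frac{\bT^n-(-1)^n}{\bT+1}$. Since $\bT+1 = \bT-(-1)$ and $\bT = -1$ is a root of the numerator, the division is exact, and the standard factorization $\frac{x^n-a^n}{x-a}=\sum_{j=0}^{n-1}x^{n-1-j}a^j$ with $x=\bT$ and $a=-1$ yields the alternating polynomial $\bT^{n-1}-\bT^{n-2}+\bT^{n-3}-\cdots\pm 1$. Adding the remaining summand $n\,\bT^{n-2}$ absorbs the $-\bT^{n-2}$ term into a coefficient $n-1$ on $\bT^{n-2}$, producing exactly the displayed second line of \eqref{classCompl}.

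There is no real obstacle here: the statement is an immediate corollary of Theorem \ref{classXGammaGr} together with \eqref{PrId}, and the only computational content is the routine polynomial division above. The single point deserving a remark is that, as emphasized after \eqref{PrId}, all of these manipulations are to be understood formally in the torus class $\bT$, with each fraction standing for its unambiguous polynomial expansion; the exactness of the division $\frac{\bT^n-(-1)^n}{\bT+1}$ guarantees that the resulting expression is a genuine integer polynomial in $\bT$, hence a well-defined class in the Grothendieck ring.
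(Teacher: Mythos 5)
Your proof is correct and is essentially identical to the paper's own argument: the paper likewise observes via \eqref{PrId} that the first term of \eqref{clXGammaGr} is $[\P^{n-1}]$, so that $[\P^{n-1}\smallsetminus X_{\Gamma_n}]=[\P^{n-1}]-[X_{\Gamma_n}]$ immediately yields \eqref{classCompl}. The explicit division $\frac{\bT^n-(-1)^n}{\bT+1}=\bT^{n-1}-\bT^{n-2}+\bT^{n-3}-\cdots\pm 1$ that you spell out for the second line is left implicit in the paper, and your remark that these fractions are shorthand for exact polynomial expansions matches the paper's own convention stated after \eqref{PrId}.
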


\proof By \eqref{PrId} we see that the first term in
\eqref{clXGammaGr} is in fact the class $[\P^{n-1}]$, hence the class
$[\P^{n-1}\smallsetminus X_{\Gamma_n}]=[\P^{n-1}]-[X_{\Gamma_n}]$ is
given by \eqref{classCompl}
\endproof

\begin{cor}\label{EulCharGr}
The Euler characteristic of $X_{\Gamma_n}$ is given by the formula
\eqref{chiBn}.
\end{cor}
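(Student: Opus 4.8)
The plan is to use the fact, recalled in \S \ref{GrDefSec}, that the topological Euler characteristic is an additive invariant, hence factors through a ring homomorphism $\chi\colon K_0(\cV_K)\to \Z$. The key observation is that $\chi(\bT)=0$: since $\bT=[\bG_m]=\bL-1$ and $\chi(\A^1)=1$, $\chi(\mathrm{pt})=1$, we get $\chi(\bT)=\chi(\A^1)-\chi(\mathrm{pt})=0$. Consequently $\chi(\bT^k)=0$ for every $k\ge 1$, while $\chi(\bT^0)=1$. Applying $\chi$ to a class written as a polynomial in $\bT$ therefore amounts to extracting the constant term, i.e.\ to evaluating that polynomial at $\bT=0$.

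The computation then reduces to evaluating the three summands of formula \eqref{clXGammaGr} from Theorem \ref{classXGammaGr} at $\bT=0$. For the first term, $\frac{(1+\bT)^n-1}{\bT}$ is precisely the class $[\P^{n-1}]$, so its value at $\bT=0$ is $\chi(\P^{n-1})=n$. For the second term, $\bT+1$ does not vanish at $\bT=0$, so I may substitute directly to get $\frac{0-(-1)^n}{1}=-(-1)^n=(-1)^{n+1}$; since this term enters with a minus sign, its contribution is $-(-1)^{n+1}=(-1)^n$. For the third term, $n\,\bT^{n-2}$ vanishes at $\bT=0$ for $n\ge 3$. Summing the three contributions yields $\chi(X_{\Gamma_n})=n+(-1)^n$, which is exactly \eqref{chiBn}.

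As a cross-check, one can instead use the decomposition $[X_{\Gamma_n}]=[X_{\Gamma_n}\smallsetminus\Sigma_n]+[\cS_n]$ employed in the proof of Theorem \ref{classXGammaGr}. From Proposition \ref{classXGamman} the complement class $\bT^{n-2}-\bT^{n-3}+\cdots+(-1)^n$ has constant term $(-1)^n$, while from Lemma \ref{classSnLem} the class of $\cS_n$ has constant term $\binom{n}{n-1}=n$; adding these recovers $n+(-1)^n$ and confirms the answer independently.

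The argument is entirely routine once the homomorphism property and the vanishing $\chi(\bT)=0$ are in hand, so there is no genuine obstacle. The only point requiring care is that the expressions in \eqref{clXGammaGr} are formal fractions standing for honest polynomial expansions; I must note that evaluation at $\bT=0$ is legitimate because $\bT$ divides the numerator of the first fraction exactly and $\bT+1$ is a unit at $\bT=0$, so in both cases the substitution agrees unambiguously with the constant term of the genuine polynomial expansion.
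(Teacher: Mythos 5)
Your proof is correct and takes essentially the same approach as the paper: both exploit that the Euler characteristic is a ring homomorphism on the Grothendieck ring which kills all positive powers of $\bT$, and then read off the constant term of the class computed in \S \ref{MotSect}. The only cosmetic difference is that the paper applies this to the complement class \eqref{classCompl} and subtracts the result from $\chi(\P^{n-1})=n$, whereas you evaluate \eqref{clXGammaGr} directly at $\bT=0$ (with the decomposition $[X_{\Gamma_n}\smallsetminus\Sigma_n]+[\cS_n]$ as a cross-check); the computations are trivially equivalent.
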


\proof The Euler characteristic is an additive invariant, hence it
determines a ring homomorphism from the Grothendieck ring of varieties
to the integers. Moreover, tori have zero Euler characteristic, so
that $\chi(\bT^r)=0$ for all $r\geq 1$. Then the formula
\eqref{classCompl} for the class of the hypersurface complement shows
that 
$$ \chi(\P^{n-1}\smallsetminus X_{\Gamma_n})= \chi(\bT^{n-1})+(n-1)
\chi(\bT^{n-2})+\chi(\bT^{n-3})-\cdots \pm 1 = (-1)^{n-1}. $$
Since $\chi(\P^{n-1})=n$ we obtain 
$$ \chi(X_{\Gamma_n})= \chi(\P^{n-1}) -  \chi(\P^{n-1}\smallsetminus
X_{\Gamma_n}) = n + (-1)^n $$
as in \eqref{chiBn}. 
\endproof

In \S \ref{CSMsect} below, we derive the same Euler characteristic
formula in a different way, from the calculation of the CSM class of
$X_{\Gamma_n}$. 

\begin{rem}\label{Rat}{\em
Notice that, if we expand in \eqref{clXGammaGr} the first term in the
form $[\P^{n-1}]=\bT^{n-1}+n\bT^{n-2}+\dots$, we see that the dominant
term in $[X_{\Gamma_n}]$ is $\bT^{n-2}$. This is not surprising, since
for the banana graphs the hypersurfaces $X_{\Gamma_n}$ are rational. 
}\end{rem}

\begin{rem}\label{missingPn} {\em
The previous remark explains the appearance of a term $n \bT^{n-2}$ in
the expression \eqref{classCompl}. The remaining terms are an
alternating sum of tori. This term can be viewed as
\begin{equation}\label{altTsect}
 \frac{\bT^n-(-1)^n}{\bT+1} = \frac{g(\bT)-g(-1)}{\bT+1}, 
\end{equation}
for $g(\bT)=\bT^n$. According to Lemma \ref{hypersec}, this is 
the class of the hyperplane section of the complement of the algebraic 
simplex $\Sigma_{n+1}$ in $\P^n$. However, how geometrically one can
associate a $\P^n$ to a graph hypersurface $X_{\Gamma_n}\subset
\P^{n-1}$ is unclear, so that a satisfactory conceptual explanation of
the occurrence of \eqref{altTsect} in \eqref{classCompl} is still
missing. }\end{rem}

For completeness we also give the explicit formula of the class
\eqref{classCompl} written in terms of classes $[\P^r]$. 

\begin{cor}\label{classComplPrCor}
In terms of classes of projective spaces the class 
$[\P^{n-1}\smallsetminus X_{\Gamma_n}]$ is given by
\begin{equation}\label{classComplPr}
[\P^{n-1}\smallsetminus X_{\Gamma_n}] =
\sum_{k=0}^{n-1}\binom{n+1}{k+2} (-1)^{n-1-k}\, [\P^k] +
n\,\sum_{k=0}^{n-2}\binom{n-1}{k+1} (-1)^{n-2-k}\, [\P^k]. 
\end{equation}
\end{cor}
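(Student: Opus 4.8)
The plan is to convert the already-established torus-class formula \eqref{classCompl} into the $[\P^r]$ basis using the inverse transformation of Lemma \ref{gClassT}. We have
\[
[\P^{n-1}\smallsetminus X_{\Gamma_n}] = \frac{\bT^n-(-1)^n}{\bT+1} + n\,\bT^{n-2},
\]
and Lemma \ref{gClassT} tells us exactly how to rewrite any polynomial $g(\bT)$ in the $[\P^r]$ basis: form $\frac{(P-1)g(P-1)+g(-1)}{P}$ and replace each $P^r$ by $[\P^r]$. Since the transformation is linear, I would treat the two summands separately.

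First I would handle the term $n\,\bT^{n-2}$. Applying Lemma \ref{gClassT} to $g(\bT)=\bT^{n-2}$, I compute $\frac{(P-1)(P-1)^{n-2}+(-1)^{n-2}}{P}=\frac{(P-1)^{n-1}+(-1)^{n-2}}{P}$. Expanding $(P-1)^{n-1}=\sum_j \binom{n-1}{j}P^j(-1)^{n-1-j}$, the constant term $(-1)^{n-1}$ cancels against $(-1)^{n-2}=-(-1)^{n-1}$, leaving $\sum_{j\ge 1}\binom{n-1}{j}(-1)^{n-1-j}P^{j-1}$; reindexing $k=j-1$ and multiplying by $n$ gives precisely the second sum $n\sum_{k=0}^{n-2}\binom{n-1}{k+1}(-1)^{n-2-k}[\P^k]$ in \eqref{classComplPr}. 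Second, I would handle $\frac{\bT^n-(-1)^n}{\bT+1}$. Rather than wrestle with this as a single rational function, I would observe via Lemma \ref{hypersec} that this equals $\cH(\bT^n)$, the hyperplane-section operation applied to $\bT^n$; but it is cleaner simply to apply Lemma \ref{gClassT} directly to the genuine polynomial $g(\bT)=\frac{\bT^n-(-1)^n}{\bT+1}$ (which is $\bT^{n-1}-\bT^{n-2}+\cdots$, a true polynomial). A slicker route is to recognize from \eqref{altTsect} and Lemma \ref{hypersec} that this term is the image under $\cH$ of $\bT^n$, so in the $[\P^r]$ basis it is the hyperplane section of the class whose $[\P^r]$ expansion comes from $\bT^n$; combined with Lemma \ref{classSigma}, which already records $[\Sigma_{r+1}]=\frac{(1+\bT)^{r+1}-1-\bT^{r+1}}{\bT}$, one identifies $\frac{\bT^n-(-1)^n}{\bT+1}$ with a shifted simplex-complement class and reads off the binomial coefficients $\binom{n+1}{k+2}$.

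The cleanest single computation is to apply Lemma \ref{gClassT} to the full $g(\bT)$ at once. Since $g(-1)=\frac{(-1)^n-(-1)^n}{0}$ must be interpreted as the value of the polynomial $\frac{\bT^n-(-1)^n}{\bT+1}+n\bT^{n-2}$ at $\bT=-1$, which is $n(-1)^{n-2}$ from the second term alone (the first summand vanishes at $\bT=-1$ as a polynomial, being $\sum_{j=0}^{n-1}(-1)^{n-1-j}\bT^j$ evaluated at $-1$). The main obstacle is purely bookkeeping: carefully expanding $(P-1)^{n+1}$ for the first summand and matching the cancellation of low-order terms to produce exactly $\binom{n+1}{k+2}$ rather than an off-by-one index. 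Concretely, for $g_1(\bT)=\frac{\bT^n-(-1)^n}{\bT+1}$ I would instead use the identity $(\bT+1)g_1(\bT)=\bT^n-(-1)^n$ together with the fact that $\frac{(P-1)g_1(P-1)+g_1(-1)}{P}$ simplifies once one writes $(P-1)g_1(P-1)=\frac{(P-1)((P-1)^n-(-1)^n)}{P}$ and expands $(P-1)^{n+1}=\sum_{i}\binom{n+1}{i}P^i(-1)^{n+1-i}$; after dividing by $P$ and dropping the terms absorbed by the correction $g_1(-1)$, the surviving coefficient of $P^k$ is $\binom{n+1}{k+2}(-1)^{n-1-k}$, which is exactly the first sum in \eqref{classComplPr}. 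I expect no conceptual difficulty here; the only real care needed is tracking the two cancellations (the constant term for each summand) so that the stated index ranges $0\le k\le n-1$ and $0\le k\le n-2$ come out correctly.
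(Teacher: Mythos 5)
Your overall strategy is exactly the paper's: apply the conversion of Lemma \ref{gClassT} term by term to the expression \eqref{classCompl}, and your treatment of the second term $n\,\bT^{n-2}$ is correct and coincides with the paper's computation. However, there is a genuine error in your treatment of the first term: you claim that $g_1(\bT)=\frac{\bT^n-(-1)^n}{\bT+1}$ ``vanishes at $\bT=-1$ as a polynomial.'' It does not. Writing $g_1(\bT)=\sum_{j=0}^{n-1}(-1)^{n-1-j}\bT^j$, each summand at $\bT=-1$ equals $(-1)^{n-1-j}(-1)^j=(-1)^{n-1}$, so $g_1(-1)=n(-1)^{n-1}\neq 0$; for instance, for $n=2$ one has $g_1(\bT)=\bT-1$ and $g_1(-1)=-2$. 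You have confused the vanishing of the numerator $\bT^n-(-1)^n$ at $\bT=-1$ with the vanishing of the quotient polynomial: since the denominator vanishes there as well, the value of the quotient is the nonzero limit $n(-1)^{n-1}$. Consequently your value for the combined correction, $g(-1)=n(-1)^{n-2}$, is also wrong; the correct value is $g(-1)=n(-1)^{n-1}+n(-1)^{n-2}=0$, which is zero only because the two summands cancel, not because the first one vanishes.

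This error is not cosmetic, because $g_1(-1)=n(-1)^{n-1}$ is precisely the correction that makes the conversion produce a polynomial. With the correct value one gets
\begin{equation*}
\frac{(P-1)\,g_1(P-1)+g_1(-1)}{P}
=\frac{(P-1)^{n+1}-(n+1)(-1)^nP+(-1)^n}{P^2},
\end{equation*}
in which both the constant and the linear terms of the numerator cancel, so the quotient is the polynomial $\sum_{k=0}^{n-1}\binom{n+1}{k+2}(-1)^{n-1-k}P^k$ claimed in \eqref{classComplPr}; this is exactly the computation in the paper's proof, where the term $n(-1)^{n-1}$ appears explicitly. If instead you set $g_1(-1)=0$, the numerator becomes $(P-1)^{n+1}-(-1)^nP+(-1)^n$: its constant term still cancels, but its linear term $n(-1)^nP$ does not, so dividing by $P^2$ leaves a spurious term $n(-1)^nP^{-1}$ and the ``surviving coefficients'' you assert do not come out. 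The fix is a one-line correction---compute $g_1(-1)$ honestly---after which your argument reduces to the proof given in the paper.
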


\proof The formula \eqref{classComplPr} is
obtained easily using the transformation rules of Lemma \ref{gClassT}
to go from expressions 
in $\bT$ to expressions in $[\P^r]$, so that
$$ \begin{array}{rl}
(\bT^n-(-1)^n)/(\bT+1) \mapsto &\left((P-1)\frac{(P-1)^n-(-1)^n}P + n(-1)^{n-1}\right)/P\\[2mm]
= &\left((P-1)^{n+1}-(n+1)(-1)^nP-(-1)^{n+1}\right)/P^2\\[2mm]
= &\sum_{k=0}^{n-1}\binom{n+1}{k+2} (-1)^{n-1-k}\, [\P^k],
\end{array} $$
and for the second term
$$ \begin{array}{rl}
n\,\bT^{n-2} \mapsto
&n\,\left((P-1)(P-1)^{n-2} + (-1)^{n-2}\right)/P\\[2mm]
=&n\,\left((P-1)^{n-1}-(-1)^{n-1}\right)/P\\[2mm]
=&n\,\sum_{k=0}^{n-2}\binom{n-1}{k+1} (-1)^{n-2-k}\, [\P^k].
\end{array} $$
\endproof

\medskip

\subsection{Classes of dual graphs}\label{DualClassSec}

In the result obtained above, we used essentially the relation between
the graph hypersurface $X_{\Gamma_n}$ and the hypersurface of the dual
graph, which is, in this case, a hyperplane. More generally, although
one cannot obtain an explicit formula, one can observe that for any
given planar graph the relation between the hypersurface $X_{\Gamma}$ 
of the graph and that of the dual graph $X_{\Gamma^\vee}$ gives a
relation between the classes in the Grothendieck ring, which can be
expressed as follows.

\begin{prop}\label{DualGraphXclass}
Let $\Gamma$ be a planar graph with $n=\#E(\Gamma)$ and let
$\Gamma^\vee$ be a dual graph. Then the classes in the Grothendieck
ring satisfy 
\begin{equation}\label{cldiffDualGr}
[X_\Gamma]-[X_{\Gamma^\vee}]=[\Sigma_n\cap X_\Gamma] -[\Sigma_n \cap
X_{\Gamma^\vee}]. 
\end{equation}
\end{prop}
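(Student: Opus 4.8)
The plan is to use the isomorphism provided by the Cremona transformation, which is the key geometric input already established in this section. Recall from Lemma~\ref{PsiDual} and Corollary~\ref{DualGrCoroll} that the Cremona transformation $\cC$ restricts to a biregular isomorphism
\[
\cC: X_\Gamma \smallsetminus \Sigma_n \to X_{\Gamma^\vee}\smallsetminus \Sigma_n.
\]
Since isomorphic varieties have equal classes in the Grothendieck ring, this immediately gives
\[
[X_\Gamma \smallsetminus \Sigma_n] = [X_{\Gamma^\vee}\smallsetminus \Sigma_n].
\]
This is the one substantive fact the proof rests on; everything else is the additivity relation \eqref{relGrV}.

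First I would apply the basic inclusion relation \eqref{relGrV} to each hypersurface relative to its intersection with the algebraic simplex. Writing $Y = \Sigma_n \cap X_\Gamma$ as a closed subvariety of $X_\Gamma$, the relation gives
\[
[X_\Gamma] = [\Sigma_n \cap X_\Gamma] + [X_\Gamma \smallsetminus \Sigma_n],
\]
and similarly for the dual graph,
\[
[X_{\Gamma^\vee}] = [\Sigma_n \cap X_{\Gamma^\vee}] + [X_{\Gamma^\vee}\smallsetminus \Sigma_n].
\]

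Then I would subtract these two identities. The two ``open part'' terms $[X_\Gamma \smallsetminus \Sigma_n]$ and $[X_{\Gamma^\vee}\smallsetminus \Sigma_n]$ cancel by the Cremona isomorphism noted above, leaving exactly
\[
[X_\Gamma] - [X_{\Gamma^\vee}] = [\Sigma_n \cap X_\Gamma] - [\Sigma_n \cap X_{\Gamma^\vee}],
\]
which is the desired formula \eqref{cldiffDualGr}. There is no real obstacle here: the proof is a two-line bookkeeping argument once the Cremona isomorphism of Corollary~\ref{DualGrCoroll} is invoked. The only point requiring mild care is that the relation \eqref{graphPolDual} and the isomorphism \eqref{isoCremDual} are stated with the \emph{same} simplex $\Sigma_n$ placed in both source and target $\P^{n-1}$ (using $\pi_1^{-1}(\Sigma_n) = \pi_2^{-1}(\Sigma_n)$, as noted after Lemma~\ref{CgraphEq}), so that the subtraction genuinely pairs off the correct open strata; this symmetry is precisely what makes the cancellation legitimate.
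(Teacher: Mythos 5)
Your proof is correct and is exactly the argument the paper intends: the paper's own proof simply states that the result is a direct consequence of Corollary~\ref{DualGrCoroll}, and your additivity bookkeeping (decomposing each hypersurface class via \eqref{relGrV} and cancelling the open parts matched by the Cremona isomorphism \eqref{isoCremDual}) is the spelled-out version of that one-line deduction. No gaps; your remark about $\pi_1^{-1}(\Sigma_n)=\pi_2^{-1}(\Sigma_n)$ ensuring the correct pairing of open strata is a sensible precaution that the paper leaves implicit.
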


\proof The result is a direct consequence of Corollary
\ref{DualGrCoroll}. 
\endproof

\section{CSM classes for banana graphs}\label{CSMsect}

We now give an explicit formula for the Chern--Schwartz--MacPherson
class of the hypersurfaces of the banana graphs, for an arbitrary number
of edges. 

The computation of the CSM class is substantially more involved than
the computation of the class in the Grothendieck ring we obtained in
the previous section, although the two carry strong formal
similarities, due to the fact that both are based on a similar
inclusion--exclusion principle. In fact, the information carried by
the CSM class is more refined than the decomposition in the
Grothendieck ring of varieties, as it captures more sophisticated
information on how the building blocks are embedded in the ambient
space. This will be illustrated rather clearly by our explicit
computations. In particular, the explicit formula for the CSM class
uses in an essential way a special formula for pullbacks of
differential forms with logarithmic poles.

In order to avoid any possible confusion between homology classes and
classes in the Grothendieck ring (even though the context should
suffice to distinguish them), we use here as in \S \ref{CSMdefSec}
the notation $[\P^r]_*$ for homology classes or classes in the Chow
group (in an ambient $\P^{n-1}$), while reserving the symbol $[\P^r]$
for the class in the Grothendieck ring, as already used in \S
\ref{MotSect} above. The homology class $[\P^r]_*$ can be expressed in
terms of the hyperplane class $H$ and the ambient $\P^{n-1}$ 
as $[\P^r]_* = H^{n-1-r} [\P^{n-1}]_*$.

\subsection{Characteristic classes of blowups}

Let $D$ be a divisor with simple normal
crossings and nonsingular components $D_i$, $i=1,\dots,r$, in a 
nonsingular variety~$M$. Then $TM(-\log(D))$ denotes the sheaf of
vector fields with logarithmic zeros (\ie the dual of the
sheaf $\Omega^1_M(\log D)$ of 1-forms with logarithmic poles). 
In terms of Chern classes one has (\cf \eg \cite{Alu1}) 
$$ c(TM(-\log D))=\frac{c(TM)}{(1+D_1)\cdots (1+D_r)} . $$
This formula has useful applications in the calculation of 
CSM classes, especially because it behaves nicely under pushforwards
as shown in \cite{Alu1} and \cite{Alu2}. What we need here is a more
surprising pullback formula, which can be stated as follows.

\begin{thm}\label{pbForms}
Let $\pi: W \to V$ be the blowup of a nonsingular variety~$V$
along a nonsingular subvariety $B$, with exceptional divisor~$F$.
Let $E_j$, $j\in J$, be nonsingular irreducible hypersurfaces of $V$, 
meeting with normal crossings. Assume that $B$ is cut out by some 
of the $E_j$'s. Denote by $F_j$ the proper transform of $E_j$ in $W$.
Then the sheaf of 1-forms with logarithmic poles along $E$
is preserved by the pullback, namely
\begin{equation}\label{bundlepb}
\Omega^1_M(\log(F+\sum F_j))=\pi^* \Omega^1_M(-\log (\sum E_j)).
\end{equation}
\end{thm}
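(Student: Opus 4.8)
The plan is to exhibit both sides as locally free sheaves of rank $\dim V$ and to check that the evident pullback map between them is an isomorphism, a statement that can be verified locally (I write $W$ and $V$ for source and target of $\pi$, and take the right-hand side to be $\pi^*\Omega^1_V(\log \sum_j E_j)$). First I would note that $\sum_j E_j$ is a normal crossings divisor on $V$ by hypothesis, so $\Omega^1_V(\log \sum_j E_j)$ is a vector bundle; and I would check that, because $B$ is cut out by some of the $E_j$ and the $E_j$ meet transversally, the exceptional divisor $F$ together with the proper transforms $F_j$ again forms a normal crossings divisor on $W$, so that $\Omega^1_W(\log(F + \sum_j F_j))$ is likewise a vector bundle. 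The pullback of a form with logarithmic poles along $\sum_j E_j$ has at worst logarithmic poles along $\pi^{-1}(\sum_j E_j) = \sum_j F_j + m F$ for some $m$; the content of the theorem is that the multiplicity of $F$ is exactly $1$ and that the pullback sheaf fills up the whole of $\Omega^1_W(\log(F + \sum_j F_j))$. Thus there is a natural inclusion $\pi^* \Omega^1_V(\log \sum_j E_j) \to \Omega^1_W(\log(F + \sum_j F_j))$, and it suffices to prove it is an isomorphism on stalks.

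For the local computation I would choose coordinates $x_1, \dots, x_n$ on $V$ near a point of $B$ adapted to the normal crossings divisor: since $B$ is the transverse intersection of the members of the family cutting it out, we may arrange $B = \{x_1 = \cdots = x_d = 0\}$ with those members being $E_i = \{x_i = 0\}$, $i = 1, \dots, d$, while the remaining $E_j$ passing through the point are further coordinate hyperplanes $\{x_i = 0\}$, $d < i \le d+e$. The bundle $\Omega^1_V(\log \sum_j E_j)$ then has the local frame $dx_1/x_1, \dots, dx_{d+e}/x_{d+e}, dx_{d+e+1}, \dots, dx_n$. I would then write down the standard charts of the blowup of $B$: in the $k$-th chart ($1 \le k \le d$) one has coordinates $w_1, \dots, w_n$ with $x_k = w_k$, $x_i = w_k w_i$ for $i \le d$, $i \ne k$, and $x_i = w_i$ for $i > d$, so that $F = \{w_k = 0\}$, the proper transform $F_i = \{w_i = 0\}$ for $i \ne k$ (with $i \le d+e$), while the proper transform of $E_k$ itself does not meet this chart. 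Hence $F + \sum_j F_j$ is cut out in this chart by $w_1 \cdots w_{d+e}$, and $\Omega^1_W(\log(F + \sum_j F_j))$ has the frame $dw_1/w_1, \dots, dw_{d+e}/w_{d+e}, dw_{d+e+1}, \dots, dw_n$.

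The computation itself is then immediate: pulling back the frame of $\Omega^1_V(\log \sum_j E_j)$ gives $\pi^*(dx_k/x_k) = dw_k/w_k$, then $\pi^*(dx_i/x_i) = dw_k/w_k + dw_i/w_i$ for $i \le d$, $i \ne k$, next $\pi^*(dx_i/x_i) = dw_i/w_i$ for $d < i \le d+e$, and finally $\pi^*(dx_i) = dw_i$ for $i > d+e$. Expressed in the frame of $\Omega^1_W(\log(F + \sum_j F_j))$, these vectors form a unipotent (triangular, with $1$'s on the diagonal) change-of-basis matrix, whose determinant is a unit; hence the pullback frame is again a frame and the map is an isomorphism at every point. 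I expect the only real obstacle to be the bookkeeping in the charts --- in particular confirming that the single member $E_k$ which one ``solves for'' contributes the exceptional pole $dw_k/w_k = \pi^*(dx_k/x_k)$ rather than a proper transform, so that $F$ appears to multiplicity exactly one. This is precisely the place where the hypothesis that $B$ is cut out by some of the $E_j$ is used: were $B$ not contained in the divisor, the pullback would acquire no logarithmic pole along $F$, and the right-hand side would be strictly larger than the left.
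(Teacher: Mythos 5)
Your proposal is correct and follows essentially the same route as the paper's proof: establish the natural inclusion $\pi^*\Omega^1_V(\log\sum_j E_j)\subset \Omega^1_W(\log(F+\sum_j F_j))$, then verify equality by a local computation in the standard blowup charts, pulling back the logarithmic coordinate frame and observing that the resulting sections span the logarithmic frame upstairs via a unipotent change of basis. Your write-up is in fact slightly more careful than the paper's (general chart index $k$, explicit separation of the divisor components through the point from the remaining coordinates, and the remark that the proper transform of $E_k$ misses the $k$-th chart), but the underlying argument is identical.
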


\proof There is an inclusion $\pi^* \Omega^1_M(\log (\sum E_j))\subset 
\Omega^1_M(\log(F+\sum F_j))$, and it suffices to see that this is an
equality locally analytically. To this purpose, choose local
coordinates $x_1,\ldots,x_n$ in $V$, so that $E_j$ is given 
by $(x_j)$, $j=1,\dots,k$. Assume $B$ has ideal $(x_1, \dots,x_\ell)$,
and choose local parameters $y_1,\dots,y_n$ in $W$ so that $\pi$ is
expressed by 
$$\left\{\aligned
x_1 &= y_1 \\
x_2 &= y_1 y_2 \\
\cdots \\
x_\ell &= y_1 y_\ell \\
x_{\ell+1} &= y_{\ell+1} \\
\cdots \\
x_n &= y_n
\endaligned\right.$$
Then local sections of $\pi^*\Omega^1(\log (\sum E_j))$ are spanned by
$$\frac{dy_1}{y_1}\quad,\quad 
\frac{dy_1}{y_1}+\frac{dy_2}{y_2}\quad,\quad 
\cdots\quad,\quad
\frac{dy_1}{y_1}+\frac{dy_\ell}{y_\ell}\quad,\quad 
\frac{dy_{\ell+1}}{y_{\ell+1}}\quad,\quad 
\cdots\quad,\quad
\frac{dy_n}{y_n}$$
These clearly span the whole of $\Omega^1(\log(F+\sum F_j))$, as
claimed. Thus, we obtain \eqref{bundlepb}.
\endproof

One derives directly from this result the following formula for Chern
classes. 

\begin{cor}\label{prespb}
Under the same hypothesis as Theorem \ref{pbForms}, the Chern classes
satisfy 
\begin{equation}\label{pbformula}
\frac{c(TW)}{(1+F)\prod_{j\in J}(1+F_j)} \cap [W]
=\pi^*\left(\frac{c(TV)}{\prod_{j\in J}(1+E_j)}\cap [V]\right) .
\end{equation}
\end{cor}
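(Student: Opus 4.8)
The plan is to read off the Chern class identity \eqref{pbformula} directly from the sheaf isomorphism of Theorem \ref{pbForms}, using the formula $c(TM(-\log D))=c(TM)/\prod_i(1+D_i)$ recalled just above it. The argument is almost entirely formal: all the geometric content sits in Theorem \ref{pbForms}, and the corollary merely translates that statement from locally free sheaves into Chern classes. First I would dualize. Since $TM(-\log D)$ is by definition the dual of $\Omega^1_M(\log D)$, and $\pi^*$ commutes with dualizing a locally free sheaf, applying $(\,\cdot\,)^\vee$ to the isomorphism $\Omega^1_W(\log(F+\sum F_j))=\pi^*\Omega^1_V(\log(\sum E_j))$ of Theorem \ref{pbForms} yields an isomorphism of log tangent sheaves $TW(-\log(F+\sum F_j))=\pi^*\,TV(-\log(\sum E_j))$.

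Next I would pass to total Chern classes. Chern classes are functorial under pullback, $c(\pi^*\cE)=\pi^*c(\cE)$, so the displayed isomorphism gives $c(TW(-\log(F+\sum F_j)))=\pi^*\,c(TV(-\log(\sum E_j)))$ as an identity in $A^*W$. Substituting the formula for the Chern class of a log tangent sheaf on each side (with components $F$ and the $F_j$ on $W$, and the $E_j$ on $V$) turns this into the identity $c(TW)/\bigl((1+F)\prod_j(1+F_j)\bigr)=\pi^*\bigl(c(TV)/\prod_j(1+E_j)\bigr)$ of operational classes on $W$.

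Finally I would cap with the fundamental class $[W]$. Applying this to the class $\alpha=c(TV)/\prod_j(1+E_j)$ and using compatibility of pullback with cap product in the form $\pi^*(\alpha\cap[V])=(\pi^*\alpha)\cap[W]$ converts the previous identity into exactly \eqref{pbformula}. The one point that deserves genuine care, and which I expect to be the main (indeed only) obstacle, is this last compatibility: one must check that the $\pi^*$ appearing on the right-hand side of \eqref{pbformula}, which acts on a class in $A_*V$, is the pullback induced via Poincar\'e duality on the nonsingular varieties $V$ and $W$ by the operational pullback used in the earlier steps. Because $\pi$ is the blowup of a nonsingular variety along a nonsingular center, it is a birational morphism of nonsingular varieties of the same dimension, so $\pi^*[V]=[W]$ and the cap-product bookkeeping goes through; this is a standard property of pullbacks along morphisms of nonsingular varieties, and once it is invoked the corollary follows.
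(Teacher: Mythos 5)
Your proposal is correct and is precisely the derivation the paper intends: the paper states Corollary \ref{prespb} as an immediate consequence of Theorem \ref{pbForms} ("one derives directly"), namely dualize the log-forms isomorphism to get $TW(-\log(F+\sum F_j))\cong\pi^*TV(-\log\sum E_j)$, apply functoriality of Chern classes under pullback together with the formula $c(TM(-\log D))=c(TM)/\prod_i(1+D_i)$, and cap with $[W]$. Your added care about identifying the operational pullback with the pullback on Chow groups via $\pi^*[V]=[W]$ (legitimate since $\pi$ is a birational morphism of nonsingular varieties) is exactly the right bookkeeping and fills in the only step the paper leaves implicit.
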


In other words, if $B$ is cut out by a selection of the components
$E_j$, then the pullback of the total Chern class of the bundle
of vector fields with logarithmic zeros along $E$ equals the one 
of the analogous bundle upstairs.

The main consequence of Theorem \ref{pbForms} and Corollary
\ref{prespb} which is relevant to the case of graph hypersurfaces 
is given by the following application.

\begin{defn}\label{defadapt}
Let $V$ be a nonsingular variety, and let
$E_j$, $j\in J$, be nonsingular divisors meeting with normal crossings 
in $V$. A proper birational map $\pi: W \to V$ is a blowup 
{\em adapted\/} to the divisor with normal crossings if it is the blowup 
of $V$ along a subvariety $B\subset V$
cut out by some of the $E_j$'s.
\end{defn}

Notice that $W$ carries a natural divisor
with normal crossings, that is, the union of the exceptional divisor $F$
and of the proper transforms $F_j$ of the divisors $E_j$. The blowup
maps the complement of $W$ to this divisor isomorphically to the
complement in $V$ of the divisor $\cup E_j$. It makes sense then to talk
about a {\em sequence of adapted blowups,} by which we mean that each 
blowup in the sequence is adapted to the corresponding normal crossing
divisor. We then have the following consequence of Theorem
\ref{pbForms} and Corollary \ref{prespb} above.

\begin{cor}\label{adapt}
Let $V$ be a nonsingular variety, and $E_j$ be nonsingular divisors
meeting with normal crossings in $V$. Let $U$ denote the complement
of the union $E=\cup_{j\in J} E_j$.
Let $\pi: W \to V$ be a proper birational map dominated by a 
sequence of adapted blowups. In particular, $\pi$ maps $\pi^{-1}(U)$
isomorphically to $U$. Then
\begin{equation}\label{CSMpb}
c(\one_{\pi^{-1}(U)})=\pi^* c(\one_U) .
\end{equation}
\end{cor}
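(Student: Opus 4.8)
The plan is to recognize both sides of \eqref{CSMpb} as total Chern classes of bundles of vector fields with logarithmic zeros, so that the statement reduces to an iteration of Corollary~\ref{prespb}. Since $V$ is nonsingular and $U$ is the complement of the normal crossings divisor $E=\cup_{j\in J} E_j$, the construction recalled in \S\ref{CSMvsGrSec}, applied with the trivial resolution $\nu=\id_V$, identifies $c(\one_U)=c(TV(-\log E))\cap[V]=\frac{c(TV)}{\prod_{j\in J}(1+E_j)}\cap[V]$. Likewise, after a single adapted blowup $\pi:W\to V$ the set $\pi^{-1}(U)$ is exactly the complement of the normal crossings divisor $F+\sum_{j\in J} F_j$, so that $c(\one_{\pi^{-1}(U)})=\frac{c(TW)}{(1+F)\prod_{j\in J}(1+F_j)}\cap[W]$. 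With these identifications in place, the single-blowup case of \eqref{CSMpb} is precisely the content of Corollary~\ref{prespb}.

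First I would treat the case in which $\pi$ is itself a sequence of adapted blowups $V=V_0\leftarrow V_1\leftarrow\cdots\leftarrow V_m=W$, with $\pi_i:V_i\to V_{i-1}$ and $\pi=\pi_1\circ\cdots\circ\pi_m$. By Definition~\ref{defadapt} each $\pi_i$ is adapted to the normal crossings divisor carried by $V_{i-1}$ (the exceptional divisor together with the proper transforms accumulated so far), and the complement of this divisor in $V_i$ is the preimage $U_i$ of $U$. The single-blowup case then gives $c(\one_{U_i})=\pi_i^*\,c(\one_{U_{i-1}})$ at each stage, and composing these via the contravariant functoriality of pullback yields $c(\one_{\pi^{-1}(U)})=\pi_m^*\cdots\pi_1^*\,c(\one_U)=\pi^*\,c(\one_U)$ by induction on $m$.

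Finally, for a general $\pi:W\to V$ dominated by such a sequence, I would descend along the dominating map. Write $\rho=\pi\circ g:\widetilde W\to V$ for the sequence of adapted blowups, with $g:\widetilde W\to W$ proper and birational. Since $\rho$ is an isomorphism over $U$, the map $g$ restricts to an isomorphism $\rho^{-1}(U)\xrightarrow{\sim}\pi^{-1}(U)$, and hence pushforward of constructible functions gives $g_*\one_{\rho^{-1}(U)}=\one_{\pi^{-1}(U)}$. MacPherson's functoriality $g_*\,c_*=c_*\,g_*$ then yields $c(\one_{\pi^{-1}(U)})=g_*\,c(\one_{\rho^{-1}(U)})$, and combining the sequence case with the projection formula $g_*g^*=\id$ (valid because $g$ is proper birational, so $g_*[\widetilde W]=[W]$) gives $c(\one_{\pi^{-1}(U)})=g_*\,\rho^*\,c(\one_U)=g_*g^*\,\pi^*\,c(\one_U)=\pi^*\,c(\one_U)$. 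The main obstacle is the bookkeeping in the second step: one must check that the exceptional divisors and the proper transforms genuinely continue to meet with normal crossings after each blowup, and that every successive center is cut out by some of the current components, so that Corollary~\ref{prespb} applies verbatim at every stage. This is exactly what the notion of a sequence of adapted blowups in Definition~\ref{defadapt} is designed to guarantee, so the difficulty is organizational rather than substantive.
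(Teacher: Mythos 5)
Your proof is correct and follows essentially the same route as the paper: identify $c(\one_U)$ and its preimages as Chern classes of log tangent bundles via the construction of \S\ref{CSMvsGrSec}, apply Corollary~\ref{prespb} along the dominating sequence of adapted blowups, and then descend to $W$ by MacPherson functoriality together with $g_*g^*=\id$ for the proper birational dominating map. The only difference is presentational: you spell out the induction over the sequence of blowups and the bookkeeping of preimages of $U$, which the paper leaves implicit.
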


\begin{proof}
Let $\tilde\pi: \tilde V \to V$ be a sequence of adapted morphisms
dominating $\pi$:
$$\xymatrix{
\tilde V \ar[r]^\alpha \ar[dr]_{\tilde\pi} & W \ar[d]^\pi \\
& V
}$$
The divisor $\tilde E=\tilde\pi^{-1}(\cup E_j)$ is then a divisor with normal
crossings, and $\tilde\pi^{-1}(U)$ is its complement in $\tilde V$.
By Corollary~\ref{prespb}, we have the identity
$$c(T\tilde V(-\log\tilde E))\cap [\tilde V]=\tilde\pi^*(c(TV(-\log E))\cap [V])
=\alpha^* \pi^* (c(TV(-\log E))\cap [V]) .$$
As in \eqref{nuiTW} of \S \ref{CSMvsGrSec}, this is saying that
$$ c(\one_{\tilde\pi^{-1}(U)})=\alpha^* \pi^*( c(\one_U)). $$
The statement then follows by pushing forward through $\alpha$ (applying
MacPherson's theorem), since $\alpha_*\alpha^*=1$ as $\alpha$ is
proper and birational. 
\end{proof}

The identity of CSM classes happens in the homology (or Chow group) of
$W$. Notice that we are not assuming here that $W$ is nonsingular. One
also has $\pi_* c(\one_{\pi^{-1}(U)})=c(U)$ in the homology (Chow
group) of $V$, by MacPherson's theorem \cite{MacPher} on functoriality
of CSM classes. What is surprising about \eqref{CSMpb} is that for
this class of morphisms one can do for pullbacks what functoriality
usually does for pushforward. 

\subsection{Computing the characteristic classes}

In this section we give the explicit formula for the CSM class of the
graph hypersurface $X_{\Gamma_n}$ of the banana graph $\Gamma_n$. The
procedure is somewhat similar conceptually to the one we used in the
computation of the class in the Grothendieck ring, namely we will
use the inclusion--exclusion property of the Chern class and separate 
out the contributions of the part of $X_{\Gamma_n}$ that lies in the
complement of the algebraic simplex $\Sigma_n \subset \P^{n-1}$ 
and of the intersection $X_{\Gamma_n}\cap \Sigma_n$, using the Cremona
transformation to compute the contribution of the first and
inclusion-exclusion again to compute the class of the latter.

As above, let $\cS_n$ be the singularity subscheme of
$\Sigma_n$. We begin by the following preliminary result.

\begin{prop}\label{CSMsingSigma}
The CSM class of $\cS_n$ is given by 
\begin{equation}\label{CSMofSn}
c(\cS_n)=((1+H)^n-1-nH-H^n)\cdot [\P^{n-1}]_*.
\end{equation}
\end{prop}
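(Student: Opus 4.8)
The plan is to identify $\cS_n$ as a set assembled from linearly embedded subspaces, so that the dictionary of Proposition~\ref{CSMGlinprop} converts the class in the Grothendieck ring, already computed in Lemma~\ref{classSnLem}, directly into the sought-after CSM class. Concretely, the reduced support of $\cS_n$ is the locus in $\P^{n-1}$ where at least two homogeneous coordinates vanish, that is, the union $\bigcup_{i<j} L_{ij}$ over all pairs of the codimension-two linear subspaces $L_{ij}=\{t_i=t_j=0\}$. This is visible directly from the ideal \eqref{idealS}: its generators are the partials $\prod_{k\neq i}t_k$ of the equation of $\Sigma_n$, which all vanish exactly when two or more of the $t_k$ are zero. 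Since $\cS_n$ is obtained from linear subspaces by a union, Proposition~\ref{CSMGlinprop} applies, the CSM class being insensitive to the (possibly non-reduced) scheme structure cut out by \eqref{idealS} because it is computed from the constructible function $\one_{\cS_n}$, which sees only the support.

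First I would recall from Lemma~\ref{classSnLem}, taking $r+1=n$, that in the Grothendieck ring
\[
[\cS_n]=\sum_{i=2}^{n-1}\binom{n}{i}\,\bT^{\,n-1-i}.
\]
Under the correspondence of Proposition~\ref{CSMGlinprop}, which matches $\bT^{k}$ with $H^{-k}=[\P^{k}]_*$, this gives at once
\[
c(\cS_n)=\sum_{i=2}^{n-1}\binom{n}{i}\,[\P^{\,n-1-i}]_*.
\]

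It then remains to rewrite everything in the fixed ambient $\P^{n-1}$. Using $[\P^r]_*=H^{\,n-1-r}\,[\P^{n-1}]_*$ one has $[\P^{\,n-1-i}]_*=H^{\,i}\,[\P^{n-1}]_*$, whence
\[
c(\cS_n)=\Big(\sum_{i=2}^{n-1}\binom{n}{i}\,H^{\,i}\Big)\cdot[\P^{n-1}]_*.
\]
The proof concludes with the elementary observation that $\sum_{i=2}^{n-1}\binom{n}{i}H^{i}$ is obtained from $(1+H)^n=\sum_{i=0}^{n}\binom{n}{i}H^i$ by deleting the three terms $i=0,1,n$, namely $1$, $nH$, and $H^n$; this yields exactly the expression $(1+H)^n-1-nH-H^n$ of \eqref{CSMofSn}.

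The only step demanding any real attention, and hence the main obstacle, is the first one: verifying that Proposition~\ref{CSMGlinprop} is legitimately applicable to $\cS_n$, i.e. that $\cS_n$ really is a set built from linear subspaces and that its CSM class depends only on the support $\bigcup_{i<j}L_{ij}$. Once that is granted, the remainder is pure bookkeeping. Should one prefer a self-contained argument avoiding the Grothendieck computation, the same answer follows by direct inclusion--exclusion over the subspaces $L_{ij}$ using the CSM classes $c(\P^r)=((1+H)^{r+1}-H^{r+1})\,[\P^r]_*$, but routing through Lemma~\ref{classSnLem} is appreciably shorter.
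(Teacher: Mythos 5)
Your proof is correct, but it follows a genuinely different route from the paper's. The paper's own proof never passes through the Grothendieck ring: it invokes the result of \cite{Alu06a} that the CSM class of a union of toric orbits is the sum of the homology classes of the orbit closures, applies this to the orbit decomposition of $\P^{n-1}$ to obtain $c(\P^{n-1})=(c(\cS_n)+nH+1)\cdot[\P^{n-1}]_*$ (the terms $nH$ and $1$ accounting for the coordinate hyperplanes and the dense orbit), and then solves using $c(\P^{n-1})=((1+H)^n-H^n)\cdot[\P^{n-1}]_*$. You instead push the class $[\cS_n]$ of Lemma~\ref{classSnLem} through the linear-subspace dictionary of Proposition~\ref{CSMGlinprop}. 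This is legitimate: $\cS_n$ is, as a set, the union of the codimension-two coordinate subspaces $L_{ij}$ (in fact the ideal \eqref{idealS} is radical, so no reducedness subtlety even arises), there is no circularity since both ingredients precede and are proved independently of the present proposition, and the paper itself notes this redundancy later, in the subsection on ``The CSM class and the class in the Grothendieck ring''. The one point you should make explicit is that Proposition~\ref{CSMGlinprop} is stated in the direction $c(X)\Rightarrow [X]$, while you use it in the direction $[X]\Rightarrow c(X)$; the inversion is harmless because the classes $[\P^i]_*$ form a basis of the Chow group of $\P^{n-1}$ and the powers $\bT^i$ are linearly independent in the Grothendieck ring, so both expansions are unique and the coefficient match can be read either way --- but it deserves a sentence. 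As for what each approach buys: yours is shorter and recycles work already done; the paper's is independent of the Grothendieck-ring computation (so the two calculations genuinely cross-check each other rather than one being a formal consequence of the other), and its toric-orbit argument applies to arbitrary torus-invariant loci, not only to unions of linear subspaces. Your closing remark about a self-contained inclusion-exclusion over the $L_{ij}$ is, incidentally, exactly the first option the paper mentions before carrying out the toric argument.
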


\proof Since $\cS_n$ is defined by the ideal \eqref{idealS} 
of the codimension two intersections of the coordinate planes of
$\P^{n-1}$, one can use the inclusion-exclusion property to compute
\eqref{CSMofSn}. Equivalently, one can use the result of
\cite{Alu06a}, which shows that, for a locus that is a union of toric 
orbits, the CSM class is a sum of the homology classes of the
orbit closures. Thus, one can write the CSM class of $\P^{n-1}$ as 
the sum of the CSM class of $\cS_n$, the homology classes of the
coordinate hyperplanes, and the homology class of the whole
$\P^{n-1}$, \ie
$$ c(\P^{n-1})= (c(\cS_n) + nH + 1)\cdot [\P^{n-1}]_*, $$
where the two latter terms correspond to the classes of the 
closures of the higher dimensional orbits. Since
$c(\P^{n-1})=((1+H)^n-H^n)\cdot [\P^{n-1}]_*$, this gives \eqref{CSMofSn}.
\endproof

We now concentrate on the complement $X_{\Gamma_n}\smallsetminus
\Sigma_n$. We again use Lemma \ref{BananaDual} to describe this, via
the Cremona transformation, in terms of $\cL \smallsetminus
\Sigma_n$, with $\cL$ the hyperplane \eqref{Lhyp}. We have the
following result.

\begin{lem}\label{CSMLSigma}
Let $\pi_1: \cG(\cC)\to \P^{n-1}$ be as in \eqref{GraphCrem}. Then 
\begin{equation}\label{CSMpi1L}
c(\pi_1^{-1}(\cL\smallsetminus \Sigma_n))=\pi_1^*(c(\cL\smallsetminus
\Sigma_n)). 
\end{equation}
\end{lem}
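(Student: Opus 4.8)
The plan is to recognize this as a direct application of Corollary~\ref{adapt}, which was established precisely for this kind of situation. The key observation is that the Cremona graph $\cG(\cC)$ with its projection $\pi_1$ is, by Lemma~\ref{SandL}(2), the blowup of $\P^{n-1}$ along the singularity subscheme $\cS_n$ of the algebraic simplex $\Sigma_n$. Since $\cS_n$ is cut out by (intersections of) the coordinate hyperplanes that constitute $\Sigma_n$, this blowup is adapted to the normal crossings divisor $\Sigma_n = \cup_j E_j$ in the sense of Definition~\ref{defadapt}, with the $E_j$ being the $n$ coordinate hyperplanes.

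The main subtlety, and the step I expect to require the most care, is that Corollary~\ref{adapt} as stated computes $c(\one_{\pi_1^{-1}(U)}) = \pi_1^* c(\one_U)$ for $U = \P^{n-1}\smallsetminus \Sigma_n$ the complement of the \emph{whole} normal crossings divisor, whereas here we want the analogous statement for $\cL\smallsetminus\Sigma_n$, i.e.\ for the complement of $\Sigma_n$ inside the hyperplane $\cL$ rather than inside all of $\P^{n-1}$. So first I would enlarge the normal crossings configuration to include $\cL$ itself as one of the divisors. By Lemma~\ref{SandL}(3)--(4), $\cL$ meets every component of $\cS_n$ transversely and $\Sigma_n$ cuts out a simple normal crossings divisor on $\cL$; hence the collection $\{E_j\}$ consisting of the $n$ coordinate hyperplanes together with $\cL$ is again a simple normal crossings divisor in $\P^{n-1}$. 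The blowup $\pi_1$ along $\cS_n$ is still adapted to this enlarged configuration, since $\cS_n$ is cut out by a subset of these divisors (namely the coordinate hyperplanes alone).

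With this setup, I would apply the inclusion--exclusion / restriction formalism of \S\ref{CSMvsGrSec}: the locally closed set $\cL\smallsetminus\Sigma_n$ is obtained from the normal crossings configuration $\{E_j\}\cup\{\cL\}$, and its CSM contribution is computed by the logarithmic-tangent-bundle recipe. Corollary~\ref{adapt} then gives $c(\one_{\pi_1^{-1}(\cL\smallsetminus\Sigma_n)}) = \pi_1^* c(\one_{\cL\smallsetminus\Sigma_n})$ directly, because $\pi_1$ restricts to an isomorphism over the complement $U'$ of the full divisor and, more precisely, the pullback preservation of logarithmic forms in Theorem~\ref{pbForms} applies verbatim once $\cL$ is among the $F_j$'s. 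The content of \eqref{CSMpi1L} is exactly the assertion that this preservation holds for the stratum $\cL\smallsetminus\Sigma_n$, which is a union of strata of the enlarged configuration.

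An alternative, perhaps cleaner, route I would keep in mind is to avoid adding $\cL$ to the configuration and instead argue on $\cL$ itself. Since $\pi_1$ is an isomorphism over $\cL\smallsetminus\Sigma_n$ (as this lies in the open set $\cU$ where all coordinates are nonzero, by Lemma~\ref{CgraphEq}), one can restrict the whole adapted-blowup picture to $\cL$, where $\Sigma_n$ induces a simple normal crossings divisor by Lemma~\ref{SandL}(4), and apply Theorem~\ref{pbForms} and Corollary~\ref{adapt} to the restricted situation. Either way, the essential input is that $\pi_1$ is adapted to the normal crossings structure and that $\cL$ interacts transversely with it; once that is in place the identity \eqref{CSMpi1L} is immediate from the pullback formula for CSM classes established above. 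The one place to be vigilant is checking that restricting the blowup to the proper transform of $\cL$ genuinely produces a sequence of adapted blowups (rather than merely a birational map), so that the hypotheses of Corollary~\ref{adapt} are met on the nose.
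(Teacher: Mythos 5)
Your first (main) route has a genuine gap at its crucial step. Corollary~\ref{adapt} concludes $c(\one_{\pi^{-1}(U)})=\pi^*c(\one_U)$ only for $U$ the complement of the \emph{entire} normal crossings configuration; once you enlarge the configuration to $\Sigma_n\cup\cL$, the set $\cL\smallsetminus\Sigma_n$ lies \emph{inside} one of the divisors (namely $\cL$), hence is disjoint from that complement, and the corollary as stated says nothing about it. Your sentence that Corollary~\ref{adapt} ``then gives [the identity] directly'' followed by ``the content of \eqref{CSMpi1L} is exactly the assertion that this preservation holds for the stratum'' is a restatement of the lemma, not a deduction of it. The route can be repaired: write $\one_{\cL\smallsetminus\Sigma_n}=\one_{\P^{n-1}\smallsetminus\Sigma_n}-\one_{\P^{n-1}\smallsetminus(\Sigma_n\cup\cL)}$; both sub-configurations $\Sigma_n$ and $\Sigma_n\cup\cL$ contain all the coordinate hyperplanes (and the enlarged one is still simple normal crossings by Lemma~\ref{SandL} (3)--(4)), so $\pi_1$ is dominated by a sequence of blowups adapted to each; apply Corollary~\ref{adapt} to the two open complements and conclude by linearity of $c_*$ and of $\pi_1^*$. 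Note also a secondary imprecision: $\cS_n$ is a \emph{union} of pairwise intersections of coordinate hyperplanes, not an intersection of some of them, so $\pi_1$ is not itself a single adapted blowup in the sense of Definition~\ref{defadapt}; it is only \emph{dominated} by a sequence of adapted blowups (the blowups of the coordinate strata in increasing order of dimension), which is what Corollary~\ref{adapt} actually requires.

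Your ``alternative route'' is in fact the paper's proof: restrict $\pi_1$ to $\pi_1^{-1}(\cL)$ and apply Corollary~\ref{adapt} there. But the verification you flag as ``the one place to be vigilant'' is precisely the content the proof must supply, and you leave it unchecked. The paper supplies it as follows: by Lemma~\ref{SandL} (2) and (3), $\pi_1^{-1}(\cL)$ is the blowup of $\cL$ along $\cL\cap\cS_n$, which is the singularity subscheme of the simple normal crossings divisor $\cL\cap\Sigma_n$; and the blowup of a variety along the singularity subscheme of such a divisor is dominated by the sequence of blowups along the intersections of its components, in increasing order of dimension, which is a sequence of adapted blowups. (Equivalently: $\pi_1$ itself is dominated by a sequence adapted to $\Sigma_n$, and $\cL$ and its proper transforms meet all the centers transversely.) With that check in place, Corollary~\ref{adapt} applied to $\pi_1^{-1}(\cL)\to\cL$ with $U=\cL\smallsetminus\Sigma_n$ yields \eqref{CSMpi1L}. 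So the tools you chose are the right ones, but as written neither route closes: the first draws a conclusion the corollary does not give, and the second defers the one hypothesis that must be verified.
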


\proof By Corollary~\ref{adapt}, it suffices to show that the restriction of $\pi_1$
to $\pi^{-1}(\cL)$ is {\em adapted} to $\cL\cap \Sigma_n$.
By (2) and (3) of Lemma \ref{SandL}, we know that $\pi_1^{-1}(\cL)$ is
the blowup of $\cL$ along $\cL\cap \cS_n$, that is, the singularity subscheme
of $\cL\cap \Sigma_n$. The blowup of a variety along the singularity
subscheme of a divisor with simple normal crossings is
dominated by the sequence of blowups along the intersections of the
components of the divisor, in increasing order of dimension. This sequence
is adapted, hence the claim follows. Equivalently, notice that $\pi_1: \cG(\cC) \to
\P^{n-1}$ is itself dominated by a sequence adapted to
$\Sigma_n$. Moreover, $\cL$ and its proper transform intersect
all centers of the blowups in the sequence transversely. This also
shows that the restriction of $\pi_1$ to $\pi_1^{-1}(\cL)$
is adapted to $\cL\cap \Sigma_n$.
\endproof

We have the following result for the CSM class of
$\cL\smallsetminus \Sigma_n$ in terms of the homology (Chow group) 
classes $[\P^r]_*$.

\begin{lem}\label{CSMLhLem}
The CSM class of $\cL\smallsetminus \Sigma_n$ is given by
\begin{equation}\label{CSMLSigmaCompl}
c(\cL\smallsetminus \Sigma_n)=[\P^{n-2}]_* - [\P^{n-3}]_* + \cdots + (-1)^n
[\P^0]_* .
\end{equation} 
Let $h$ denote the homology class of the hyperplane section in the
source $\P^{n-1}$ of diagram \eqref{GraphCrem}. Then
\eqref{CSMLSigmaCompl} is written equivalently as
\begin{equation}\label{CSMLSigmaCompl2}
c(\cL\smallsetminus \Sigma_n)= (1+h)^{-1} h \cdot [\P^{n-1}]_* .
\end{equation}
\end{lem}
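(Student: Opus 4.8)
The plan is to apply the logarithmic tangent bundle description of CSM classes recalled in \S\ref{CSMvsGrSec}. Since $\cL$ is a hyperplane it is a nonsingular $\P^{n-2}$, and by parts (3)--(4) of Lemma~\ref{SandL} the coordinate hyperplanes cut out on it a divisor with simple normal crossings $E=\cL\cap\Sigma_n=E_1\cup\cdots\cup E_n$ whose $n$ components $E_i$ are hyperplanes of $\cL$. Thus $U=\cL\smallsetminus\Sigma_n$ is the complement of a simple normal crossings divisor in the nonsingular complete variety $\cL$, and no resolution is needed: taking $W_i=\cL$ with the identity map in the stratified formula \eqref{nuiTW}, I obtain directly
\[
c(\cL\smallsetminus\Sigma_n)=c(T\cL(-\log E))\cap[\cL]_* .
\]

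First I would rewrite the right-hand side using the formula $c(TM(-\log D))=c(TM)/\prod_i(1+D_i)$ recalled at the beginning of \S\ref{CSMsect}. Writing $h$ for the hyperplane class of $\cL$, each component $E_i$ has class $h$, and since $\cL\cong\P^{n-2}$ we have $c(T\cL)=(1+h)^{n-1}$. Hence
\[
c(T\cL(-\log E))\cap[\cL]_* = \frac{(1+h)^{n-1}}{(1+h)^n}\cap[\cL]_* = (1+h)^{-1}\cap[\cL]_* .
\]
Identifying $h$ with the restriction of the ambient hyperplane class $H$ and using $[\cL]_*=H\cap[\P^{n-1}]_*=h\cdot[\P^{n-1}]_*$, this is exactly the compact expression \eqref{CSMLSigmaCompl2}.

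To obtain \eqref{CSMLSigmaCompl}, I would then expand $(1+h)^{-1}$ as the geometric series $1-h+h^2-\cdots$ and cap against $[\cL]_*=[\P^{n-2}]_*$. Using $h^k\cap[\P^{n-2}]_*=[\P^{n-2-k}]_*$, the series terminates after the $[\P^0]_*$ term, with sign $(-1)^{n-2}=(-1)^n$, giving
\[
c(\cL\smallsetminus\Sigma_n)=[\P^{n-2}]_*-[\P^{n-3}]_*+\cdots+(-1)^n[\P^0]_* .
\]

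The computation is short once the machinery is in place; the only genuine point requiring care is the bookkeeping between the two roles of the hyperplane class --- as the generator of the Chow ring of $\cL$ in the intrinsic computation, and as the restriction of the ambient class $H$ when passing to the classes $[\P^r]_*$ in $\P^{n-1}$ --- together with the observation that the geometric series terminates precisely because $\dim\cL=n-2$. The conceptual content is that, $\cL$ being already smooth with $E$ already a simple normal crossings divisor, the general stratified formula of \S\ref{CSMvsGrSec} collapses to a single log-tangent-bundle term, so that the entire statement reduces to the elementary identity $(1+h)^{n-1}/(1+h)^n=(1+h)^{-1}$.
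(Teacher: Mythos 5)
Your proposal is correct and follows essentially the same route as the paper: both identify $\cL\cap\Sigma_n$ as a simple normal crossings divisor with $n$ components of class $h$ on $\cL\cong\P^{n-2}$ and compute $c(\cL\smallsetminus\Sigma_n)=\frac{c(T\cL)}{(1+h)^n}\cap[\cL]_*=\frac{(1+h)^{n-1}}{(1+h)^n}\,h\cdot[\P^{n-1}]_*$ via the log tangent bundle formula. Your explicit geometric-series expansion to recover \eqref{CSMLSigmaCompl} is merely spelling out what the paper leaves implicit.
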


\proof The divisor $\Sigma_n$ has $n$ components with homology class $h$, hence so does
$\cL\cap \Sigma_n$. Since the CSM class of a divisor with normal
crossings is computed by the Chern class of the bundle of vector fields with 
logarithmic zeros along the components of the divisor, we find
$$ c(\cL\smallsetminus \Sigma_n)= \frac{c(T\cL)\cap [\cL]_*}{(1+h)^n}
=\frac{(1+h)^{n-1}}{(1+h)^n}\,h\cdot [\P^{n-1}]_* . $$
\endproof

We then have the following result that gives the formula for
$c(X_{\Gamma_n})$. 

\begin{thm}\label{CSMXGammaThm}
The (push-forward to $\P^{n-1}$ of the) CSM class of the banana 
graph hypersurface $X_{\Gamma_n}$ is given by 
\begin{equation}\label{CSMXGamma}
c(X_{\Gamma_n})=((1+H)^n-(1-H)^{n-1}-nH-H^n)\cdot[\P^{n-1}]_* .
\end{equation}
\end{thm}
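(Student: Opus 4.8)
The plan is to compute the CSM class of $X_{\Gamma_n}$ via inclusion–exclusion, splitting the hypersurface into its piece outside the algebraic simplex and its piece inside $\Sigma_n$. We know from Lemma \ref{XGammaS} that $X_{\Gamma_n}\cap\Sigma_n=\cS_n$, so the embedded inclusion–exclusion relation for CSM classes (\S\ref{CSMvsGrSec}) gives
\[
c(X_{\Gamma_n})=c(X_{\Gamma_n}\smallsetminus\Sigma_n)+c(\cS_n).
\]
The second term is already in hand from Proposition \ref{CSMsingSigma}, namely $c(\cS_n)=((1+H)^n-1-nH-H^n)\cdot[\P^{n-1}]_*$. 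So the whole computation reduces to evaluating the contribution $c(X_{\Gamma_n}\smallsetminus\Sigma_n)$ of the open part.

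To handle the open part I would exploit the Cremona isomorphism of Lemma \ref{BananaDual}, which identifies $X_{\Gamma_n}\smallsetminus\Sigma_n$ with $\cL\smallsetminus\Sigma_n$ via $\pi_2\circ\pi_1^{-1}$. The key technical input is the pullback formula for logarithmic CSM contributions: Lemma \ref{CSMLSigma} gives $c(\pi_1^{-1}(\cL\smallsetminus\Sigma_n))=\pi_1^*\,c(\cL\smallsetminus\Sigma_n)$, and by Corollary \ref{DualGrCoroll}/Lemma \ref{BananaDual} the map $\pi_2$ restricts to an isomorphism $\pi_1^{-1}(\cL\smallsetminus\Sigma_n)\to X_{\Gamma_n}\smallsetminus\Sigma_n$. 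Therefore
\[
c(X_{\Gamma_n}\smallsetminus\Sigma_n)=\pi_{2*}\,c(\pi_1^{-1}(\cL\smallsetminus\Sigma_n))=\pi_{2*}\,\pi_1^*\,c(\cL\smallsetminus\Sigma_n),
\]
where the first equality is functoriality of CSM classes under the proper isomorphism $\pi_2$ on this locus. I would feed in the explicit value $c(\cL\smallsetminus\Sigma_n)=(1+h)^{-1}h\cdot[\P^{n-1}]_*$ from Lemma \ref{CSMLhLem}, writing $h$ for the hyperplane class in the source $\P^{n-1}$ and $H$ for the hyperplane class in the target.

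The genuine computational heart is then the transport of the class $(1+h)^{-1}h\cdot[\P^{n-1}]_*$ through the correspondence $\pi_{2*}\pi_1^*$, \ie working out how the Cremona graph mixes the source and target hyperplane classes. Since $\cG(\cC)$ is the blowup $\pi_1$ of $\P^{n-1}$ along $\cS_n$ (Lemma \ref{SandL}(2)) and symmetrically $\pi_2$ is the blowup along $\cS_n$ in the target, one must track $\pi_1^* h$ and the exceptional divisors, then push forward by $\pi_2$ using the projection formula and the known intersection theory of the Cremona correspondence. The expected answer for the open contribution is $c(X_{\Gamma_n}\smallsetminus\Sigma_n)=\big((1+H)^n-(1-H)^{n-1}-(1+H)^n\cdots\big)$-type combination; concretely, it should produce the $-(1-H)^{n-1}$ together with the remaining terms so that, upon adding $c(\cS_n)$, the $(1+H)^n$ from $\cL\smallsetminus\Sigma_n$ and the $(1+H)^n-1-nH-H^n$ from $\cS_n$ reorganize into the stated $(1+H)^n-(1-H)^{n-1}-nH-H^n$.

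The main obstacle I anticipate is precisely this last transport step: the pushforward $\pi_{2*}\pi_1^*$ is not a naive substitution $h\mapsto H$, because the two projections blow up the \emph{same} center $\cS_n$ but resolve it in opposite directions, so the exceptional divisors enter nontrivially. The sign-alternating factor $(1-H)^{n-1}$ strongly suggests that the correct bookkeeping involves expanding $(1+h)^{-1}h$ as $\sum(-1)^j h^{j+1}$, identifying each $h^{j+1}\cdot[\P^{n-1}]_*$ with $[\P^{n-2-j}]_*$ in the source (as in Lemma \ref{CSMLhLem}, where $c(\cL\smallsetminus\Sigma_n)=\sum(-1)^k[\P^{n-2-k}]_*$), and then determining the image of each linear-space class $[\P^{r}]_*$ under the Cremona correspondence. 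I would therefore verify carefully that $\pi_{2*}\pi_1^*$ sends $[\P^{r}]_*$ (the source class) to the appropriate target combination, using that $\cC$ restricts to a biregular isomorphism off $\Sigma_n$ so that the pushforward of a generic linear section is controlled, while corrections concentrated on $\Sigma_n\cap X_{\Gamma_n}=\cS_n$ are already accounted for by the inclusion–exclusion with $c(\cS_n)$. Once this dictionary is pinned down, assembling $c(X_{\Gamma_n}\smallsetminus\Sigma_n)+c(\cS_n)$ and simplifying yields \eqref{CSMXGamma}.
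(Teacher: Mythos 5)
Your setup coincides exactly with the paper's: inclusion--exclusion with $X_{\Gamma_n}\cap\Sigma_n=\cS_n$, the class $c(\cS_n)$ from Proposition \ref{CSMsingSigma}, and the reduction of the open part to $c(X_{\Gamma_n}\smallsetminus\Sigma_n)=\pi_{2*}\pi_1^*\,c(\cL\smallsetminus\Sigma_n)$ via Lemmata \ref{BananaDual}, \ref{CSMLSigma} and \ref{CSMLhLem}. But the proof is not complete: the step you yourself flag as ``the genuine computational heart'' --- evaluating $\pi_{2*}\pi_1^*$ --- is precisely the step you do not carry out. You describe what the answer should be (the $-(1-H)^{n-1}$ term) and propose to ``determine the image of each linear-space class $[\P^r]_*$ under the Cremona correspondence,'' but you give no mechanism for computing that dictionary, and it is not a formal consequence of anything you have assembled: the Cremona map does not send linear spaces to linear spaces (indeed the image of the general hyperplane $\cL$ is the degree-$(n-1)$ hypersurface $X_{\Gamma_n}$ itself), and tracking the exceptional divisors of the two blowups directly is genuinely messy, since the exceptional locus has $2^n-2$ components.

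The missing idea, which is what makes the paper's proof work, is to use Lemma \ref{CgraphEq}: $\cG(\cC)$ is a complete intersection of $n-1$ hypersurfaces of class $h+H$ in $\P^{n-1}\times\P^{n-1}$, so that
\begin{equation*}
[\cG(\cC)]_*=(h+H)^{n-1}\cdot[\P^{n-1}\times\P^{n-1}]_* .
\end{equation*}
Interpreting $\pi_1^*\,c(\cL\smallsetminus\Sigma_n)$ as $\sum_{i\ge 1}(-1)^{i-1}h^i\cdot[\cG(\cC)]_*$ and pushing forward by $\pi_2$ then becomes elementary intersection theory in the product: writing $c(X_{\Gamma_n}\smallsetminus\Sigma_n)=\sum_i a_iH^i\cdot[\P^{n-1}]_*$, the projection formula gives $a_i=\int H^{n-1-i}\sum_j(-1)^{j-1}h^j(h+H)^{n-1}\cdot[\P^{n-1}\times\P^{n-1}]_*$, and since the only monomial of top degree that evaluates to $1$ is $h^{n-1}H^{n-1}$, only the $j=i$ term survives and $a_i=(-1)^{i-1}\binom{n-1}{i}$. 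Hence $c(X_{\Gamma_n}\smallsetminus\Sigma_n)=(1-(1-H)^{n-1})\cdot[\P^{n-1}]_*$, and adding $c(\cS_n)$ yields \eqref{CSMXGamma}. Equivalently, in your language: the ``dictionary'' is $\pi_{2*}\pi_1^*[\P^{n-1-j}]_*=\binom{n-1}{j}[\P^{n-1-j}]_*$, but this coefficient is exactly what requires the complete intersection description of $\cG(\cC)$; without it your argument stops one step short of the theorem.
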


\proof Using inclusion-exclusion for CSM classes we have
$$ c(X_{\Gamma_n}) =c(X_{\Gamma_n}\cap \Sigma_n) +
c(X_{\Gamma_n}\smallsetminus \Sigma_n). $$
By Lemma \ref{XGammaS}, we know that $X_{\Gamma_n}\cap
\Sigma_n=\cS_n$, hence the first term is given by
\eqref{CSMofSn}. Thus, we are reduced to showing that
\begin{equation}\label{CSMSigmaCompl}
c(X_{\Gamma_n}\smallsetminus \Sigma_n)=(1-(1-H)^{n-1})\cdot
[\P^{n-1}]_* .
\end{equation}
Combining Lemmata \ref{CSMLSigma} and \ref{CSMLhLem}, we find
$$ \begin{array}{rl}
c(X_{\Gamma_n}\smallsetminus \Sigma_n) = & {\pi_2}_* \pi_1^* \left(\sum_{i=1}^{n-1} (-1)^{i-1}
h^i \cdot [\P^{n-1}]_* \right) \\[2mm]
=& {\pi_2}_* \left(\sum_{i=1}^{n-1} (-1)^{i-1}h^i \cdot
[\cG(\cC)]_* \right),
\end{array} $$
where we view $h$ as a divisor class on $\cG(\cC)$, suppressing the
pullback notation. Let $H$ denote the hyperplane class in the target
$\P^{n-1}$ of diagram \eqref{GraphCrem}, as well as its pullback to
$\cG(\cC)$. Notice that, by \eqref{eqgraphCrem}, $\cG(\cC)$ is a complete
intersection of $n-1$ hypersurfaces of homology class $h+H$ in
$\P^{n-1}\times \P^{n-1}$. Thus, we obtain
$$ c(X_{\Gamma_n}\smallsetminus \Sigma_n)={\pi_2}_* 
\left(\sum_{i=1}^{n-1} (-1)^{i-1}h^i (h+H)^{n-1}\cdot 
[\P^{n-1}\times \P^{n-1}]_* \right). $$
Finally, we have to evaluate the pushforward via $\pi_2$. We can write
$$ c(X_{\Gamma_n}\smallsetminus \Sigma_n) = \sum_{i=1}^{n-1} a_i
H^i\cdot [\P^{n-1}]_* , $$
where we need to evaluate the integers $a_i$. Since
$$ a_i = \int H^{n-1-i}\cdot c(X_{\Gamma_n}\smallsetminus \Sigma_n),$$
by the projection formula we obtain
$$ a_i = \int H^{n-1-i}\sum_{i=1}^{n-1} (-1)^{i-1}h^i (h+H)^{n-1}
\cdot [\P^{n-1}\times \P^{n-1}]_*. $$
In $\P^{n-1}\times \P^{n-1}$, the only nonzero monomial in $h$, $H$
of degree $2n-2$ is $h^{n-1}H^{n-1}$, which evaluates
to~$1$. Therefore, we have
$$ a_i=(-1)^{i-1}\cdot \text{coefficient of $h^{n-1-i}H^i$ in $(h+H)^{n-1}$}
=(-1)^{i-1}\binom {n-1}{i} .$$
We then obtain
$$ \begin{array}{rl}
c(X_{\Gamma_n}\smallsetminus \Sigma_n) = & \sum_{i=1}^{n-1} a_i
H^i\cdot [\P^{n-1}]_* \\[2mm]
= & \sum_{i=1}^{n-1} (-1)^{i-1} \binom{n-1}i H^i \cdot
[\P^{n-1}]_* \\[2mm]
= & (1-(1-H)^{n-1})\cdot [\P^{n-1}]_* .
\end{array} $$
\endproof

This gives a different way of computing the topological Euler
characteristic of $X_{\Gamma_n}$, which we already derived from the
class in the Grothendieck ring in Corollary \ref{EulCharGr}.

\begin{cor}\label{CorEulChar}
The Euler characteristic of the banana graph hypersurface
$X_{\Gamma_n}$ is given by the formula \eqref{chiBn}.
\end{cor}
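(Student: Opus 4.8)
The plan is to invoke the fundamental property of CSM classes recalled in \S\ref{CSMdefSec}: the degree of the zero-dimensional component of $c(X_{\Gamma_n})$ equals the topological Euler characteristic, so that $\int c(X_{\Gamma_n}) = \chi(X_{\Gamma_n})$. Since Theorem \ref{CSMXGammaThm} already delivers the push-forward of the CSM class to $\P^{n-1}$ in the closed form \eqref{CSMXGamma}, the entire computation reduces to reading off a single coefficient. Concretely, the point class satisfies $[\P^0]_* = H^{n-1}\cdot[\P^{n-1}]_*$, so the Euler characteristic is precisely the coefficient of $H^{n-1}$ in the polynomial $(1+H)^n-(1-H)^{n-1}-nH-H^n$.

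First I would isolate the contribution of each summand to the coefficient of $H^{n-1}$. The term $(1+H)^n$ contributes $\binom{n}{n-1}=n$; the term $-(1-H)^{n-1}$ contributes $-(-1)^{n-1}\binom{n-1}{n-1}=(-1)^n$; and for $n\geq 3$ the terms $-nH$ and $-H^n$ contribute nothing, since $n-1>1$ and $n-1<n$. Summing, the coefficient of $H^{n-1}$ is $n+(-1)^n$, which is exactly the value prescribed by \eqref{chiBn}. This establishes the corollary.

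Since this is a direct extraction of one coefficient from the expression already proved in Theorem \ref{CSMXGammaThm}, I do not expect any genuine obstacle: the substantive work is entirely contained in that theorem, and here one merely specializes it. The only point requiring a moment's care is the bookkeeping of which summands actually reach degree $H^{n-1}$. It is worth remarking that the resulting formula reproduces the same alternating parity $(-1)^n$ obtained independently from the Grothendieck-ring computation in Corollary \ref{EulCharGr}, which serves as a welcome consistency check between the two derivations.
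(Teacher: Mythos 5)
Your proof is correct and is essentially identical to the paper's: both read off $\chi(X_{\Gamma_n})$ as the degree of the zero-dimensional part of the CSM class from Theorem \ref{CSMXGammaThm}, i.e.\ the coefficient of $H^{n-1}$ in $(1+H)^n-(1-H)^{n-1}-nH-H^n$, obtaining $n+(-1)^n$. Your version merely makes explicit the coefficient bookkeeping (including the observation that $-nH$ and $-H^n$ do not reach degree $n-1$ for $n\ge 3$) that the paper leaves implicit.
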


\proof For a projective hypersurface $X$ the value of the Euler
characteristic $\chi(X)$ can be read off the CSM class as the
coefficient of the top degree term. Thus, from  \eqref{CSMXGamma}
we obtain $\chi(X_{\Gamma_n})=n+ (-1)^n$.
\endproof

\begin{rem}{\em
The coefficient of $H^k$ in the CSM class is as prescribed in 
\eqref{CSMguess}. In particular, these coefficients are positive 
for all $n\ge 2$ and $1\le k\le n-1$. Thus, banana graphs provide
an infinite family of graphs for which Conjecture~\ref{posconj} holds.
}\end{rem}

\begin{rem}{\em
As pointed out in \S\ref{CSMvsGrSec}, CSM classes are defined (as classes
in the Chow group of an ambient variety) for locally closed subsets. 
It follows from
Theorem~\ref{CSMXGammaThm} that the CSM class of the
complement of $X_{\Gamma_n}$ in $\P^{n-1}$ is
\[
c(\P^{n-1}\smallsetminus X_{\Gamma_n})= ((1-H)^{n-1}+nH)
\cap [\P^{n-1}]_*.
\]
}\end{rem}

\subsection{The CSM class and the class in the Grothendieck ring}

We discuss here the formal similarity, as well as the discrepancy,
between the expression for the CSM class and the formula for the
class in the Grothendieck ring of the graph hypersurface
$X_{\Gamma_n}$. 

As noted in Propostion~\ref{CSMGlinprop}, the CSM class and the
class in the Grothendieck group carry the same information for
subsets of projective space consisting of unions of linear subspaces.
The algebraic simplex, as well as its trace on a transversal 
hyperplane, are subsets of this type. Thus, some of the work
performed in \S\S\ref{MotSect} and~\ref{CSMsect} is redundant.

The class $[X_{\Gamma_n}]$ of the graph
hypersurface $X_{\Gamma_n}$ of the banana graph $\Gamma_n$ can be
separated into two parts, only one of which --the part that comes
from the simplex-- is linearly embedded. These two parts are
responsible, respectively, for the formal similarity and for the
discrepancy between the expression for the class $[X_{\Gamma_n}]$ and
the one for $c(X_{\Gamma_n})$. 

In fact, denoting the unorthodox $H^{-1}$ of Proposition~\ref{CSMGlinprop}
by a variable~$T$, the CSM class of $X_{\Gamma_n}$ has the form
\begin{equation}\label{CSMvarT}
c(X_{\Gamma_n}) = \frac{(1+T)^n-1}T-(T-1)^{n-1}-n\, T^{n-2}.
\end{equation}
The central term is the one that differs from the expression
\eqref{clXGammaGr}.
Adopting the same variable $T$ for the class $\bT$ of a torus, 
the discrepancy is measured by the amount
$$ \frac{T^n-(-1)^n}{T+1}-(T-1)^{n-1} .$$

\section{Classes of cones}\label{ConesSect}

We make here a general observation which may be useful in other
computations of CSM classes and classes in the Grothendieck ring 
for graph hypersurfaces. One can observe
that often the graph hypersurfaces $X_\Gamma$ happen to be cones over
hypersurfaces in smaller projective spaces. 

There are simple operations one can perform on a given graph, which
ensure that the resulting graph will correspond to a hypersurface that
is a cone. Here is a list:
\begin{itemize}
\item Subdividing an edge.
\item Connecting two graphs by a pair of edges. 
\item Appending a tree to a vertex (in this case the resulting graph
will not be 1PI).
\end{itemize}
One can see easily that in each of these cases the resulting
hypersurface is a cone, since in the first two cases 
the resulting graph polynomial $\Psi_\Gamma$ will depend on two 
of the variables only through their linear combination
$t_i+t_j$, while in the last case $\Psi_\Gamma$ does not depend 
on the variables of the edges in the tree. 

It may then be useful to provide an explicit formula for computing the
CSM class and the class in the Grothendieck ring for cones. 
The result can be seen as a generalization of the simple formula for
the Euler characteristic. 

\begin{lem}\label{EulCharCone}
Let $C^k(X)$ be a cone in $\P^{n+k}$ of a hypersurface $X\subset
\P^n$. Then the Euler characteristic satisfies 
$$ \chi(C^k(X))=\chi(X) + k. $$
\end{lem}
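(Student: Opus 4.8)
The plan is to realize $C^k(X)$ as the disjoint union of its vertex $V\cong \P^{k-1}$ and the complement $C^k(X)\smallsetminus V$, to identify the latter as an affine bundle over $X$, and then to apply additivity and multiplicativity of the Euler characteristic. Concretely, I would write $X=\{F(t_0,\ldots,t_n)=0\}\subset \P^n$ for a homogeneous polynomial $F$, so that $C^k(X)\subset \P^{n+k}$ is defined by the \emph{same} equation $F(t_0,\ldots,t_n)=0$ in the larger coordinate ring, the new variables $t_{n+1},\ldots,t_{n+k}$ not appearing. Its vertex is the linear subspace $V=\{t_0=\cdots=t_n=0\}\cong \P^{k-1}$, which is contained in $C^k(X)$ precisely because $F$ is homogeneous of positive degree and hence vanishes identically on $V$. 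This gives immediately $\chi(V)=\chi(\P^{k-1})=k$.

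Next I would analyze the linear projection $\pi:\P^{n+k}\smallsetminus V\to \P^n$ away from the center $V$, namely $(t_0:\cdots:t_{n+k})\mapsto (t_0:\cdots:t_n)$, and restrict it to $C^k(X)\smallsetminus V$. Since the defining equation involves only the first $n+1$ coordinates, this restriction lands in $X$ and is surjective onto it. The fiber over a point $x=(a_0:\cdots:a_n)\in X$ consists of the points $(a_0:\cdots:a_n:s_1:\cdots:s_k)$ with $(s_1,\ldots,s_k)$ arbitrary, which is an affine space $\A^k$; more precisely $\pi$ restricts to a Zariski-locally trivial bundle with fiber $\A^k$ over $X$, trivializing over the standard affine charts of $\P^n$. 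Equivalently, in the Grothendieck ring one has $[C^k(X)\smallsetminus V]=[X]\cdot \bL^k$.

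Applying the Euler characteristic, which is an additive invariant (as recalled in \S\ref{GrDefSec}) and satisfies $\chi(\A^k)=1$, I would then obtain $\chi(C^k(X)\smallsetminus V)=\chi(X)\cdot \chi(\A^k)=\chi(X)$. Finally, additivity along the decomposition $C^k(X)=V\sqcup(C^k(X)\smallsetminus V)$ yields
\[
\chi(C^k(X))=\chi(V)+\chi(C^k(X)\smallsetminus V)=k+\chi(X),
\]
as claimed.

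The only point requiring real care is the verification that the projection from the vertex is genuinely a locally trivial $\A^k$-bundle over $X$ (rather than merely a morphism with $\A^k$ fibers), since this is what guarantees the multiplicativity $\chi(C^k(X)\smallsetminus V)=\chi(X)\cdot 1$. An alternative that sidesteps any bundle-triviality check is to treat the one-step cone $k=1$ first—where $C^1(Y)\smallsetminus\{v\}\to Y$ is an $\A^1$-bundle, giving $\chi(C^1(Y))=\chi(Y)+1$—and then induct on $k$ via $C^k(X)=C^1(C^{k-1}(X))$, using that each intermediate cone is again a hypersurface in the appropriate projective space.
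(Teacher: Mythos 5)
Your proof is correct and is essentially the paper's own argument: the paper likewise decomposes the cone into its vertex and an affine-bundle complement over $X$, treating the case $k=1$ via $C(X)=(X\times \A^1)\cup\{pt\}$ and then inducting — which is precisely the alternative route you sketch in your final paragraph. Your one-step version for general $k$, splitting off the vertex locus $V\cong\P^{k-1}$ with $\chi(\P^{k-1})=k$ and using the Zariski-locally trivial $\A^k$-bundle $C^k(X)\smallsetminus V\to X$, is just the same idea with the induction unrolled.
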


\proof Consider first the case of $C(X)=C^1(X)$. We have
$$ C(X)=(X\times \P^1) / (X\times \{ pt \}) = (X\times \A^1) \cup \{ pt \}, $$
from which, by the inclusion-exclusion property of the Euler
characteristic we immediately obtain $\chi(C(X))=\chi(X) + 1$. The
result then follows inductively.
\endproof

The case of the CSM class is given by the following result. 

\begin{prop}\label{CSMcone}
Let $i: X\hookrightarrow \P^m$ be a subvariety, and let 
$j: C(X)\hookrightarrow \P^{m+1}$ be the cone over $X$. 
Let $H$ denote the hyperplane class and let
\[
i_* c(X)=f(H)\cap [\P^m]_*
\]
be the CSM class of $X$ expressed in the Chow group (homology) of the
ambient $\P^m$. Then the CSM class of the cone (in the ambient
$\P^{m+1}$) is given by
\begin{equation}\label{coneCSM}
j_* c(C(X))=(1+H) f(H)\cap [\P^{m+1}]_* + [\P^0]_* .
\end{equation}
\end{prop}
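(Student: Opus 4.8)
The plan is to isolate the vertex by inclusion--exclusion and then understand the ``punctured cone'' $C(X)\smallsetminus\{v\}$ as a line bundle over $X$, computing its embedded CSM class by passing to a projective completion and pushing forward. Write $v$ for the vertex of $C(X)$. Since $\{v\}$ is a single reduced point, $c(\{v\})=[\P^0]_*$, and the embedded inclusion--exclusion property recalled in \S\ref{CSMvsGrSec} gives
\[
j_* c(C(X)) = j_* c(C(X)\smallsetminus\{v\}) + [\P^0]_* .
\]
Everything therefore reduces to showing $j_* c(C(X)\smallsetminus\{v\}) = (1+H) f(H)\cap[\P^{m+1}]_*$, where I normalize $f$ to have degree $\le m$ (so that $f(H)\cap[\P^m]_*=i_*c(X)$ determines $f$ unambiguously; note that capping $(1+H)f(H)$ with $[\P^{m+1}]_*$ is sensitive to the top coefficient of $f$, so this normalization is essential).

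Next I would exploit the projection from the vertex. Choosing coordinates so that $v=[0:\cdots:0:1]$ and $X\subset\{x_{m+1}=0\}\cong\P^m$, projection from $v$ exhibits $C(X)\smallsetminus\{v\}$ as the total space of a line bundle over $X$, and identifies it with the restriction to $C(X)$ of the blow-down $q:\widetilde\P=\mathrm{Bl}_v\P^{m+1}\to\P^{m+1}$. The blow-up is a $\P^1$-bundle $p:\widetilde\P\to\P^m$, and the proper transform $\widetilde C=p^{-1}(X)$ of the cone is a $\P^1$-bundle over $X$. The map $q$ restricts to an isomorphism $\widetilde C\smallsetminus S_v \xrightarrow{\ \sim\ } C(X)\smallsetminus\{v\}$ and contracts the section $S_v=\widetilde C\cap F\cong X$ over the exceptional divisor $F$ to the single point $v$.

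Then I would compute $c(\widetilde C)$ from the bundle structure and push it to $\P^{m+1}$. Applying MacPherson functoriality \cite{MacPher} to $\one_{\widetilde C}=\one_{S_v}+\one_{\widetilde C\smallsetminus S_v}$ and pushing forward by $q$ yields $q_*c(\one_{S_v})=\chi(X)[\P^0]_*$ (the section $S_v$ collapses to $v$ with fiber Euler characteristic $\chi(X)$) and $q_*c(\one_{\widetilde C\smallsetminus S_v})=j_*c(C(X)\smallsetminus\{v\})$ (isomorphism), so that
\[
j_*c(C(X)\smallsetminus\{v\}) = q_* c(\widetilde C) - \chi(X)[\P^0]_* .
\]
For the $\P^1$-fibration $p:\widetilde C\to X$ the CSM class satisfies $c(\widetilde C)=c(T_{\widetilde C/X})\cap p^*c(X)$, the relative tangent line bundle supplying exactly the twist that, after integrating out the fiber by the projection formula, becomes the factor $(1+H)$. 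Expressing $\xi=c_1(\cO_{\widetilde C}(1))$ and the relative tangent class in terms of $q^*H$ and $F$ and carrying out this pushforward should collapse $q_*c(\widetilde C)$ to $(1+H)f(H)\cap[\P^{m+1}]_*+\chi(X)[\P^0]_*$; subtracting $\chi(X)[\P^0]_*$ and reassembling with the vertex term gives \eqref{coneCSM}.

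The main obstacle is concentrated in the last step and is two-fold. First, since $X$ is allowed to be singular, $\widetilde C$ is singular and the identity $c(\widetilde C)=c(T_{\widetilde C/X})\cap p^*c(X)$ is not literally the Chern class of a tangent bundle; I would justify it either by the Verdier--Riemann--Roch (smooth-pullback) property of CSM classes, or, staying within the toolkit of \S\ref{CSMvsGrSec}, by reducing to smooth $X$. The reduction is clean: the punctured cone decomposes as a disjoint union $\coprod_\alpha p^{-1}(X_\alpha)$ of $\A^1$-bundles over the strata of any locally closed decomposition $X=\coprod_\alpha X_\alpha$, and both sides of the target identity are additive in $c(X)=\sum_\alpha c(X_\alpha)$, so it suffices to treat smooth strata, where $\widetilde C$ is smooth and the relative tangent computation is literal. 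Second, the hyperplane-class bookkeeping in the pushforward is delicate precisely in top degree, where the normalization $\deg f\le m$ must be respected; I would verify it against the case $X=\P^m$, for which $C(X)=\P^{m+1}$ forces $(1+H)f(H)\cap[\P^{m+1}]_*+[\P^0]_*=c(\P^{m+1})$ and pins down the constant term $[\P^0]_*$ exactly.
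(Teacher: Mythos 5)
Your proposal is correct, but it takes a genuinely different route from the paper's proof. The paper's argument is much shorter: it observes that $X$ is a \emph{general hyperplane section} of $C(X)$, so the hyperplane-section formula for CSM classes (Claim~1 of \cite{Alu6}) gives $i_*c(X)=H(1+H)^{-1}\cap j_*c(C(X))$; writing $j_*c(C(X))=g(H)\cap[\P^{m+1}]_*$, this pins down $g$ in every degree except the coefficient of $H^{m+1}$, and that last coefficient is the Euler characteristic $\chi(C(X))=\chi(X)+1$ supplied by Lemma~\ref{EulCharCone}. You instead blow up the vertex, split it off by inclusion--exclusion, use MacPherson functoriality for the blow-down $q:\widetilde\P\to\P^{m+1}$ (the contracted section contributing $\chi(X)[\P^0]_*$), and compute the class of the $\P^1$-bundle $\widetilde C\to X$ by smooth-pullback Verdier--Riemann--Roch. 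Your intermediate claims all check out: with $h=q^*H-F$, one has $q^*H\cdot F=0$, $c(T_p)=1+2q^*H-h$, $q_*(F^k\cap[\widetilde\P]_*)=0$ for $1\le k\le m$ and $=(-1)^m[\P^0]_*$ for $k=m+1$, and since the top coefficient of $f$ equals $\chi(X)$, the pushforward does collapse to $(1+H)f(H)\cap[\P^{m+1}]_*+\chi(X)[\P^0]_*$ as you predicted --- so the step you leave as ``should collapse'' is a routine Chow-ring computation that works. As for trade-offs: the paper's route buys brevity at the cost of invoking the nontrivial result of \cite{Alu6}, while yours is independent of that formula but needs VRR for smooth morphisms (or your stratification reduction, which requires the log-forms formulation of \S\ref{CSMvsGrSec} rather than a ``literal'' tangent-bundle computation, since the strata are non-compact). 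Two further remarks: your insistence on the normalization $\deg f\le m$ is a genuine point of care that the paper leaves implicit, yet it is essential for \eqref{coneCSM} to be well-posed; and, amusingly, both proofs must handle the point-class coefficient by a separate argument --- the paper via $\chi(C(X))$, you via the vertex term together with the $F^{m+1}$ contribution.
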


\proof Let $j_* c(C(X))=g(H)\cap [\P^{m+1}]_*$.
Notice that $X$ may be viewed as a general hyperplane section of
$C(X)$. Then, by Claim~1 of \cite{Alu6} we have
\[
f(H)\cap [\P^m]_* =i_* c(X)=H\cdot (1+H)^{-1} \cap j_* c(C(X))
=H(1+H)^{-1} g(H) \cap [\P^{m+1}]_* .
\]
This implies
\begin{equation}\label{fgHcsm}
(1+H) f(H) \cap [\P^m]_* = g(H)\cap [\P^m]_* .
\end{equation}
This determines all the coefficients in $g(H)$ with 
the exception of the coefficient of $H^{m+1}$. The latter 
equals the Euler characteristic of $C(X)$, hence by Lemma
\ref{EulCharCone} this is $\chi(C(X))=\chi(X)+1$. Thus, we have 
\[
\text{coefficient of $H^{m+1}$ in $g(H)$} = 1+
\text{coefficient of $H^m$ in $f(H)$}\quad.
\]
Together with \eqref{fgHcsm}, this implies \eqref{coneCSM}. 
\endproof

This result applies to some of the operations on graphs described
above. Here, as in the rest of the paper, we suppress the explicit
pushfoward notation $i_*$ and $j_*$ in writing CSM classes in the Chow
group or homology of the ambient projective space.

\begin{cor}\label{subdivide}
Let $\Gamma$ be a graph with $n$ edges, and let $\hat\Gamma$ be the 
graph obtained by subdividing an edge
or by attaching a tree consisting of a single edge to one of the vertices. 
If the CSM class of the hypersurface $X_\Gamma$ is of the form 
$c(X_\Gamma)=f(H) \cap [\P^{n-1}]_*$, with $f$ a polynomial of
$\deg(f)\leq n-1$ in the hyperplane class, then the class of
$X_{\hat\Gamma}$ is given by
\[
c(X_{\hat\Gamma})=((1+H)f(H)+H^n)\cap [\P^n]_* \quad.
\]
\end{cor}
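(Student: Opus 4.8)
The plan is to recognize $X_{\hat\Gamma}$ as a cone over $X_\Gamma$ and then invoke Proposition~\ref{CSMcone}. As recalled at the beginning of this section, both operations—subdividing an edge and attaching a single pendant edge to a vertex—produce a graph polynomial that exhibits $X_{\hat\Gamma}$ as a cone, and I would make this precise at the level of the Kirchhoff polynomial. When an edge $e$ of $\Gamma$, carrying the variable $t_e$, is subdivided into two edges in series with variables $t'$ and $t''$, every spanning tree of $\hat\Gamma$ either contains both new edges (matching a spanning tree of $\Gamma$ through $e$) or exactly one of them (matching a spanning tree of $\Gamma$ avoiding $e$); tracking the complementary monomials in each case shows that
\[
\Psi_{\hat\Gamma}(t_1,\dots,t_{n-1},t',t'') = \Psi_\Gamma(t_1,\dots,t_{n-1},t'+t'') ,
\]
i.e. $\Psi_{\hat\Gamma}$ is $\Psi_\Gamma$ with $t_e$ replaced by $t'+t''$. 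When instead a single pendant edge is attached, that edge lies in \emph{every} spanning tree, so its variable never appears in a complementary monomial and $\Psi_{\hat\Gamma}=\Psi_\Gamma$ as a polynomial in the enlarged variable set. In either case, after the linear change of coordinates $u=t'+t''$ (respectively, after simply ignoring the new variable) the defining equation of $X_{\hat\Gamma}\subset\P^n$ is independent of one coordinate, so $X_{\hat\Gamma}$ is precisely the cone in $\P^n$ over $X_\Gamma\subset\P^{n-1}$.

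Next I would apply Proposition~\ref{CSMcone} with $m=n-1$, $X=X_\Gamma$, and $C(X)=X_{\hat\Gamma}$. Since by hypothesis $i_*c(X_\Gamma)=f(H)\cap[\P^{n-1}]_*$ with $\deg f\le n-1$, the proposition yields
\[
c(X_{\hat\Gamma})=(1+H)f(H)\cap[\P^n]_* + [\P^0]_* .
\]
Finally I would rewrite the residual point class using the convention $[\P^r]_*=H^{n-r}\cap[\P^n]_*$ in the ambient $\P^n$, so that $[\P^0]_*=H^n\cap[\P^n]_*$. Substituting gives
\[
c(X_{\hat\Gamma})=\bigl((1+H)f(H)+H^n\bigr)\cap[\P^n]_* ,
\]
which is the claimed formula.

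The only genuinely non-formal step is the identification of the base of the cone with $X_\Gamma$ itself, rather than some other hypersurface—that is, the spanning-tree bookkeeping establishing $\Psi_{\hat\Gamma}=\Psi_\Gamma|_{t_e\mapsto t'+t''}$ in the subdivision case and $\Psi_{\hat\Gamma}=\Psi_\Gamma$ in the pendant-edge case. This guarantees both that the cone structure is present and that the base carries the \emph{same} $f(H)$. Once this is in hand, the remainder is a direct application of Proposition~\ref{CSMcone} together with the dictionary $[\P^0]_*=H^n\cap[\P^n]_*$, and no further computation is required.
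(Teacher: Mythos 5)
Your proof is correct and takes essentially the same route as the paper: identify $X_{\hat\Gamma}$ as a cone over $X_\Gamma$ by observing that $\Psi_{\hat\Gamma}$ depends on the two new variables only through their sum (subdivision) or not at all (pendant edge), then apply Proposition~\ref{CSMcone} and rewrite $[\P^0]_*$ as $H^n\cap[\P^n]_*$. The spanning-tree bookkeeping you include is precisely the justification the paper leaves implicit, so there is nothing to correct.
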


\proof The result follows immediately from Proposition \ref{CSMcone},
since in the first case the graph polynomial $\Psi_{\hat\Gamma}$ depends
on a pair of variables $t_i,t_j$ only through their sum $t_i+t_j$,
hence $X_{\hat\Gamma}$ is a cone over $X_{\Gamma}$ inside $\P^n$. In
the second case the graph polynomial $\Psi_{\hat\Gamma}$ is independent
of the variable of the additional edge and the result follows by the
same argument, since $X_{\hat\Gamma}$ is then also a cone over $X_\Gamma$.
\endproof

The case of attaching an arbitrary tree to a vertex of the graph is
obtained by iterating the second case of Corollary \ref{subdivide}. 

\smallskip

There are further cases of simple operations on a graph which can be analyzed as
an easy consequence of the formulae for cones:
\begin{itemize}
\item Adjoining a loop made of a single edge connecting a vertex to itself.
\item Doubling a disconnecting edge in a non-1PI graph. 
\end{itemize}
In these cases the resulting graph hypersurface is obtained by first
taking a cone over the original hypersurface in one extra dimension
and then taking the union with a transversal hyperplane, respectively
given by the vanishing of the coordinate corresponding to the loop edge or by
the vanishing of the sum $t_i+t_j$ coming from the pair of parallel edges. We then have the
following result.

\begin{cor}\label{addaloop}
Let $\Gamma$ be a graph with $n$ edges, and let $\Gamma'$ be the 
graph obtained by attaching a looping edge to a vertex of $\Gamma$,
or let $\Gamma$ be a non-1PI graph and let $\Gamma'$ be obtained
from $\Gamma$ by doubling a disconnecting edge.
Suppose that the CSM class of $X_\Gamma$ is of the form
$c(X_\Gamma)=f(H)\cap [\P^{n-1}]_*$, for a polynomial of degree~$\le
n-1$ in the hyperplane class. Then the CSM class of $X_{\Gamma'}$ is
given by 
\begin{equation}\label{CSMloop}
c(X_{\Gamma'})=(f(H)+((1+H)^n-H^n) H + H^n)\cap [\P^n]_*\quad.
\end{equation}
\end{cor}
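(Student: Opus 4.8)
The plan is to make precise the geometric description already announced just before the statement: in both cases $X_{\Gamma'}$ is the union of a cone $C(X_\Gamma)\subset\P^n$ over $X_\Gamma$ together with a transversal hyperplane $L$, and then to compute $c(X_{\Gamma'})$ by the embedded inclusion--exclusion principle for CSM classes recalled in \S\ref{CSMvsGrSec}, namely
\[
c(X_{\Gamma'})=c(C(X_\Gamma))+c(L)-c(C(X_\Gamma)\cap L).
\]
So the whole argument reduces to evaluating three CSM classes and identifying the three pieces geometrically.

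First I would pin down the decomposition by computing the Kirchhoff polynomial. Attaching a looping edge with variable $t_{n+1}$ multiplies it by $t_{n+1}$ (a self-loop lies in no spanning tree, hence in every complement), giving $\Psi_{\Gamma'}=t_{n+1}\,\Psi_\Gamma$ and $X_{\Gamma'}=\{\Psi_\Gamma=0\}\cup\{t_{n+1}=0\}$. Doubling a disconnecting edge into parallel edges with variables $t_n,t_{n+1}$ leads, by a short spanning-tree count, to $\Psi_{\Gamma'}=(t_n+t_{n+1})\,\Psi_\Gamma$ and $X_{\Gamma'}=\{\Psi_\Gamma=0\}\cup\{t_n+t_{n+1}=0\}$. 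In each case $\Psi_\Gamma$ is independent of the new coordinate(s), so $\{\Psi_\Gamma=0\}\subset\P^n$ is the cone $C(X_\Gamma)$ over $X_\Gamma$, with vertex $[0{:}\cdots{:}0{:}1]$. I would then check that this vertex lies off the relevant hyperplane $L$ ($\{t_{n+1}=0\}$ or $\{t_n+t_{n+1}=0\}$), so that $L$ is transversal to the cone, and that restricting to $L\cong\P^{n-1}$ identifies $C(X_\Gamma)\cap L$ with $X_\Gamma$ in its original embedding.

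Next I would evaluate the three terms. For the cone, Proposition~\ref{CSMcone} (with $m=n-1$) gives $c(C(X_\Gamma))=\big((1+H)f(H)+H^n\big)\cap[\P^n]_*$, using $[\P^0]_*=H^n\cap[\P^n]_*$. For the hyperplane, $c(L)$ is the push-forward to $\P^n$ of $c(\P^{n-1})=(1+h)^n\cap[\P^{n-1}]_*$; by the projection formula together with the convention $[\P^r]_*=H^{n-r}\cap[\P^n]_*$, this becomes $c(L)=\big((1+H)^n-H^n\big)H\cap[\P^n]_*$. For the intersection, since $C(X_\Gamma)\cap L\cong X_\Gamma$, pushing $c(X_\Gamma)=f(H)\cap[\P^{n-1}]_*$ forward through $L\hookrightarrow\P^n$ yields $Hf(H)\cap[\P^n]_*$ by the same bookkeeping. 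Substituting into inclusion--exclusion, the $f$-terms combine as $(1+H)f(H)-Hf(H)=f(H)$, leaving exactly $\big(f(H)+((1+H)^n-H^n)H+H^n\big)\cap[\P^n]_*$, as claimed.

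The one genuinely delicate point is the push-forward from $L\cong\P^{n-1}$ to the ambient $\P^n$: one must track that capping with the hyperplane class on $L$ and then including into $\P^n$ has the net effect of multiplying the coefficient polynomial by $H$ (equivalently, each $[\P^r]_*$ on $L$ is sent to $[\P^r]_*$ in $\P^n$, i.e.\ to $H^{n-r}\cap[\P^n]_*$). This extra factor of $H$ appears in both $c(L)$ and $c(C(X_\Gamma)\cap L)$ and is precisely what makes the cancellation of the $f$-terms come out right; the hypothesis $\deg f\le n-1$ guarantees that no term is lost in the top dimension. Everything else is a direct application of Proposition~\ref{CSMcone} and of inclusion--exclusion for CSM classes.
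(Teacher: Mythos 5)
Your proof is correct and follows essentially the same route as the paper: decompose $X_{\Gamma'}$ as the union of the cone $C(X_\Gamma)$ (whose class is given by Proposition~\ref{CSMcone}) with a hyperplane meeting it in a copy of $X_\Gamma$, and apply inclusion--exclusion for CSM classes, with the same push-forward bookkeeping producing the cancellation $(1+H)f(H)-Hf(H)=f(H)$. The only difference is that you explicitly verify the factorizations $\Psi_{\Gamma'}=t_{n+1}\Psi_\Gamma$ and $\Psi_{\Gamma'}=(t_n+t_{n+1})\Psi_\Gamma$ and the transversality of the hyperplane, details the paper asserts in the discussion preceding the corollary rather than inside the proof.
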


\proof In this case, $X_{\Gamma'}$ is obtained by taking the union of the cone
over $X_\Gamma$ with a general hyperplane $\cL$. Since the intersection 
of a general hyperplane and $X_{\Gamma'}$ is nothing but $X_\Gamma$ itself,
the inclusion-exclusion property for CSM classes discussed in \S
\ref{CSMvsGrSec} gives 
\begin{equation}\label{eqCSMloop}
\begin{array}{rl}
c(X_{\Gamma'})= & c(X_{\hat\Gamma})+ c(\cL)- c(X_\Gamma)\\[2mm]
=& ((1+H)f(H)+H^n + ((1+H)^n-H^n)H -H f(H) )\cap [\P^n]_*
\end{array}
\end{equation}
as claimed.
\endproof

A general remark that may be worth making is the consequence of these
results for the positivity question of Conjecture \ref{posconj}.

\begin{cor}\label{posCones}
If $X_\Gamma$ has positive CSM class, and $\underline\Gamma$
is obtained from $\Gamma$ by any of the operations listed above
(subdividing edges, doubling disconnecting edges, attaching trees 
and single-edge loops), then $X_{\underline\Gamma}$ also has 
positive CSM class.
\end{cor}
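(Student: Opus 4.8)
The plan is to deduce everything directly from the explicit transformation formulas already established in Corollaries~\ref{subdivide} and~\ref{addaloop}, by checking that each operation carries a polynomial with non-negative coefficients to another such polynomial. Write $c(X_\Gamma)=f(H)\cap[\P^{n-1}]_*$; since $X_\Gamma$ is a hypersurface its CSM class has no term in degree~$0$, so $f(H)=\sum_{i=1}^{n-1}a_i H^i$, and positivity of the CSM class (as in Conjecture~\ref{posconj}) means precisely that $a_i\ge 0$ for every~$i$. Thus the statement reduces to the elementary observation that the two polynomial operations appearing in Corollaries~\ref{subdivide} and~\ref{addaloop} preserve non-negativity of coefficients.

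For subdividing an edge or attaching a single edge to a vertex, Corollary~\ref{subdivide} gives $c(X_{\hat\Gamma})=\bigl((1+H)f(H)+H^n\bigr)\cap[\P^n]_*$. Writing $(1+H)f(H)=f(H)+H f(H)$, each of the two summands has non-negative coefficients whenever $f$ does (multiplication by $H$ merely shifts the coefficients up by one degree), so their sum does as well; adding $H^n$ only increases the coefficient of $H^n$ by one. Hence $(1+H)f(H)+H^n$ has non-negative coefficients and $X_{\hat\Gamma}$ has positive CSM class. Since attaching an arbitrary tree is obtained by iterating the single-edge case, and iterating a coefficient-positivity-preserving operation again preserves positivity, the claim follows for trees as well.

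For adjoining a single-edge loop or doubling a disconnecting edge, Corollary~\ref{addaloop} gives $c(X_{\Gamma'})=\bigl(f(H)+((1+H)^n-H^n)H+H^n\bigr)\cap[\P^n]_*$. Here $(1+H)^n-H^n=\sum_{k=0}^{n-1}\binom nk H^k$ has strictly positive coefficients, so $((1+H)^n-H^n)H=\sum_{k=0}^{n-1}\binom nk H^{k+1}$ does too; adding $f(H)$ (non-negative coefficients by hypothesis) and $H^n$ (a positive coefficient) keeps every coefficient non-negative. Thus $X_{\Gamma'}$ again has positive CSM class.

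There is essentially no serious obstacle: the entire content is carried by Corollaries~\ref{subdivide} and~\ref{addaloop}, whose hypothesis $\deg f\le n-1$ is automatically met because $c(X_\Gamma)$ is the CSM class of a hypersurface in $\P^{n-1}$. The only point requiring the slightest care is to remember that $f$ has no constant term, so that no spurious negative contribution can arise; once this is noted, the verification is a direct inspection of the two displayed polynomial formulas.
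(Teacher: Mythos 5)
Your proof is correct and follows exactly the route the paper intends: the paper states Corollary~\ref{posCones} without proof, as an immediate consequence of the formulas in Corollaries~\ref{subdivide} and~\ref{addaloop}, and your verification that those two polynomial operations preserve non-negativity of coefficients (with the tree case handled by iteration) is precisely the implicit argument. The remark about $f$ having no constant term is harmless but unnecessary, since non-negativity is preserved regardless.
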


The case of joining two
graphs by a pair of edges
operation mentioned above generalizes the one of doubling a
disconnecting edge but is more difficult to deal with explicitly. 
Given a pair of 1PI graphs $\Gamma_1$ and $\Gamma_2$ and
two additional edges joining them as in Figure \ref{FigJoin}, the
graph polynomial becomes of the form
\begin{equation}\label{PsiGamma12egdes}
\Psi_\Gamma(t)=(t_1+t_2)\Psi_{\Gamma_1}(t_3,\ldots,t_{n_1+2})\Psi_{\Gamma_2}(t_{n_1+3},
\ldots, t_{n_1+n_2+2}) +\Psi_{\Gamma_1,\Gamma_2}(t_3,\ldots,t_{n_1+n_2+2}),
\end{equation}
where $n_1=\# E(\Gamma_1)$ and $n_2=\# E(\Gamma_2)$. Here the first
term corresponds to the spanning trees of $\Gamma$ that contain either
the edge $t_1$ or $t_2$, while the second term comes from the spanning
trees that contain both of the additional edges $t_1$ and $t_2$. The
resulting hypersurface $X_\Gamma\subset \P^{n_1+n_2+1}$ is once again
a cone since it depends on the variables $t_1$ and $t_2$ only through
their sum. However, in this case one does not have a direct control on
the form of the CSM class in terms of those of $X_{\Gamma_1}\subset
\P^{n_1-1}$ and $X_{\Gamma_2}\subset \P^{n_2-1}$. Thus, we do not
treat this case here. 

\begin{center}
\begin{figure}
\includegraphics[scale=0.6]{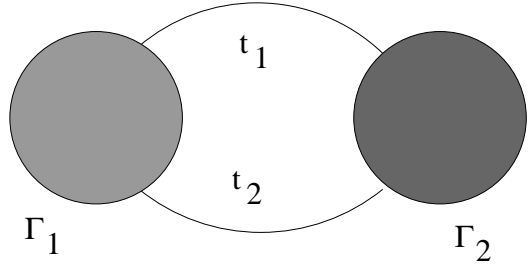}
\caption{Joining two 1PI graphs by a pair of edges. \label{FigJoin}}
\end{figure}
\end{center}

We can proceed similarly to give the relation between classes in the
Grothendieck ring. This is in fact easier than the case of CSM
classes.

\begin{prop}\label{GrCones}
With notation as in Corollaries~\ref{subdivide} and~\ref{addaloop}, we
have
\begin{equation}\label{eqGrCone}
\begin{array}{rl}
[X_{\hat\Gamma}]= & (1+\bT)\cdot [X_\Gamma]+[\P^0] \\[2mm]
[X_{\Gamma'}] =& \bT\cdot [X_\Gamma] + [\P^{n-1}]+[\P^0] .
\end{array}
\end{equation}
\end{prop}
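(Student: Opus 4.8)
The plan is to compute each class in the Grothendieck ring by translating the geometric descriptions established in Corollaries~\ref{subdivide} and~\ref{addaloop} into relations among classes, exactly mirroring the inclusion--exclusion arguments used there for CSM classes but with the simpler bookkeeping that the Grothendieck ring allows.

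For the first formula, recall that $X_{\hat\Gamma}$ is a cone over $X_\Gamma$ in $\P^n$. A cone $C(X)$ over a hypersurface $X\subset \P^{n-1}$ decomposes as $C(X)=(X\times \A^1)\cup \{pt\}$, where $X\times\A^1$ is the affine cone with its vertex removed and the extra point is the vertex. In the Grothendieck ring this gives $[C(X)]=[X]\cdot[\A^1]+[\P^0]=[X]\cdot\bL+1$. Since $\bL=1+\bT$ (as $\bL=[\A^1]=[\A^0]+[\bG_m]$, \ie the affine line is a point together with the multiplicative group), I would rewrite $\bL=1+\bT$ to obtain $[X_{\hat\Gamma}]=(1+\bT)\cdot[X_\Gamma]+[\P^0]$, which is the first identity in \eqref{eqGrCone}.

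For the second formula, I would use the description from Corollary~\ref{addaloop}: $X_{\Gamma'}$ is the union of the cone $X_{\hat\Gamma}$ over $X_\Gamma$ with a transversal hyperplane $\cL\cong\P^{n-1}$ whose intersection with $X_{\Gamma'}$ recovers $X_\Gamma$ itself. The inclusion--exclusion relation \eqref{relGrV} in the Grothendieck ring then gives
\begin{equation*}
[X_{\Gamma'}]=[X_{\hat\Gamma}]+[\cL]-[X_\Gamma]=(1+\bT)[X_\Gamma]+[\P^0]+[\P^{n-1}]-[X_\Gamma].
\end{equation*}
Since $(1+\bT)[X_\Gamma]-[X_\Gamma]=\bT\cdot[X_\Gamma]$, this collapses to $[X_{\Gamma'}]=\bT\cdot[X_\Gamma]+[\P^{n-1}]+[\P^0]$, the second identity in \eqref{eqGrCone}.

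I expect no serious obstacle here: the only subtlety is the translation $\bL=1+\bT$ together with keeping track of the extra point (the cone vertex) correctly, and verifying that the intersection of the transversal hyperplane with $X_{\Gamma'}$ is precisely $X_\Gamma$ (which is exactly what was already established in the proof of Corollary~\ref{addaloop}). The relative ease compared with the CSM computation stems from the fact that the Grothendieck ring sees only the additive decomposition and is insensitive to the embedding data, so no Chern-class pushforward or Euler-characteristic correction term is needed.
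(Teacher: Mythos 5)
Your proof is correct and follows essentially the same route as the paper: the paper's proof simply records $[C(X)]=[(X\times\A^1)\cup\{pt\}]=[X]\bL+[\A^0]$ and says the result "follows immediately," and your argument is exactly that computation (with $\bL=1+\bT$) plus the inclusion--exclusion $[X_{\Gamma'}]=[X_{\hat\Gamma}]+[\cL]-[X_\Gamma]$ that the paper leaves implicit, using the same identification of the cone-hyperplane intersection with $X_\Gamma$ already established in Corollary~\ref{addaloop}. No gaps; you have merely spelled out the steps the paper compresses.
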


\proof The class of a cone in the Grothendiek ring is just
$$ [C(X)]=[(X\times \A^1) \cup \{ pt \}]=[X][\A^1] + [\A^0]. $$
The result then follows immediately.
\endproof

As we have already noticed in our computation of the class in the
Grothendieck ring and of the CSM class in the special case of the
banana graphs, the formulae look nicer when written in terms of the
hypersurface complement, rather than of the hypersurface itself. The
same happens here. When we reformulate the above in terms of the
complements of the hypersurfaces in projective space we find the
following immediate consequence of additivity and of the formulae
obtained previously.

\begin{cor}\label{ConeComplement}
With notation as in Corollaries~\ref{subdivide} and~\ref{addaloop},
assume that $c(\P^{n-1}\smallsetminus X_\Gamma)=
g(H)\cap [\P^{n-1}]_*$. Then
\begin{equation}\label{CSMconecompl}
\begin{array}{rl}
c(\P^n\smallsetminus X_{\hat\Gamma}) = & (1+H)g(H)\cap [\P^n]_* \\[2mm]
c(\P^n\smallsetminus X_{\Gamma'}) = & g(H)\cap [\P^n]_*
\end{array}
\end{equation}
Similarly, the classes in the Grothendieck group satisfy
\begin{equation}\label{Grconecompl}
\begin{array}{rl}
[\P^n\smallsetminus X_{\hat\Gamma}] =& (1+\bT)\cdot [\P^{n-1}
\smallsetminus X_\Gamma] \\[2mm]
[\P^n\smallsetminus X_{\Gamma'}] =& \bT\cdot [\P^{n-1}
\smallsetminus X_\Gamma] .
\end{array}
\end{equation}
\end{cor}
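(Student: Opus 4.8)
The plan is to derive all four identities by a single mechanism: pass from each hypersurface to its complement via additivity, substitute the cone formulae already established in Corollaries~\ref{subdivide} and~\ref{addaloop} (for CSM classes) and in Proposition~\ref{GrCones} (for Grothendieck classes), and then re-express the class of $X_\Gamma$ itself in terms of the class of its complement, which is the quantity $g(H)$ (resp.\ $[\P^{n-1}\smallsetminus X_\Gamma]$) appearing in the statement. No new geometric input is needed; as the surrounding text anticipates, everything follows formally.

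Concretely, for the CSM statement I would start from the inclusion--exclusion relation $c(\P^n\smallsetminus X_{\hat\Gamma})=c(\P^n)-c(X_{\hat\Gamma})$ of \S\ref{CSMvsGrSec}, together with the analogous relation $c(\P^{n-1})=c(X_\Gamma)+c(\P^{n-1}\smallsetminus X_\Gamma)$ in the smaller projective space. Writing $c(X_\Gamma)=f(H)\cap[\P^{n-1}]_*$ and $c(\P^{n-1}\smallsetminus X_\Gamma)=g(H)\cap[\P^{n-1}]_*$, and recalling $c(\P^{n-1})=((1+H)^n-H^n)\cap[\P^{n-1}]_*$, the second relation reads $f(H)+g(H)=(1+H)^n-H^n$ at the level of polynomial representatives, so $f(H)=(1+H)^n-H^n-g(H)$. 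Inserting this into $c(X_{\hat\Gamma})=((1+H)f(H)+H^n)\cap[\P^n]_*$ of Corollary~\ref{subdivide} and subtracting from $c(\P^n)=((1+H)^{n+1}-H^{n+1})\cap[\P^n]_*$, the $(1+H)^{n+1}$ terms telescope and the apex corrections $H^n,H^{n+1}$ cancel, leaving exactly $(1+H)g(H)\cap[\P^n]_*$. The case of $X_{\Gamma'}$ is the same in spirit: inserting $f(H)=(1+H)^n-H^n-g(H)$ into the formula of Corollary~\ref{addaloop} collapses the loop/doubling correction into a factor $(1+H)^n(1+H)=(1+H)^{n+1}$, which then cancels $c(\P^n)$ entirely up to $g(H)$, yielding $c(\P^n\smallsetminus X_{\Gamma'})=g(H)\cap[\P^n]_*$.

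For the Grothendieck ring statement I would run the identical argument with $\bT$ in place of $H$. Setting $A=[\P^{n-1}\smallsetminus X_\Gamma]$ and using $[X_\Gamma]=[\P^{n-1}]-A$, I substitute into the two formulae of Proposition~\ref{GrCones} and subtract from $[\P^n]$. The only external input is the identity $[\P^n]=(1+\bT)[\P^{n-1}]+[\P^0]$, which follows immediately from \eqref{PrId} upon writing $[\P^r]=\frac{(1+\bT)^{r+1}-1}{\bT}$ and simplifying. With it, the ``ambient minus apex-correction'' combination $[\P^n]-(1+\bT)[\P^{n-1}]-[\P^0]$ vanishes, leaving $(1+\bT)A$ in the $\hat\Gamma$ case and $\bT A$ in the $\Gamma'$ case, exactly as claimed.

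I do not expect any genuine obstacle. The one point requiring care is bookkeeping across the change of ambient space from $\P^{n-1}$ to $\P^n$ — in particular that the degree-$n$ term of a polynomial representative is killed in $\P^{n-1}$ but survives in $\P^n$ — and tracking the $H^n$ (resp.\ $[\P^0]$, $[\P^{n-1}]$) corrections coming from the apex of the cone through the cancellations. Once $f$ is eliminated in favour of $g$ (resp.\ $[X_\Gamma]$ in favour of $A$), each of the four cases reduces to a one-line algebraic verification.
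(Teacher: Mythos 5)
Your proposal is correct and is essentially the paper's own argument: the paper gives no explicit computation, describing the corollary as an ``immediate consequence of additivity and of the formulae obtained previously'' (Corollaries~\ref{subdivide} and~\ref{addaloop} and Proposition~\ref{GrCones}), which is exactly the mechanism you carry out, and your algebraic eliminations of $f$ in favour of $g$ (and of $[X_\Gamma]$ in favour of $[\P^{n-1}\smallsetminus X_\Gamma]$) all check out. Your attention to the degree bound on the polynomial representatives (so that $f(H)+g(H)=(1+H)^n-H^n$ holds as an identity of polynomials, not merely modulo $H^n$) is the right point of care and is implicit in the paper's hypotheses.
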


\begin{center}
\begin{figure}
\includegraphics[scale=0.9]{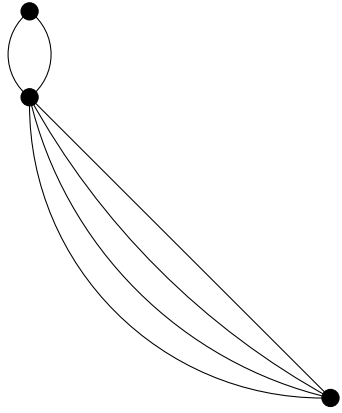}
\caption{Graph obtained from $\Gamma_4$ by adding loops and
subdividing edges. \label{Fig4Ban2}}
\end{figure}
\end{center}

We give two explicit examples obtained from the banana graphs by
applying the operations discussed above.

\begin{ex}\label{4bananaEx}{\rm
Attach a looping edge to the banana graph $\Gamma_4$ of
Figure \ref{BananaFig} and then subdivide the new edge. 
This gives the graph in Figure \ref{Fig4Ban2}. 
The CSM class of the hypersurface $X_{\Gamma_4}$ is 
$$ (3H+3H^2+5H^3)\cap [\P^3]_* $$
by Theorem \ref{CSMXGammaThm}. According to Corollary~\ref{addaloop},
adding a loop gives a graph whose hypersurface has CSM class
\[
((3H+3H^2+5H^3)+((1+H)^4-H^4)H+H^4)\cap [\P^4]_*
=(4H+7H^2+11H^3+5H^4)\cap [\P^4]_* .
\]
Subdividing the new edge (or any other edge) produces a hypersurface 
whose CSM class is given by
\[
((1+H)(4H+7H^2+11H^3+5H^4)+H^5)\cap [\P^5]_*
=(4H+11H^2+18H^3+16H^4+6H^5)\cap [\P^5]_* .
\]
This is the CSM class corresponding to the graph in the picture.
In the Grothendieck group of varieties, the class of the complement
of the hypersurface $X_{\Gamma_4}$ is 
\[
\bT^3+3\bT^2+\bT-1 .
\]
It follows immediately that the class of the complement of the 
hypersurface of the graph in Figure \ref{Fig4Ban2} is then
\[
\bT(\bT+1)(\bT^3+3\bT^2+\bT-1) = 
\bT^5+4\bT^4+4\bT^3-\bT .
\] }
\end{ex}

\begin{ex}\label{bananasplit}{\rm 
Splitting one edge in a banana graph (see Figure \ref{Fig4Ban})
produces a particularly simple class in the Grothendieck group for the
complement of the corresponding hypersurface. The class for the
`banana split' graph is
\[
\bT^n+n\bT^{n-1}+n\bT^{n-2}-(-1)^n .
\] }
\end{ex}

\begin{center}
\begin{figure}
\includegraphics[scale=0.9]{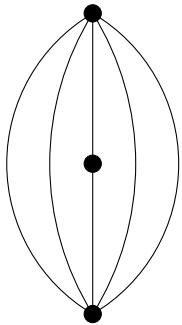}
\caption{Banana split graph. \label{Fig4Ban}}
\end{figure}
\end{center}

\section{Banana graphs in Noncommutative QFT}\label{NCQFTsect}

Recently there has been growing interest in investigating the
renormalization properties and the perturbative theory for certain
quantum field theories on noncommutative spacetimes. These arise, 
for instance, as effective limits of string theory \cite{CDS}, 
\cite{SW}. In particular, in dimension $D=4$, when the underlying
$\R^4$ is made noncommutative by
deformation to $\R^4_\theta$ with the Moyal product, it is known that
the $\phi^4$ theory behaves in a very interesting way. In particular,
the Grosse--Wulkenhaar model was proved to be renormalizable to all
orders in perturbation theory (for an overview see \cite{GroWu}). 
We do not recall here the main aspects
of noncommutative field theory, as they are beyond the main purpose of
this paper, but we mention the fact, which is very relevant to us,
that a parametric representation for the Feynman integrals exists 
also in the noncommutative setting (\cf \cite{GuRi}, \cite{Tana}).
When the underlying spacetime becomes noncommutative, the usual
Feynman graphs are replaced by ribbon graphs, which account for the
fact that, in this case, in the Feynman rules the contribution of each
vertex depends on the cyclic ordering of the edges, \cf \cite{GroWu}.
For example, in the ordinary commutative case, among the banana graphs
$\Gamma_n$ we consider in this paper the only ones that can appear as 
Feynman graphs of the $\phi^4$ theory are the one loop case (with two
external edges at each vertex), the two loop case (with one external
edge at each vertex) and the three loop case as a vacuum bubble. 
Excluding the vacuum bubble
because of the presence of the polynomial $P_\Gamma(t)$, we see that
the effect of making the underlying spacetime noncommutative turns the
remaining two graphs into the graphs of Figure
\ref{NCBananaFig}. Notice how the two loop ribbon graph now has two
distinct versions, only one of which is a planar graph. The usual
Kirchhoff polynomial $\Psi_\Gamma(t)$ of the Feynman graph, as well as
the polynomial $P_\Gamma(t,p)$, are replaced by new polynomials
involving pairs of spanning trees, one in the graph itself and one in
another associated graph which is a dual graph in the planar case and
that is obtained from an embedding of the ribbon graph on a Riemann
surface in the more general case. Unlike the commutative case, these
polynomials are {\em no longer homogeneous}, hence the corresponding
graph hypersurface only makes sense as an affine hypersurface. The
relation of the hypersurface for the noncommutative case and the
one of the original commutative case (also viewed as an affine
hypersurface) is given by the following statement.

\begin{prop}\label{NCgraphhypers}
Let $\tilde\Gamma$ be a ribbon graph in the
noncommutative $\phi^4$-theory that corresponds in the ordinary
$\phi^4$-theory to a graph $\Gamma$ with $n$ internal edges.
Then instead of a single graph hypersurface $X_\Gamma$ one has a
one-parameter family of affine hypersurfaces
$X_{\tilde\Gamma,s}\subset \A^n$, where the parameter $s\in \R_+$
depends upon the deformation parameter $\theta$ of the noncommutative
$\R^4_\theta$ and on the parameter $\Omega$ of the harmonic oscillator
term in the Grosse--Wulkenhaar model. The hypersurface corresponding
to the value $s = 0$ has a singularity at the origin $0\in \A^n$ whose
tangent cone is the (affine) graph hypersurface $X_\Gamma$.
\end{prop}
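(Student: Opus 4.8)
The plan is to reduce the assertion to an inspection of the lowest-degree homogeneous part of the noncommutative graph polynomial, since it is exactly the leading form that controls the tangent cone at the origin. First I would recall, from the parametric representation of the Grosse--Wulkenhaar model (\cf \cite{GuRi}, \cite{Tana}), the explicit shape of the polynomial defining the affine hypersurface $X_{\tilde\Gamma,s}\subset\A^n$. After clearing denominators so that it becomes a genuine polynomial $\Psi_{\tilde\Gamma,s}(t)$ in $t_1,\dots,t_n$, it is a sum indexed by pairs $(T,T')$ of spanning trees, one in $\Gamma$ and one in the associated graph (the dual graph in the planar case, the graph of a ribbon embedding in general), with each term weighted by a nonnegative power of the parameter $s$. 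The output I want from this is a decomposition $\Psi_{\tilde\Gamma,s}(t)=\sum_{k\ge 0} s^k P_k(t)$, so that the hypersurface at $s=0$ is cut out by $P_0(t)$.

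The heart of the argument is then a degree count establishing two facts about $P_0$. First, every monomial of $\Psi_{\tilde\Gamma,s}$ of degree strictly less than $\ell:=b_1(\Gamma)$ carries a strictly positive power of $s$, so that the specialization $P_0=\Psi_{\tilde\Gamma,0}$ contains no term of degree below $\ell$; in particular $P_0$ has no constant or linear part, and the origin lies on $X_{\tilde\Gamma,0}$ with multiplicity $\ell$. Second, the degree-$\ell$ homogeneous component of $P_0$ is exactly the Kirchhoff polynomial $\Psi_\Gamma(t)=\sum_T\prod_{e\notin E(T)}t_e$ of \eqref{PsiGamma}. Both statements should come out of the pair-of-spanning-trees description: the ordinary spanning-tree sum is reproduced, in degree $\ell$ and with $s$-weight zero, by the pairs whose second tree contributes only through edge variables already present, while every remaining pair either requires a positive power of $s$ (and so drops out at $s=0$) or, at $s$-weight zero, forces extra edge variables into its monomial and hence lands in degree $>\ell$.

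Granting this, the conclusion is immediate from the definition of the tangent cone: the leading form of $\Psi_{\tilde\Gamma,0}$ at the origin is $\Psi_\Gamma$, a form of degree $\ell=b_1(\Gamma)$. Since $\Psi_\Gamma$ is homogeneous, its zero locus in $\A^n$ is the affine cone over the projective graph hypersurface $X_\Gamma\subset\P^{n-1}$, and this is by definition the tangent cone to $X_{\tilde\Gamma,0}$ at $0$. Whenever $b_1(\Gamma)\ge 2$ --- as for the two- and higher-loop banana graphs that survive as noncommutative $\phi^4$ graphs --- the multiplicity $\ell$ exceeds $1$, so the origin is a genuine singular point of $X_{\tilde\Gamma,0}$, as claimed.

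I expect the main obstacle to be precisely the bookkeeping in the second paragraph: one must extract, from the exact pair-of-spanning-trees formula for the noncommutative polynomial and the exact way the parameter $s$ (built from $\theta$ and $\Omega$) enters it, which monomials occur in degree $\ell$ at $s=0$, and verify that they reproduce the commutative Kirchhoff monomials with the correct coefficients while all truly noncommutative contributions are pushed either to higher degree in $t$ or to positive order in $s$. Once the leading form has been identified, the passage to the tangent cone and the singularity statement is purely formal.
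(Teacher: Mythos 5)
Your proposal is correct and follows essentially the same route as the paper: the paper's proof likewise appeals to the explicit Gurau--Rivasseau parametric polynomial, observes that (a constant multiple of) the Kirchhoff polynomial $\Psi_\Gamma$ sits inside it for all $s$ and becomes the lowest-order part in the $t_i$ at $s=0$, and then reads off the tangent cone from that leading form. Your extra care about the multiplicity $\ell=b_1(\Gamma)\ge 2$ being needed for the origin to be a genuine singular point is a worthwhile refinement that the paper leaves implicit.
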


\proof This follows directly from the relation between the graph
polynomial for the ribbon graph $\tilde\Gamma$ given in \cite{GuRi}
and the Kirchhoff polynomial $\Psi_\Gamma$. It suffices to see that
(a constant multiple of) the Kirchhoff polynomial is contained in the
polynomial for $\tilde\Gamma$ for all values of the parameter $s$, and
that it gives the part of lowest order in the variables $t_i$ when
$s=0$. 
\endproof 

In the specific examples of the banana graphs $\tilde\Gamma_2$ and
$\tilde\Gamma_3$ of Figure \ref{NCBananaFig}, the polynomials have
been computed explicitly in \cite{GuRi} and they are of the form
\begin{equation}\label{NCpolynB2}
\Psi_{\tilde\Gamma_2} = (1+4s^2) (t_1+t_2+ t_1^2t_2+t_1t_2^2),
\end{equation}
where the parameter $s=(4\theta\Omega)^{-1}$ is a function of the
deformation parameter $\theta\in \R$ of the Moyal plane and of the
parameter $\Omega$ in the harmonic oscillator term in the
Grosse--Wulkenhaar action functional (see \cite{GroWu}). One can see
the polynomial $\Psi_{\Gamma_2}(t)=t_1+t_2$ appearing as lowest order
term. Similarly for the two graphs $\tilde\Gamma_3$ that correspond to
the banana graph $\Gamma_3$ one has (\cite{GuRi})
\begin{equation}\label{NCpolynB3plan}
\begin{array}{rl}
\Psi_{\tilde\Gamma_3}(t)= & (1+8s^2+16s^4) (t_1t_2+t_2t_3+t_1t_3+
t_1^2t_2t_3 + t_1 t_2^2 t_3 + t_1 t_2 t_3^2) \\[2mm] + & 16 s^2 (t_2^2
+t_1^2t_3^2) 
\end{array}
\end{equation}
for the planar case, while for the non-planar case one has
\begin{equation}\label{NCpolynB3noplan}
\begin{array}{rl}
\Psi_{\tilde\Gamma_3}(t)= & (1+8s^2+16s^4) (t_1t_2+t_2t_3+t_1t_3+
t_1^2t_2t_3 + t_1 t_2^2 t_3 + t_1 t_2 t_3^2)  \\[2mm] + & 4s^2 (t_2^2
+t_1^2t_3^2 + t_1^2 +t_2^2 t_3^2 + t_3^2 + t_1^2t_2^2 + 1 +
t_1^2t_2^2t_3^2). 
\end{array}
\end{equation}
In both cases, one readily recognizes the polynomial
$\Psi_{\Gamma_3}(t)= t_1t_2+t_2t_3+t_1t_3$ as the lowest order part at
$s=0$. Notice how, when $s\neq 0$ one finds other terms of order less
than or equal to that of the polynomial $\Psi_{\Gamma_3}(t)$, such as 
$t_2^2$ in \eqref{NCpolynB3plan} and $1+ t_1^2+t_2^2+t_3^2$ in
\eqref{NCpolynB3noplan}. 
Notice also how, at the limit value $s= 0$ of the parameter, the 
two polynomials for the two different ribbon graphs corresponding to
the third banana graph $\Gamma_3$ agree.

For each value of the parameter $s=(4\theta\Omega)^{-1}$ one obtains
in this way an affine hypersurface, which is a curve in $\A^2$ or a
surface in $\A^3$, and that has the corresponding affine $X_{\Gamma_n}$ as
tangent cone at the origin in the case $s=0$. 
The latter is a line in the $n=2$ case and a cone on a nonsingular
conic in the case $n=3$. 

\begin{center}
\begin{figure}
\includegraphics[scale=0.4]{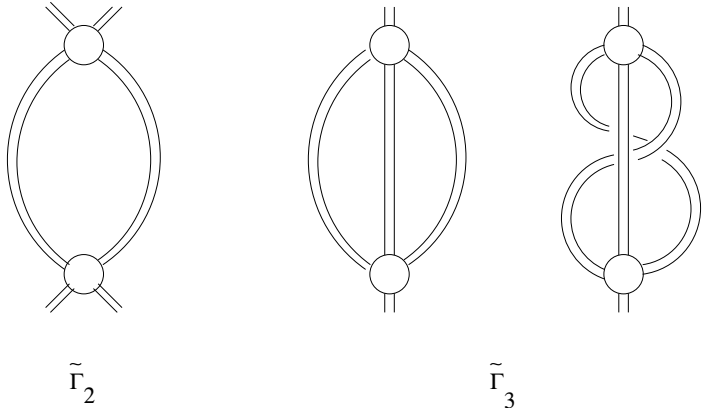}
\caption{Banana graphs in noncommutative $\phi^4$-theory
\label{NCBananaFig}}
\end{figure}
\end{center}

As a further example of why it is useful to compute invariants such as
the CSM classes for the graph hypersurfaces, we show that the CSM
class of the hypersurface defined by the polynomial
\eqref{NCpolynB3plan} detects the special values of the deformation
parameter $s=(4\theta\Omega)^{-1}$ where the hypersurface
$X_{\tilde\Gamma_3}$ becomes more singular and gives a quantitative
estimate of the amount by which the singularities change. 

The CSM class is naturally defined for projective varieties. In the
case of an affine hypersurface defined by a non-homogeneous equation, 
one can choose to compactify it in projective space by adding an extra
variable and making the equation homogeneous and then computing the
CSM class of the corresponding projective hypersurface. However, in
doing so one should be aware of the fact that the CSM class
of an affine variety, defined by choosing an embedding in a larger
compact ambient variety, depends on the choice of the embedding. An
intrinsic definition of CSM classes for non-compact varieties which
does not depend on the embedding was given in \cite{Alu2},
\cite{Alu06a}. However, for our purposes here it suffices to take the
simpler viewpoint of making the equation homogeneous and then
computing CSM classes. If we adopt this procedure, then by numerical
calculations performed with the Macaulay2 program of \cite{Alu3} 
we obtain the following result.

\begin{prop}\label{CSMbananaNC}
Let $X_{\tilde\Gamma_3}\subset \P^3$ denote the affine surface defined
by the equation \eqref{NCpolynB3plan} and let $\bar
X_{\tilde\Gamma_3}\subset \P^3$ be the hypersurface obtained by making
the equation \eqref{NCpolynB3plan} homogeneous. For general values of the
parameter $s=(4\theta\Omega)^{-1}$ the CSM class is given by
\begin{equation}\label{CSMgeneric}
c(\bar X_{\tilde\Gamma_3})= 14H^3 + 4H.
\end{equation}
For the special value $s=1/2$ of the parameter, the CSM class becomes
of the form
\begin{equation}\label{CSMs12}
c(\bar X_{\tilde\Gamma_3})|_{s=1/2}=  5H^3  + 5H^2  + 4H,
\end{equation}
while in the limit $s\to 0$ one has
\begin{equation}\label{CSMs0}
c(\bar X_{\tilde\Gamma_3})|_{s=0}=  11H^3 + 4H.
\end{equation}
\end{prop}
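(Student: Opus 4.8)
The plan is to reduce the three computations to the algorithm of \cite{Alu3} for the Chern--Schwartz--MacPherson class of a projective hypersurface. First I homogenize \eqref{NCpolynB3plan} by introducing a fourth coordinate $t_4$ and multiplying each degree-two monomial by $t_4^2$, obtaining the quartic
$$F_s=(1+4s^2)^2\big(t_4^2(t_1t_2+t_2t_3+t_1t_3)+t_1^2t_2t_3+t_1t_2^2t_3+t_1t_2t_3^2\big)+16s^2\big(t_4^2t_2^2+t_1^2t_3^2\big),$$
so that $\bar X_{\tilde\Gamma_3}=V(F_s)\subset\P^3$. Writing $A$ and $B$ for the two parenthesized blocks and using $1+4s^2>0$, I may rescale $F_s$ and reduce to $G_r=A+rB$, where $r=r(s)=16s^2/(1+4s^2)^2$; since the CSM class depends only on the zero locus, it depends on $s$ only through $r$. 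A short calculation shows $r$ sweeps out $[0,1]$, with $r=0$ exactly at $s=0$ and $r=1$ exactly at $s=1/2$, the unique maximum of $r$. The three asserted classes \eqref{CSMgeneric}, \eqref{CSMs12}, \eqref{CSMs0} thus correspond to a generic $r\in(0,1)$, to $r=1$, and to $r=0$ respectively.

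Next I apply the structure underlying \cite{Alu3}: one has $c(\bar X_{\tilde\Gamma_3})=c_{vir}(\bar X_{\tilde\Gamma_3})+(\text{Milnor class})$, where for a quartic in $\P^3$ the Fulton class is $c_{vir}=4H+24H^3$, and by \cite{Alu5} the Milnor class is determined by the Segre class $s(Y_r,\P^3)$ of the singular subscheme $Y_r=V(\partial_{t_1}G_r,\dots,\partial_{t_4}G_r)$. So I compute the Jacobian ideal of $G_r$, form $Y_r$, and read $c(\bar X_{\tilde\Gamma_3})$ off its Segre class. Note that the coefficient of $H$ equals the degree $4$ in all three cases, a built-in consistency check. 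For generic $r$ the singular locus is zero-dimensional, the Segre class attains its generic value, and the formula returns $14H^3+4H$; in particular $\chi=14$ is the coefficient of $H^3$.

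The content of the statement lies in the two special fibres. At $r=0$ (that is $s=0$) the block $B$ drops out and $F_0=t_4^2\,\Psi_{\Gamma_3}+t_1t_2t_3(t_1+t_2+t_3)$ has a different, though still zero-dimensional, singular configuration, whose Segre class yields $11H^3+4H$. At $r=1$ (that is $s=1/2$) I expect $\bar X_{\tilde\Gamma_3}$ to acquire a one-dimensional component in its singular locus. Since $c_{vir}=4H+24H^3$ has no $H^2$ term, any $H^2$ contribution to $c(\bar X_{\tilde\Gamma_3})$ lies in the Milnor class, whose largest-dimensional nonzero term records the dimension of the singular locus (\cf \S\ref{CSMdefSec}); the term $5H^2=5[\P^1]_*$ in \eqref{CSMs12} therefore signals a curve of singularities at $s=1/2$. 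For each special fibre I recompute $s(Y_r,\P^3)$ directly and apply the same formula.

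The main obstacle is to justify the word ``general'' in \eqref{CSMgeneric}, i.e.\ to prove that $r=0$ and $r=1$ are the only exceptional parameters. CSM classes are constructible in families, so $c(\bar X_{\tilde\Gamma_3})$ is locally constant on a dense open subset of the $r$-line and can jump only on a proper closed subset; the task is to show that subset is exactly $\{0,1\}$. I would track the variation of $s(Y_r,\P^3)$: the generators of the Jacobian ideal are affine-linear in $r$, and the locus where the Segre class degenerates is cut out by the vanishing of finitely many resultant-type expressions in $r$. Verifying that these vanish only at $r\in\{0,1\}$, and that the generic Segre class is the one producing $14H^3+4H$, is the heart of the matter; the three displayed classes then follow by evaluating the formula of \cite{Alu5} on the generic fibre and on the two special fibres. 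This last verification is precisely what the Macaulay2 routine of \cite{Alu3} performs at sample values of $r$.
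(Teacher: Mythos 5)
Your proposal is, in substance, the same proof as the paper's: the paper justifies this proposition solely by homogenizing \eqref{NCpolynB3plan} and running the Macaulay2 routine of \cite{Alu3}, and your argument likewise defers both the three Segre-class computations and the genericity verification to that same routine, so you end up on exactly the same (computational) footing as the paper. What you add is genuinely useful and correct: dividing by $(1+4s^2)^2$ and setting $r=16s^2/(1+4s^2)^2$ shows the class depends on $s$ only through $r\in[0,1]$, with $r=1$ precisely at $s=1/2$ and $r=0$ precisely at $s=0$, which gives the phrase ``general values of $s$'' a precise meaning that the paper leaves implicit; your Fulton-class computation $c_{vir}=4H+24H^3$ and the reading of the $5H^2$ term as forced by a one-dimensional singular locus (via \cite{Alu5}) are also correct. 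However, you miss an opportunity that would have made one third of the statement rigorous without any machine computation: at $r=1$ your quartic factors,
\[
G_1=(t_1+t_2)\,(t_2+t_3)\,(t_4^2+t_1t_3)\quad,
\]
so $\bar X_{\tilde\Gamma_3}|_{s=1/2}$ is the union of two planes $P_1,P_2$ and a quadric \emph{cone} $Q$ with vertex $(0{:}1{:}0{:}0)$ (note $Q$ is not smooth, since $t_2$ is absent from its equation --- treating it as a smooth quadric gives the wrong Euler characteristic). Inclusion--exclusion for CSM classes, together with the paper's own cone formula (Proposition~\ref{CSMcone}, which gives $c(Q)=2H+4H^2+3H^3$), then yields
\[
2(H+3H^2+3H^3)+(2H+4H^2+3H^3)-(H^2+2H^3)-2(2H^2+2H^3)+2H^3
=4H+5H^2+5H^3,
\]
confirming \eqref{CSMs12} by hand; the factorization also \emph{proves} your expectation that the singular locus at $s=1/2$ is a curve (it is the union of the line $P_1\cap P_2$ and the two conics $P_i\cap Q$), rather than leaving it as a guess to be checked by the computer.
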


It is also interesting to notice that, when we consider the second
equation \eqref{NCpolynB3noplan} for the non-planar ribbon graph
associated to the third banana graph $\Gamma_3$, we see an example
where the graph hypersurfaces of the non-planar graphs of
noncommutative field theory no longer satisfy the positivity property of
Conjecture \ref{posconj} that appears to hold for the graph
hypersurfaces of the commutative field theories. In fact, 
as in the case of the equation for the planar graph
\eqref{NCpolynB3plan}, we now find the following result.

\begin{prop}\label{CSMbananaNCnonplan}
Let $X_{\tilde\Gamma_3}\subset \P^3$ denote the affine surface defined
by the equation \eqref{NCpolynB3noplan} and let $\bar
X_{\tilde\Gamma_3}\subset \P^3$ be the hypersurface obtained by making
the equation \eqref{NCpolynB3noplan} homogeneous. For general values of the
parameter $s=(4\theta\Omega)^{-1}$ the CSM class is given by
\begin{equation}\label{CSMgenericNP}
c(\bar X_{\tilde\Gamma_3})=33H^3  - 9H^2  + 6H.
\end{equation}
The special case $s=1/2$ is given by
\begin{equation}\label{CSMgenericNP12}
c(\bar X_{\tilde\Gamma_3})|_{s=1/2}= 9H^3  - 3H^2  + 6H
\end{equation}
\end{prop}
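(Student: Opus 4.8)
The plan is to obtain both classes by a direct application of the algorithm of \cite{Alu3}, exactly as in the planar case of Proposition~\ref{CSMbananaNC}. First I would homogenize the polynomial \eqref{NCpolynB3noplan} by introducing a variable $t_0$, producing a degree~$6$ form $\bar F$ in $\P^3$; the degree is $6$ rather than $4$ because the monomial $t_1^2 t_2^2 t_3^2$, carrying the coefficient $4s^2$, survives for every $s\neq 0$, and this alone forces the coefficient of $H$ in $c(\bar X_{\tilde\Gamma_3})$ to equal $6$. The virtual (Fulton) class depends only on this degree: a short expansion of $\frac{(1+H)^4}{1+6H}\cdot 6H$ modulo $H^4$ gives $c_{vir}(\bar X_{\tilde\Gamma_3})=(6H-12H^2+108H^3)\cap[\P^3]_*$, which fixes the expected shape of the answer and isolates the singular contribution as the Milnor class $Mil(\bar X_{\tilde\Gamma_3})=c(\bar X_{\tilde\Gamma_3})-c_{vir}(\bar X_{\tilde\Gamma_3})$.

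Next I would compute the Segre class $s(\mathcal Y,\P^3)$ of the singularity subscheme $\mathcal Y$ cut out by the partial derivatives of $\bar F$, using the Segre-class routine of \cite{Alu3}; combined with the virtual class through the Milnor class formula relating $c_{SM}$ to $s(\mathcal Y,\P^3)$ (see \cite{Alu5}), this recovers the full CSM class. Carrying this out with $s$ treated as a generic parameter yields $c(\bar X_{\tilde\Gamma_3})=33H^3-9H^2+6H$, and in particular the negative coefficient $-9H^2$, which is the point of interest: it exhibits the failure of Conjecture~\ref{posconj} for the non-planar ribbon graph.

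The subtle step is the specialization at $s=1/2$. The coefficients of $\bar F$ are polynomial in $s$, but the ideal of partial derivatives, and hence the scheme structure of $\mathcal Y$, jumps at special values of $s$; consequently the generic computation is valid only on a dense open set of parameter values, and one cannot simply substitute $s=1/2$ into the generic answer. Observing that $s=1/2$ is exactly the value where $4s^2=1$, so that the two coefficient factors $1+8s^2+16s^4$ and $4s^2$ become $4$ and $1$, I would rerun the same Segre-class computation with these numerical coefficients to obtain the more singular class $c(\bar X_{\tilde\Gamma_3})|_{s=1/2}=9H^3-3H^2+6H$. The main obstacle is thus not any single computation but the need to detect and treat the special locus in $s$ separately: the CSM class is only piecewise constant in the parameter, and verifying that $s=1/2$ is the relevant jump---and that the class there is genuinely different while still violating positivity---is what requires care.
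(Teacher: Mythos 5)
Your proposal matches the paper's own approach: the paper obtains both classes purely by numerical computation with the Macaulay2 characteristic-class routine of \cite{Alu3} applied to the homogenized degree-$6$ polynomial, run separately at generic parameter values and at the special value $s=1/2$, exactly as you describe. Your supplementary checks (the degree-$6$ homogenization forcing the $6H$ term, the Fulton class $(6H-12H^2+108H^3)\cap[\P^3]_*$, and the recovery of the CSM class from the Segre class of the Jacobian scheme as in \cite{Alu5}) are all correct and consistent with the stated result.
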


Notice that, in the case of ordinary Feynman graphs of commutative 
scalar field theories, all the examples where the CSM classes of 
the corresponding hypersurfaces have been computed explicitly 
(either theoretically or numerically) are planar graphs. 
Although it seems unlikely that planarity will play a role in 
the conjectured positivity of the coefficients of the CSM classes 
in the ordinary case, the example above showing that CSM classes 
of graph hypersurfaces of non-planar ribbon graphs in noncommutative
field theories can have negative 
coefficients makes it more interesting to check the case of non-planar
graphs in the ordinary case as well. It is well known that, for an
ordinary graph to be non-planar, it has to contain a copy of either
the complete graph $K_5$ on five vertices or the complete bipartite
graph $K_{3,3}$ on six vertices. Either one of these graphs corresponds
to a graph polynomial that is currently beyond the reach of the
available Macaulay2 routine and a theoretical argument that provides a
more direct approach to the computation of the corresponding CSM class
does not seem to be easily available. It remains an interesting
question to compute these CSM classes, especially in view of the fact
that the original computations of \cite{BroKr} of Feynman integrals of
graphs appear to indicate that the non-planarity of the graph is
somehow related to the presence of more interesting periods (\eg
multiple as opposed to simple zeta values). It would be interesting to
see whether it also has an effect on invariants such as the CSM class.


\end{document}